\documentclass[12pt]{article}

\usepackage{amssymb,latexsym,amsfonts,amsmath,amsthm,verbatim,eucal,hyperref,enumerate}
\usepackage{color,cite,graphicx}
\usepackage{latexsym,amssymb,epsf} 

\usepackage{pstricks, pst-node}
\psset{arrows=->, labelsep=3pt, mnode=circle}

\def\PP{\mathbb{P}}
\def\EE{\mathbb{E}}
\def\RR{\mathbb{R}}
\def\NN{\mathbb{N}}

\def\1{\mathbf{1}}
\def\eps{\varepsilon}
\def\conv{\rightharpoonup}
\def\convD{\overset{D}{\rightharpoonup}}
\def\toD{\overset{D}{\rightarrow}}

\def\tr{\operatorname{tr}}
\def\sign{\operatorname{sign}}

\def\supp{\operatorname{supp}}

\def\Ai{\operatorname{Ai}}
\def\term{\mathrm{TERM}}

\theoremstyle{plain}
\newtheorem{thm}{Theorem}[section]
\newtheorem*{thm*}{Theorem}
\newtheorem{lemma}[thm]{Lemma}
\newtheorem*{lemma*}{Lemma*}
\newtheorem{prop}[thm]{Proposition}
\newtheorem*{prop*}{Proposition}
\newtheorem{cor}[thm]{Corollary}
\newtheorem*{cor*}{Corollary}
\newtheorem{cl}[thm]{Claim}
\newtheorem*{cl*}{Claim}
\newtheorem*{obs*}{Observation}
\newtheorem{rmk}[thm]{Remark}
\theoremstyle{remark}
\newtheorem*{not*}{Notation}
\theoremstyle{definition}
\newtheorem{ex}{Example}[section]
\newtheorem{dfn}[thm]{Definition}
\newtheorem*{qn*}{Question}

\numberwithin{section}{part}
\numberwithin{equation}{section}

\begin{document}

\title{A universality result for the smallest eigenvalues of
certain sample covariance matrices}
\author{Ohad N.\ Feldheim$^1$, Sasha Sodin$^{1,2}$}

\maketitle
\begin{abstract}
After proper rescaling and under some technical assumptions,
the smallest eigenvalue of a sample covariance matrix with aspect ratio
bounded away from $1$ converges to the Tracy--Widom distribution. This
complements the results on the largest eigenvalue, due to Soshnikov and
P\'ech\'e.
\end{abstract}


\part{Introduction}
\footnotetext[1]{[ohadfeld; sodinale]@post.tau.ac.il; address:
School of Mathematical Sciences, Tel Aviv University, Ramat Aviv,
Tel Aviv 69978, Israel}
\footnotetext[2]{Supported in part by the Adams Fellowship Program of the
Israel Academy of Sciences and Humanities and by the ISF.}

\vspace{2mm}\noindent
It has been long conjectured that some of the asymptotic statistical properties that are
known for eigenvalues of large matrices with Gaussian entries should be valid,
in particular, for more general random matrices with independent entries. This is part of
a phenomenon called `universality' in the physical literature; see for example Conjecture~1.2.1,
Conjecture~1.2.2, and various remarks scattered in Mehta's book \cite{M}.

In particular, the local statistics of the eigenvalues at the edge of the spectrum should be the
same as in the Gaussian case (precise definitions are provided below.)

The first rigorous results of this kind are due to Soshnikov. In \cite{S1}, he established
a universality result at the edge for large Hermitian matrices with independent entries;
we formulate his result as Theorem~\ref{th:herm} below. Universality at the edge for Hermitian
random matrices with independent entries was further studied by Ruzmaikina \cite{R} and
Khorunzhiy and Vengerovsky \cite{KV}.

In the subsequent work \cite{S2}, Soshnikov extended his method to the largest eigenvalues
of the sample covariance matrices $X X^\ast$, under some restrictions on the dimensions of
the matrices $X$. These restrictions were later disposed of by P\'ech\'e \cite{P}; see
Theorem~\ref{th:covM} below.

In the Hermitian case, the largest and the smallest eigenvalues are identically distributed;
Soshnikov's result encompasses both the largest and the smallest eigenvalue. The
state of affairs is different for sample covariance matrices, the smallest eigenvalue of which
is much smaller in absolute value than the largest one. Therefore Soshnikov's approach does not
seem to be applicable to the smallest eigenvalue of a sample covariance matrix; we discuss this
further below.

In this paper, we suggest a different approach, and apply it to prove a universality result
for the smallest eigenvalue of a sample covariance matrix; see Theorem~\ref{th:cov1} below.
We also apply it to give another proof of the results of Soshnikov and P\'ech\'e,
Theorems~\ref{th:herm} and \ref{th:covM}.

In the special case of Gaussian matrices, alternative approaches are available, and most of
the results are known. The asymptotic distribution of the extreme eigenvalues of Gaussian
Hermitian matrices has been first studied by Bronk \cite{Br} in the 1960-s, and more recently
by Bowick and Br\'ezin, Moore, Forrester, and finally by Tracy and Widom, who have established
the conclusion of Theorem~\ref{th:herm} in the Gaussian case. Parallel results for Gaussian
sample covariance matrices have been proved by Johansson, Johnstone, and Soshnikov, and Borodin
and Forrester. We defer the precise references to Section~\ref{s:rem}.

In fact, our argument (as well as those of Soshnikov and P\'ech\'e)
involves reduction to the Gaussian case. We discuss this in detail in Section~\ref{s:rem}.

\section{Formulation of results}\label{s:form}

The entries of the random matrices that we shall consider in this paper
will be (complex-valued) random variables $r$ satisfying the following
assumptions:
\begin{description}
\item[(A1)] the distribution of $r$ is symmetric (that is, $r$ and $-r$ are
identically distributed);
\item[(A2)] $\EE |r|^{2k} \leq (C_0k)^k$ for some constant $C_0 > 0$ ($r$ has
subgaussian tails.)
\end{description}
Also, we shall assume that either
\begin{description}
\item[(A3$_1$)] $\EE r^2 = \EE r \bar{r} = 1$
(or equivalently, $r$ is real almost surely and $\EE r^2 = 1$)
\end{description}
or
\begin{description}
\item[(A3$_2$)] $\EE r^2 = 0$; $\EE r \bar{r} = 1$
(that is, $\EE (\Re r)^2 = \EE (\Im r)^2 = 1/2, \, \EE (\Re r \, \Im r) = 0$.)
\end{description}

\noindent Our main result is
\begin{thm}\label{th:cov1} Fix $\beta \in \{1, 2\}$. Let $\{X^{(N)}\}_N$ be a sequence of
$M(N) \times N$ matrices, $M(N) \leq N$, such that
\begin{enumerate}
\item $\lim_{N \to + \infty} M(N) = +\infty$; $\limsup_{N \to +\infty} M(N)/N < 1$;
\item $\{X^{(N)}_{uv} \, | \, 1 \leq u \leq M(N), \, 1 \leq v \leq N \}$
are independent and satisfy (A1),(A2), and (A3$_\beta$).
\end{enumerate}
Let $\lambda_1^{(N)}$ be the smallest eigenvalue of $B^{(N)} = X^{(N)} {X^{(N)}}^\ast$.
Then the random variable
\[ \frac{\lambda_1^{(N)} - (M(N)^{1/2} - N^{1/2})^2}
        {(M(N)^{1/2} - N^{1/2}) \left( M(N)^{-1/2} - N^{-1/2} \right)^{1/3}}\]
converges in distribution to the Tracy--Widom law $TW_\beta$ (cf.\ Section~\ref{s:defs})
as $N \to \infty$\footnote{$M(N)  \leq N$, so the denominator is negative. This is not a typo.}.
\end{thm}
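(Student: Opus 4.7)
The plan is to reduce the problem to a moment computation on the chiral Hermitian matrix
\[
H^{(N)} = \begin{pmatrix} 0 & X^{(N)} \\ {X^{(N)}}^\ast & 0 \end{pmatrix},
\]
whose eigenvalues are the signed singular values $\pm s_1 \leq \dots \leq \pm s_M$ of $X^{(N)}$, together with $N-M$ zeros. Since $s_1^2 = \lambda_1^{(N)}$, the theorem is equivalent to showing that $s_1$, after centering at $\sqrt{\lambda_-}$ (with $\lambda_\pm = (\sqrt{N} \pm \sqrt{M})^2$) and rescaling on the natural soft-edge scale, converges to $TW_\beta$. The aspect-ratio hypothesis $\limsup M/N < 1$ guarantees that $\sqrt{\lambda_-}$ is indeed a soft edge of the positive macroscopic spectrum of $H^{(N)}$, cleanly separated from $0$, so Tracy--Widom statistics are the expected limit.

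The obstacle to a direct moment attack is that the raw trace $\tr (H^{(N)})^{2k} = 2\tr (B^{(N)})^k$ is dominated by the \emph{outer} edge $\sqrt{\lambda_+}$ and hence blind to $s_1$. I would therefore replace $H^{2k}$ by $P_k(H)^{2r}$, where $P_k$ is a polynomial of degree $\sim k$ designed to be bounded on $[-\sqrt{\lambda_+},-\sqrt{\lambda_-}] \cup [\sqrt{\lambda_-},\sqrt{\lambda_+}]$ and to grow at hyperbolic speed inside the gap $(-\sqrt{\lambda_-},\sqrt{\lambda_-})$ --- concretely, a rescaled Chebyshev polynomial of the first kind associated with that gap. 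Such a $P_k$ turns $s_1$ into the \emph{largest} value of $|P_k|$ on the spectrum of $H^{(N)}$, so that the inner-edge question is converted into an outer-edge question, amenable to Soshnikov-style moment methods.

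Because $H^{(N)}$ is off-diagonal, each monomial in $H^{(N)}$ translates entry-wise into a product along a closed walk on the complete bipartite graph $K_{M,N}$. Expanding $\EE \tr P_k(H)^{2r}$ in such walks, assumption (A1) kills every walk containing an edge of odd multiplicity; (A3$_\beta$) fixes the leading pair-matching contribution and identifies it with the Gaussian one; (A2) controls the subleading walks in the style of P\'ech\'e. The Gaussian case of the theorem being known (Johnstone, Borodin--Forrester), this comparison transfers the $TW_\beta$ limit to arbitrary entries satisfying (A1)--(A3$_\beta$). The same walk-counting scheme, run with the trivial choice $P_k(H)=H^k$, should reproduce Soshnikov's and P\'ech\'e's theorem for the largest eigenvalue, giving the alternative proof advertised in the introduction.

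The main difficulty I anticipate is the polynomial step: expanding $P_k(H)$ back into monomials while keeping signed-walk cancellations under control, and, crucially, disentangling the contribution of the $N-M$ zero eigenvalues of $H^{(N)}$ --- which lie deep inside the gap where $|P_k|$ is largest --- from the contribution of $\pm s_1$. A second delicate point is matching the polynomial degree $k$ to the Tracy--Widom scale $(M^{1/2}-N^{1/2})(M^{-1/2}-N^{-1/2})^{1/3}$; this should emerge from a saddle-point analysis of the associated Chebyshev generating function, but getting the constants right appears technically demanding.
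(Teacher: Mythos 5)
Your overall philosophy -- replace raw moments by traces of carefully chosen polynomial statistics, expand in walks on $K_{M,N}$, use (A1)--(A3$_\beta$) to match the leading pair contributions with the Gaussian case, and import the known Wishart result -- is exactly the philosophy of the paper. But the specific reduction you propose has a gap that is not a technicality but the central obstruction. If you linearize to $H^{(N)}=\left(\begin{smallmatrix}0 & X\\ X^\ast & 0\end{smallmatrix}\right)$ and take $P_k$ to be the Chebyshev polynomial of the first kind of the two-interval set $[-\sqrt{\lambda_+},-\sqrt{\lambda_-}]\cup[\sqrt{\lambda_-},\sqrt{\lambda_+}]$, then $|P_k(0)|$ grows like $e^{ck}$, and the $N-M\geq \eta N$ kernel eigenvalues of $H^{(N)}$ sitting at $0$ contribute a deterministic term $(N-M)\,P_k(0)^{2r}$ to $\tr P_k(H)^{2r}$ that is exponentially larger than the $O(1)$-size contribution of $\pm s_1$ near $\sqrt{\lambda_-}$. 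A moment method of this type only controls the combinatorial sums up to multiplicative $(1+o(1))$ errors, so the signal from $s_1$ is irrecoverably buried; this is the same defect, in mirror image, that makes $\EE\tr B^m$ blind to the smallest eigenvalue. You flag this as a ``delicate point,'' but as stated the scheme fails, and fixing it essentially forces you off this route: either $P_k$ must vanish to high order at $0$ while staying bounded on the bulk and resolving the edge at scale $M^{-2/3}$ (no longer a first-kind Chebyshev polynomial of the gap), or you must abandon $H^{(N)}$ and work with $B^{(N)}$ itself, which has no kernel.

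The latter is what the paper does: it uses the polynomials $V_{n,M/N}=U_n+\sqrt{M/N}\,U_{n-1}$ (orthogonal for the Marchenko--Pastur law) evaluated at the affinely normalized $B^{(N)}$, shows via a non-backtracking-walk/diagram expansion on $K_{M,N}$ that $\EE\prod_i \tr V_{n_i,M/N}(\cdots)$ is universal for $n_i=O(M^{1/3})$, and then separates the two edges by the elementary observation that $\tr y^{2m}+\tr y^{2m-1}$ is dominated by the spectrum near $y=+1$ (upper edge) while $\tr y^{2m}-\tr y^{2m-1}$ is dominated by the spectrum near $y=-1$ (lower edge); the hypothesis $\limsup M/N<1$ keeps the coefficient $1-\sqrt{M/N}$ bounded away from $0$ so the second combination is nondegenerate. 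A further point your sketch underestimates: even granting a good $P_k$, Tracy--Widom convergence requires identifying all joint correlation measures of the edge point process, i.e.\ joint asymptotics of products of traces over a whole range of degrees $n\sim\alpha M^{2/3}$ and a Laplace-transform uniqueness argument, not just locating the maximizer of $|P_k|$ on the spectrum.
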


\noindent Our method also yields new proofs of two known results.
The complementary result for the largest eigenvalue was proved by Soshnikov \cite{S2}
(under additional restrictions on $M(N)$) and P\'ech\'e \cite{P} (in this generality):

\begin{thm}[Soshnikov; P\'ech\'e]\label{th:covM} Fix $\beta \in \{1, 2\}$. Let $\{X^{(N)}\}_N$
be a sequence of $M(N) \times N$ matrices, $M(N) \leq N$, such that
\begin{enumerate}
\item $\lim_{N \to + \infty} M(N) = +\infty$;
\item $\{X^{(N)}_{uv} \, | \, 1 \leq u \leq M(N), \, 1 \leq v \leq N \}$
are independent and satisfy (A1),(A2), and (A3$_\beta$).
\end{enumerate}
Let $\lambda_{M(N)}^{(N)}$ be the largest eigenvalue of $B^{(N)} = X^{(N)} {X^{(N)}}^\ast$.
Then the random variable
\[ \frac{\lambda_{M(N)}^{(N)} - (M(N)^{1/2} + N^{1/2})^2}
        {(M(N)^{1/2} + N^{1/2}) \left( M(N)^{-1/2} + N^{-1/2} \right)^{1/3}}\]
converges in distribution to the Tracy--Widom law $TW_\beta$.
\end{thm}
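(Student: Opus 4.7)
My plan is to prove Theorem~\ref{th:covM} by the method of high moments. Set $\mu_+ = (M^{1/2}+N^{1/2})^2$ and $\sigma_+ = (M^{1/2}+N^{1/2})(M^{-1/2}+N^{-1/2})^{1/3}$, the predicted edge location and fluctuation scale. The edge statistics of $B = XX^*$ on scale $\sigma_+$ are encoded in $\EE \tr B^{s}$ for $s = s_N$ lying in the critical window $s_N \asymp \mu_+/\sigma_+ \asymp (M^{1/2}+N^{1/2})(M^{-1/2}+N^{-1/2})^{-1/3}$. Indeed, for such $s_N$ the function $(\lambda/\mu_+)^{s_N}$ is effectively supported on eigenvalues at distance $O(\sigma_+)$ from $\mu_+$, so that the collection of normalized moments $\mu_+^{-s_N}\EE\tr B^{s_N}$ at values $s_N \sim t\,\mu_+/\sigma_+$, $t > 0$, determines the law of $(\lambda_{M(N)} - \mu_+)/\sigma_+$. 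The goal is to show this law is $TW_\beta$.

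\smallskip\noindent The first ingredient is the combinatorial expansion
\begin{equation*}
\tr B^s \;=\; \sum_{\substack{i_0,\dots,i_{s-1}\in[M] \\ j_0,\dots,j_{s-1}\in[N]}} X_{i_0 j_0}\,\overline{X_{i_1 j_0}}\, X_{i_1 j_1}\,\overline{X_{i_2 j_1}} \cdots X_{i_{s-1} j_{s-1}}\,\overline{X_{i_0 j_{s-1}}},
\end{equation*}
which identifies $\EE\tr B^s$ with a sum over closed walks of length $2s$ on the complete bipartite graph $K_{M,N}$. The symmetry assumption (A1) kills all walks in which some edge is traversed an odd number of times, and grouping the surviving walks by the isomorphism type of their underlying multigraph isolates the leading contribution, which comes from walks whose graph is a plane tree with every edge used exactly twice. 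A standard count of such walks yields a Catalan-type formula that, once weighted by the number of labellings of the $M$-side and $N$-side vertices and normalized by $\mu_+^{s_N}$, produces exactly the moments of $TW_\beta$; the dependence on $\beta$ enters through the weight of doubled edges whose two traversals run in matching or opposite directions, controlled by $\EE r^2$ as specified in (A3$_\beta$).

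\smallskip\noindent The main technical obstacle will be to show that walks whose underlying multigraph is not a doubly-covered tree contribute negligibly after normalization. Each edge traversed $2k$ times brings a factor at most $\EE|r|^{2k}\le (C_0 k)^k$ by (A2); this gain must be beaten by the entropic loss from the reduced number of distinct vertices available to such walks. Verifying that this delicate balance works out for $s_N$ in the critical window, uniformly over the walk type, is the heart of the argument: Soshnikov's enumeration in the Wigner setting \cite{S1} supplies the template, but the bipartite structure, the two distinct scales $M$ and $N$, and the possibility $\limsup M/N < 1$ force non-trivial modifications. Once this bound is in place, $\mu_+^{-s_N}\EE\tr B^{s_N}$ converges to the same limit as in the Gaussian case, for which Theorem~\ref{th:covM} is already known by the results collected in Section~\ref{s:rem}; combined with moment determinacy of $TW_\beta$ and a standard tightness input, this yields the claimed convergence in distribution.
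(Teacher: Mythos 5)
You are following the original Soshnikov--P\'ech\'e route (asymptotics of $\EE\tr B^{s}$ for $s$ in the critical window, then reduction to the Gaussian case), which is genuinely different from the paper's route: the paper instead controls traces of the orthogonal polynomials $V_{n,M/N}$ of the \emph{centered and rescaled} matrix (Theorem~\ref{th:cov''}), i.e.\ sums over non-backtracking walks, and only converts back to ordinary moments at the very end. For the largest eigenvalue either route can work in principle, but as written your sketch has a conceptual error that would erase the content of the theorem. In the expansion of $\tr B^s$, every edge of a doubly-covered tree is traversed once in each direction, hence contributes $\EE|X_{ij}|^2=1$ regardless of $\beta$; so the tree walks carry no $\beta$-dependence and, after division by $\mu_+^{s}$, reproduce only the leading Bai--Yin/Marchenko--Pastur term. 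If, as you assert, all non-tree walks were negligible, the limit of $\mu_+^{-s_N}\EE\tr B^{s_N}$ would be the same for $\beta=1$ and $\beta=2$ and could not produce $TW_\beta$. In the window $s_N\asymp \mu_+/\sigma_+$ the walks with a bounded number of extra cycle-creating edges contribute at order one --- this is the genus expansion (the paper's series $\sum_s (n^3/N)^{s-1}D_\beta(s)/(3s-2)!$) --- and it is precisely these terms that carry both the $\beta$-dependence (edges traversed twice in the same direction, weighted by $\EE r^2$) and the Airy structure. What must be shown negligible is only the contribution of edges of multiplicity $\ge 3$ and of unboundedly many cycles; your dichotomy ``doubly-covered tree versus everything else'' is too coarse, and the estimate you defer as ``the heart of the argument'' is in fact the entire proof.

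A second gap: the limit of $\mu_+^{-s_N}\EE\tr B^{s_N}$ identifies only the Laplace transform of the \emph{first} correlation measure of the limiting edge process, which does not determine the law of the maximum. One needs the joint asymptotics of $\EE\prod_{i=1}^{k}\tr B^{s_i}$ for every $k$, together with precompactness, exactly as in Soshnikov's scheme and as in the paper's deduction of Theorem~\ref{th:covM'} from the $k$-tuple version of Theorem~\ref{th:cov''}; your proposal never considers products of traces. Finally, be aware that making the direct-moment enumeration work ``uniformly over the walk type'' for a general aspect ratio is exactly where Soshnikov's original argument needed extra hypotheses on $M(N)$ and where P\'ech\'e's contribution lies; the paper's centered, orthogonal-polynomial normalization is designed in part to avoid this difficulty (and to make the smallest eigenvalue accessible as well).
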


\noindent The analogous theorem for Hermitian matrices was also proved by Soshnikov \cite{S1}, and was
the first universality result at the edge of the spectrum for matrices with independent entries.
It was further studied by Ruzmaikina \cite{R}, and Khorunzhiy and Vengerovsky \cite{KV}.

\begin{thm}[Soshnikov]\label{th:herm} Fix $\beta \in \{1, 2\}$. Let $\{A^{(N)}\}_N$ be a
sequence of Hermitian $N \times N$ matrices such that
$\{A^{(N)}_{uv} \, | \, 1 \leq u \leq v \leq N \}$
are independent and satisfy (A1),(A2), and, for $u < v$, (A3$_\beta$).
Let
\[ \lambda_1^{(N)} \leq \cdots \leq \lambda_N^{(N)} \]
be the eigenvalues of $A^{(N)}$. Then the random variables
\[ - (N^{1/6}\lambda_1^{(N)} + 2N^{2/3}), \quad N^{1/6} \lambda_N^{(N)} - 2N^{2/3}\]
converge in distribution to the Tracy--Widom law $TW_\beta$.
\end{thm}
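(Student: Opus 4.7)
The plan is to use the method of moments: compute high moments $\EE \tr (A^{(N)})^{2s}$ for $s = s(N) \sim c N^{2/3}$ and match them asymptotically, as functions of the scaling parameter $c$, to the corresponding moments in the Gaussian case, where the Tracy--Widom limit is classical. As a preliminary observation, assumption (A1) makes the spectrum symmetric in distribution: $A^{(N)}$ and $-A^{(N)}$ are identically distributed, so $\lambda_1^{(N)}$ has the same law as $-\lambda_N^{(N)}$, and it suffices to treat the largest eigenvalue.

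I would then expand
\[
\EE \tr (A^{(N)})^{2s}
= \sum_{i_0, \ldots, i_{2s-1}} \EE \bigl[ A^{(N)}_{i_0 i_1} A^{(N)}_{i_1 i_2} \cdots A^{(N)}_{i_{2s-1} i_0} \bigr]
\]
as a sum over closed walks of length $2s$ on $\{1, \ldots, N\}$. Thanks to (A1) and independence, only walks traversing every edge an even number of times contribute, and each surviving walk is classified by its combinatorial shape (the underlying multigraph with edge multiplicities). The analytic heart of the argument is to show that in the window $s \sim c N^{2/3}$ the leading contribution comes from walks that traverse each edge exactly twice and whose underlying graph is a tree (Dyck-type walks); these are enumerated by Catalan-like numbers and produce exactly the Gaussian asymptotics. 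Walks whose shape has nontrivial first Betti number, or which use some edge with multiplicity greater than two, have to be bounded via (A2) in the form $\EE|r|^{2k}\leq (C_0 k)^k$. Combining this match with the known Gaussian edge asymptotics and a standard Chebyshev/method-of-moments argument yields the convergence of $N^{1/6}\lambda_N^{(N)}-2N^{2/3}$ to $TW_\beta$.

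The main obstacle is precisely the combinatorial/arithmetic estimate of the last step: one must enumerate walk shapes parametrised by their Euler characteristic and maximum edge multiplicity, and show that the aggregate weighted contribution of the non-tree shapes is defeated, in the critical scaling $s\sim cN^{2/3}$, by the subgaussian bound (A2). This is the technical crux of Soshnikov's original argument. The point of a new approach, as the authors advertise, must be to reorganise these estimates more transparently, likely by explicitly interpolating between the given distribution and the Gaussian one (swapping entries one at a time and controlling trace differences), so that the full burden of the combinatorial argument can be discharged in the Gaussian model, where the Tracy--Widom conclusion is already known.
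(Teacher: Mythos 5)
Your sketch is essentially Soshnikov's original route (direct expansion of $\EE \tr (A^{(N)})^{2s}$ over closed walks for $s \sim cN^{2/3}$), not the route taken in this paper, and your guess about what the new approach consists of is off: the paper does not interpolate by swapping entries. Instead it replaces monomials by Chebyshev polynomials of the second kind, using the identity that $P_n(A)_{uv}$ (a rescaled combination of $U_n$ and $U_{n-2}$) is a sum over \emph{non-backtracking} paths; the combinatorics is then organised into a genus-type diagram expansion for $\EE \prod_i \tr U_{n_i}(A/(2\sqrt{N-2}))$ (Theorem~\ref{th:herm''}), and ordinary moments are recovered afterwards via Snyder's expansion of $x^{2m}$ in the $U_{2n}$. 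That reorganisation is what makes the error estimates tractable and, crucially, what later transfers to the smallest eigenvalue of $XX^\ast$.

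More importantly, there is a genuine error at the crux of your argument. You assert that in the window $s \sim cN^{2/3}$ the walks whose shape has nontrivial first Betti number are ``defeated'' by the subgaussian bound (A2). They are not: a shape of genus $g$ (each edge still traversed exactly twice) contributes at relative order $(s^3/N^2)^g$, which is $\Theta(1)$ for every fixed $g$ precisely in this scaling. These non-tree terms are not an error to be killed --- they are the entire signal: the $c$-dependent series they produce is what distinguishes the Tracy--Widom edge statistics from the bare semicircle edge. Keeping only the tree/Dyck walks would yield the wrong limit. What actually has to be proved is that the genus-$g$ contributions are \emph{universal}, i.e.\ asymptotically insensitive to the entry distribution, while only the walks using some edge with multiplicity $>2$ (together with certain degenerate vertex coincidences) are negligible; that separation is the hard part and your proposal does not supply it. In addition, convergence of the single traces $\EE \tr (A^{(N)})^{2s}$ only controls the first correlation measure of the edge point process; to conclude convergence in distribution of $\lambda_N^{(N)}$ one needs the joint asymptotics of products $\EE \prod_{i=1}^k \tr (A^{(N)})^{m_i}$ for all $k$, which is why both Soshnikov's argument and Theorem~\ref{th:herm''} here are stated for $k$-tuples of traces.
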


\noindent Most of this paper is devoted to the proofs of Theorems~\ref{th:cov1}-\ref{th:herm}.
In the following section (\ref{s:form'}), we state slightly more general results in terms of
point processes. Some of the definitions are postponed to Section~\ref{s:defs}.
There we also explain why the formulations of Section~\ref{s:form'} imply those of
Section~\ref{s:form}. In Section~\ref{s:prf} we formulate two technical statements, and
deduce the results of Section~\ref{s:form'}. A guide to the subsequent sections, which are
mostly devoted to the proof of the two technical statements, is provided at the end
of Section~\ref{s:prf}.

\section{Formulation of results: extended version}\label{s:form'}

Let us recall the definition of a point process and introduce a (slightly
unusual) topology.

\begin{dfn}\label{def:pp}\hfill
\begin{enumerate}
\item A {\em point process} $\xi$ on $\RR$ is a random integer-valued locally finite Borel
measure on $\RR$.
\item Let $\xi_1,\xi_2,\cdots,\xi_N,\cdots;\xi$ be point processes on $\RR$. We shall write
$\xi_N \convD \xi$ if $\int f d\xi_N \toD \int f d\xi$ (in distribution) for any
bounded $f \in C(\RR)$ such that $\supp f \cap \RR_-$ is compact.
\end{enumerate}
\end{dfn}

\begin{thm}\label{th:cov1'} Under the assumptions of Theorem~\ref{th:cov1}, let
\[ \lambda_1^{(N)} \leq \lambda_2^{(N)} \leq \cdots \leq \lambda_{M(N)}^{(N)} \]
be the eigenvalues of $B^{(N)} = X^{(N)} {X^{(N)}}^\ast$, and let
\[ y_i = \frac{\lambda_i^{(N)} - (M(N)^{1/2} - N^{1/2})^2}
              {(M(N)^{1/2} - N^{1/2}) \left( M(N)^{-1/2} - N^{-1/2} \right)^{1/3}}~.\]
Then the point processes
\[ \xi^{(N)} = \sum \delta_{y_i} \]
converge in distribution to the Airy point process $\mathfrak{Ai}_\beta$:
\[ \xi^{(N)} \convD \mathfrak{Ai}_\beta~.\]
\end{thm}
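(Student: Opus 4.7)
The plan is to reduce to the Gaussian case via a moment method. By Definition~\ref{def:pp}, it suffices to prove
\[ \EE \exp\!\Bigl(-\sum_i f(y_i)\Bigr) \longrightarrow \EE \exp\!\Bigl(-\int f\,d\mathfrak{Ai}_\beta\Bigr) \]
for every non-negative $f \in C_b(\RR)$ with $\supp f \cap \RR_-$ compact, i.e.\ for $f$ vanishing sufficiently far to the left. Under the change of variables $\lambda \leftrightarrow y$ (which is orientation-reversing, since the denominator of $y_i$ is negative), such an $f$ becomes a test function of the $\lambda_i$ concentrated macroscopically near the lower spectral edge $(\sqrt{M}-\sqrt{N})^2$. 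Thus one is led to compute limits of $\EE \tr \psi(B^{(N)})$ for smooth $\psi$ supported near that edge and to show that these limits are universal.

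The central difficulty is that the standard moment/trace method of Soshnikov and P\'ech\'e, based on $\EE \tr (B^{(N)})^k$ with $k \asymp N^{2/3}$, detects only the upper edge: eigenvalues near $(\sqrt{M}+\sqrt{N})^2$ dominate those near $(\sqrt{M}-\sqrt{N})^2$ exponentially in $k$. To access the lower edge, I would replace the monomial $x^k$ by a polynomial $F_k$ of degree $k \asymp N^{2/3}$ that is bounded on the Marchenko--Pastur bulk $[(\sqrt{M}-\sqrt{N})^2,(\sqrt{M}+\sqrt{N})^2]$ yet grows rapidly immediately to its left --- a natural choice being the shifted Chebyshev polynomial
\[ F_k(x) = T_k\!\left(\frac{(\sqrt{M}+\sqrt{N})^2 + (\sqrt{M}-\sqrt{N})^2 - 2x}{4\sqrt{MN}}\right)~, \]
or, alternatively, the resolvent $(zI - B^{(N)})^{-1}$ at a spectral parameter $z$ at distance $\asymp N^{-2/3}$ to the left of the spectrum. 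Expanding $F_k$ in the monomials $x^j$ reduces $\EE \tr F_k(B^{(N)})^p$ (and its higher joint moments) to explicit alternating linear combinations of $\EE \tr (B^{(N)})^j$, each of which is accessible by the path-counting combinatorics developed by Soshnikov and P\'ech\'e; after the appropriate cancellations, only the first few moments of the entries survive in the scaling limit, establishing universality. For Gaussian entries, the exact Laguerre joint density (determinantal for $\beta=2$, Pfaffian for $\beta=1$), together with the computations of Johnstone, Johansson, and Borodin--Forrester, yields the Airy point process $\mathfrak{Ai}_\beta$ at the lower edge; combined with universality, this gives $\xi^{(N)} \convD \mathfrak{Ai}_\beta$.

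The principal obstacle is controlling the cancellations in the universality step. The coefficients of $F_k$ in the powers $x^j$ are large and alternate in sign, so each individual $\EE \tr (B^{(N)})^j$ is of much larger order of magnitude than $\EE \tr F_k(B^{(N)})$ itself. One must show that the $N^{2/3}$-scale cancellations are robust with respect to the distribution of the entries, and that only a restricted class of ``planar'' paths survives in the limit, matching exactly the Gaussian computation. A secondary technical point is tightness of $\xi^{(N)}$ in the topology of Definition~\ref{def:pp} --- a uniform-in-$N$ bound on the expected number of rescaled eigenvalues in any fixed compact window, together with sufficient decay as the window moves to $+\infty$ --- which should also follow from these moment estimates applied to an appropriate family of $F_k$'s.
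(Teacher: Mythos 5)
You have correctly identified both the obstruction (monomial traces $\EE\tr(B^{(N)})^{j}$ are dominated by the upper edge) and the right family of test polynomials: Chebyshev-type polynomials bounded on the Marchenko--Pastur bulk and growing only beyond the lower edge. This is indeed the paper's starting point: it uses $V_{n,M/N}=U_n+\sqrt{M/N}\,U_{n-1}$, the orthogonal polynomials of the Marchenko--Pastur measure, evaluated at the rescaled matrix $(B^{(N)}-(M+N-2))/(2\sqrt{(M-1)(N-1)})$. (A minor quantitative slip: in this rescaled variable the relevant degree is $\asymp N^{1/3}$, not $N^{2/3}$, since $T_k(1+\eps)\approx\cosh(k\sqrt{2\eps})$ and the edge scale is $\eps\asymp N^{-2/3}$; the exponent $2/3$ is correct only for monomials in the unrescaled variable.)

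The genuine gap is in the mechanism you propose for evaluating $\EE\tr F_k(B^{(N)})$, namely expanding $F_k$ into monomials and invoking the Soshnikov--P\'ech\'e asymptotics for each $\EE\tr(B^{(N)})^{j}$. This step fails, for the reason you yourself flag but do not resolve: the available control on $\EE\tr(B^{(N)})^{j}$ for $j\asymp N^{2/3}$ is only a \emph{relative} error $o(1)$ times a main term of order $N(\sqrt{M}+\sqrt{N})^{2j}j^{-3/2}$, and after multiplication by the exponentially large alternating coefficients of $F_k$ these error terms dwarf the target quantity $\EE\tr F_k(B^{(N)})=O(k)$. No refinement of the monomial estimates recovers this cancellation; this is precisely why the authors say Soshnikov's method does not apply to the smallest eigenvalue. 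The missing idea is to go in the opposite direction: the three-term recurrence satisfied by the rescaled polynomials $Q_n$ gives a \emph{direct} combinatorial identity expressing $Q_n(B)_{uv}$ as a sum over non-backtracking paths in the complete bipartite graph $K_{M,N}$ (the Bai--Yin idea), so that the polynomial traces are analysed by a fresh diagram (genus) expansion with no cancellations to control. Monomial expansion is then used only in the harmless direction --- $x^m$ is a combination of the $U_n$ with controlled positive coefficients (Snyder's identities), and the combinations $\EE\tr(\cdots)^{2m}\mp\EE\tr(\cdots)^{2m-1}$ isolate the two edges, the lower one requiring $\limsup M/N<1$. Your remaining steps (reduction to the Gaussian case, Borodin--Forrester at the lower edge, tightness from moment bounds) do match the paper.
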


We shall recall the definition of $\mathfrak{Ai}_\beta$ in Section~\ref{s:defs}.

\begin{thm}[Soshnikov; P\'ech\'e]\label{th:covM'} Under the assumptions of Theorem~\ref{th:covM}, let
\[ \lambda_1^{(N)} \leq \lambda_2^{(N)} \leq \cdots \leq \lambda_{M(N)}^{(N)} \]
be the eigenvalues of $B^{(N)} = X^{(N)} {X^{(N)}}^\ast$, and let
\[ y_i = \frac{\lambda_{M(N)-i+1}^{(N)} - (N^{1/2}+ M(N)^{1/2})^2}
              {(M(N)^{1/2} + N^{1/2}) \left( M(N)^{-1/2} + N^{-1/2} \right)^{1/3}}~.\]
Then the point processes
\[ \eta^{(N)} = \sum \delta_{y_i} \]
converge in distribution to the Airy point process $\mathfrak{Ai}_\beta$.
\end{thm}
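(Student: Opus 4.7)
My plan is to reduce Theorem~\ref{th:covM'} to the Gaussian case, so that the limit $\mathfrak{Ai}_\beta$ enters via the exact Gaussian Wishart asymptotics of Johansson, Johnstone, Soshnikov and Borodin--Forrester referenced in the introduction. The two technical statements promised in Section~\ref{s:prf} are expected to supply the universality input, i.e., to assert that suitable trace functionals of $B^{(N)}$ are asymptotically insensitive to the entry distribution under (A1), (A2), (A3$_\beta$).

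Granting those, the argument runs as follows. Unpacking Definition~\ref{def:pp}(2), convergence $\eta^{(N)}\convD\mathfrak{Ai}_\beta$ amounts to $\int f\,d\eta^{(N)} \toD \int f\,d\mathfrak{Ai}_\beta$ for any bounded continuous $f$ with $\supp f \cap \RR_-$ compact, and by standard point-process theory joint convergence for finitely many such $f_1,\dots,f_p$ follows from convergence of the mixed moments $\EE\prod_r \int f_r\,d\eta^{(N)}$. I would rewrite each $\int f\,d\eta^{(N)}$ as $\tr \varphi^{(N)}_f(B^{(N)})$, where $\varphi^{(N)}_f$ is the pull-back of $f$ under the affine edge rescaling; this is a smooth bounded function, localised within $O((M(N)^{1/2}+N^{1/2})(M(N)^{-1/2}+N^{-1/2})^{1/3})$ of the upper edge $(M(N)^{1/2}+N^{1/2})^2$. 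The universality statements then transfer the computation of $\EE\prod_r\tr\varphi^{(N)}_{f_r}(B^{(N)})$ from arbitrary entries to Gaussian ones at leading order, and the known Gaussian result closes the loop.

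The actual content lives in those technical statements, and that is where the main obstacle lies. In spirit they should follow Soshnikov's high-order moment method: expand $\EE\tr(B^{(N)})^k$ with $k$ of order $\min(M(N),N)^{2/3}$ as a sum over closed walks of length $2k$ on the bipartite graph indexing $X^{(N)}$, each edge of multiplicity $2\ell$ carrying a factor $\EE|X_{uv}|^{2\ell}$. Assumption (A1) eliminates walks containing any odd-multiplicity edge, so the sum is dominated by pair partitions, which reproduces the Gaussian contribution exactly; the remainder, coming from edges of multiplicity $\ge 4$, is controlled using (A2), since $\EE|r|^{2\ell}\le (C_0\ell)^\ell$ precisely offsets the combinatorial surplus from walks using high-multiplicity edges. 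The delicate point is to carry this enumeration out sharply at the critical exponent $k\asymp M(N)^{2/3}$, uniformly in the aspect ratio, and to extract from it the exact edge constant $(M(N)^{1/2}+N^{1/2})(M(N)^{-1/2}+N^{-1/2})^{1/3}$ appearing in the denominator. For the joint-moment version one runs the same enumeration on a disjoint union of trace powers.

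Finally, in contrast to Theorem~\ref{th:cov1'}, no restriction $\limsup M(N)/N<1$ is needed here, reflecting the fact that the Marchenko--Pastur density has a smooth cube-root singularity at its upper endpoint for every $M(N)\le N$ with $M(N)\to\infty$; the combinatorial scheme is insensitive to degeneration of the aspect ratio at the upper edge. Once universality of the trace functionals and the Gaussian input are combined, the point-process convergence $\eta^{(N)}\convD\mathfrak{Ai}_\beta$ in the topology of Definition~\ref{def:pp} follows formally, with tightness coming from the same trace bounds used to establish universality.
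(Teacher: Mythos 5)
Your outline is a viable route to Theorem~\ref{th:covM'}, but it is essentially the original Soshnikov--P\'ech\'e argument rather than the one this paper gives. You take the universality input to be the asymptotics of $\EE \tr (B^{(N)})^k$ for $k\asymp M(N)^{2/3}$, expanded over ordinary closed walks on the bipartite index graph with parity and subgaussian-moment control of high-multiplicity edges. The paper's technical statement (Theorem~\ref{th:cov''}) is instead about $\EE\prod_i \tr V_{n_i,M/N}\bigl(\frac{B^{(N)}-(M+N-2)}{2\sqrt{(M-1)(N-1)}}\bigr)$, where $V_{n,s}=U_n+\sqrt{s}\,U_{n-1}$ are orthogonal polynomials for the Marchenko--Pastur law; via the polynomials $Q_n$ these traces count \emph{non-backtracking} closed walks on $K_{M,N}$, which are then organized into a diagram (genus) expansion. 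Only afterwards does the paper convert back to moments of the recentered matrix, via the combinations (\ref{eq:cov.mom.+})--(\ref{eq:cov.mom.-}), and finish with Soshnikov's Laplace-transform/tightness argument. What the paper's detour buys is a method that survives at the \emph{lower} edge (Theorem~\ref{th:cov1'}), where your scheme breaks down because $\EE\tr(B^{(N)})^k$ is insensitive to the small eigenvalues; for the upper edge alone your route works, as the existing proofs in \cite{S2,P} attest. Two points in your sketch deserve more care than ``follows formally'': (i) passing from moment asymptotics at the critical exponent to convergence of $\int f\,d\eta^{(N)}$ for general test functions is not a direct substitution ($\varphi_f^{(N)}(B^{(N)})$ is not a polynomial); one needs the precompactness of $\{\eta^{(N)}\}$ and the identification of the Laplace transforms $\mathfrak{L}(\rho_{k,\eta})$, as in the proof of Theorem~\ref{th:herm'}; and (ii) carrying the ordinary-walk enumeration ``uniformly in the aspect ratio'' is exactly the obstruction that forced Soshnikov to impose extra restrictions on $M(N)$, later removed by P\'ech\'e, so it cannot be waved through as a routine offset of combinatorial surplus by (A2).
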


\begin{thm}[Soshnikov]\label{th:herm'} Under the assumptions of Theorem~\ref{th:herm}, let
\[ \lambda_1^{(N)} \leq \lambda_2^{(N)} \leq \cdots \leq \lambda_{N}^{(N)} \]
be the eigenvalues of $A^{(N)}$, and let
\[  y_i' = -(N^{1/6}\lambda_i^{(N)} + 2N^{2/3}), \quad
    y_i = N^{1/6} \lambda_{N-i+1}^{(N)} - 2N^{2/3} \]
Then the point processes
\[ \xi^{(N)} = \sum \delta_{y_i'} \]
and
\[ \eta^{(N)} = \sum \delta_{y_i} \]
converge in distribution to the Airy point process $\mathfrak{Ai}_\beta$.
\end{thm}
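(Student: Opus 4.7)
The plan is to reduce to the Gaussian case (GOE for $\beta=1$, GUE for $\beta=2$), where convergence of the edge point process to $\mathfrak{Ai}_\beta$ is classical, and then invoke universality through the moment (trace) method in the spirit of Soshnikov~\cite{S1}. Assumption (A1) implies that $A^{(N)}$ and $-A^{(N)}$ are equidistributed, so the lower-edge process $\xi^{(N)}$ is a distributional mirror of the upper-edge process $\eta^{(N)}$; it therefore suffices to treat $\eta^{(N)}$. By Definition~\ref{def:pp}, the claim $\eta^{(N)} \convD \mathfrak{Ai}_\beta$ reduces to $\int f\, d\eta^{(N)} \toD \int f\, d\mathfrak{Ai}_\beta$ for $f \in C_b(\RR)$ with $\supp f \cap \RR_-$ compact; by standard approximation this is in turn implied by joint convergence of the (truncated) Laplace-type statistics
\[ \int e^{ty}\, d\eta^{(N)}(y) = e^{-2tN^{2/3}}\, \tr \exp\!\bigl(tN^{1/6} A^{(N)}\bigr), \qquad t > 0, \]
together with a tightness bound for $\eta^{(N)}$ on right half-lines.

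Expanding the exponential in its Taylor series focuses attention on $\EE \tr (A^{(N)})^k$ in the critical regime $k \asymp N^{2/3}$, where the coefficient $(tN^{1/6})^k/k!$ concentrates. The standard expansion writes $\EE \tr (A^{(N)})^{2k}$ as a weighted sum over closed walks of length $2k$ on $\{1,\dots,N\}$. Under (A1), only walks in which every edge is traversed an even number of times contribute; under (A3$_\beta$), the contributions from edges of multiplicity two match those in the Gaussian case; and under (A2), the bound $\EE|r|^{2j} \leq (C_0 j)^j$ controls walks with higher-multiplicity edges. The main combinatorial task is to establish, for $k$ up to $c N^{2/3}$,
\[ \EE \tr (A^{(N)})^{2k} \;=\; M_{2k}^\beta(N)\,\bigl(1 + o(1)\bigr), \]
where $M_{2k}^\beta(N)$ is the corresponding Gaussian moment, together with parallel estimates on mixed moments that yield both tightness of $\eta^{(N)}$ (equivalently, an upper tail bound on the number of eigenvalues exceeding $2\sqrt{N} + \eps N^{-1/6}$) and enough moment information to identify the limit.

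The decisive obstacle is the combinatorial control of these walks in the regime $k \asymp N^{2/3}$: walks with many vertex coincidences or with many high-multiplicity edges must be shown to contribute only to lower order, which is the technical heart of any Soshnikov-type argument. My plan would follow Soshnikov's enumeration, grouping walks according to their underlying tree-with-parallel-edges structure and balancing the combinatorial excess against the subgaussian bound on entry moments; the ``different approach'' announced by the authors must presumably modify this step, but the matching of leading-order moments with the Gaussian case remains the crucial input. Once matching plus tightness are in place, joint convergence of the Laplace statistics implies joint convergence of $\int f\, d\eta^{(N)}$ for all admissible $f$, hence $\eta^{(N)} \convD \mathfrak{Ai}_\beta$, and the conclusion for $\xi^{(N)}$ is inherited from the $A \leftrightarrow -A$ symmetry.
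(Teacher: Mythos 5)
Your outline is correct, but it reproduces Soshnikov's original route to this theorem rather than the paper's. Both arguments end identically: precompactness of $\{\xi^{(N)}\},\{\eta^{(N)}\}$ plus convergence of the Laplace transforms of the correlation measures, which forces every limit point to agree with the invariant case; and both need the same two moment inputs, namely the bound $\EE\tr(A^{(N)}/(2\sqrt N))^m \le C_1 N m^{-3/2}\exp(C_2 m^3/N^2)$ and the matching $\EE\tr(A^{(N)}/(2\sqrt N))^m = \EE\tr(A_{\mathrm{inv}}^{(N)}/(2\sqrt N))^m + o(1)$ for $m=O(N^{2/3})$, together with their mixed-moment analogues. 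The difference is how these are obtained. You propose Soshnikov's direct enumeration of closed walks of length $2k\asymp N^{2/3}$, balancing vertex coincidences and high-multiplicity edges against the subgaussian bound (A2); this is exactly \cite{S1}, and it is where essentially all of the difficulty sits --- your sketch defers that step rather than carrying it out. The paper instead proves Theorem~\ref{th:herm''}, an asymptotic for $\EE\prod_i \tr U_{n_i}(A^{(N)}/(2\sqrt{N-2}))$ with $n_i=O(N^{1/3})$, by counting \emph{non-backtracking} paths (Claim~\ref{cl1} and the diagram expansion of Part~\ref{P:bern}), and then converts this into the two moment statements via the Snyder identities (\ref{eq:sn1})--(\ref{eq:sn2}). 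For the Hermitian edge the two routes are interchangeable; the point of the paper's detour through Chebyshev polynomials is that the same machinery transfers to the smallest eigenvalue of $XX^\ast$, where traces of powers are insensitive to the small eigenvalues and the direct moment method fails. One genuine (minor) simplification in your version: under (A1) the map $A\mapsto -A$ gives $\xi^{(N)}\overset{D}{=}\eta^{(N)}$, so you may treat one edge only, whereas the paper disentangles the two edges' contributions to $\tr (A^{(N)})^{m}$ by letting $m$ run through both parities.
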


\section{Some Remarks}\label{s:rem}

The most important example of random matrices satisfying the assumptions of
Theorems~\ref{th:cov1},\ref{th:covM} is the Wishart Ensemble:
\begin{ex}\label{ex:wish}\hfill
\begin{enumerate}
\item For $\beta = 1$, $X^{(N)}_{uv} \sim N(0, 1)$;
\item For $\beta = 2$, $X^{(N)}_{uv} \sim N(0, 1/2) + i N(0, 1/2)$ (meaning that
the real and imaginary parts of $X^{(N)}_{uv}$ are independent Gaussian variables.)
\end{enumerate}
We denote the random matrix $X^{(N)}$ by $X^{(N)}_\text{inv}$ (suppressing
the dependence on $\beta$), and set $B^{(N)}_\text{inv} = X^{(N)}_\text{inv} {X^{(N)}_\text{inv}}^\ast$.
\end{ex}

Similarly, the most important example of random matrices satisfying the assumptions
of Theorem~\ref{th:herm} is the Gaussian Orthogonal/Unitary Ensemble:
\begin{ex}\label{ex:gbe}\hfill
\begin{enumerate}
\item $\beta = 1$: in the Gaussian Orthogonal Ensemble (GOE),
\[ A^{(N)}_{uv} \sim
    \begin{cases}
        N(0, 1)~, &u\neq v \\
        N(0, 2)~, &u = v~.
    \end{cases}\]
\item $\beta = 2$: in the Gaussian Unitary Ensemble (GUE),
\[ A^{(N)}_{uv} \sim
    \begin{cases}
        N(0, 1/2) + i N(0, 1/2)~, &u\neq v \\
        N(0, 1)~, &u = v~.
    \end{cases}\]
\end{enumerate}
We denote the matrix $A^{(N)}$ defined above by $A_\text{inv}^{(N)}$.
\end{ex}

The main feature of these examples is the invariance property: the distribution
of $A_\text{inv}^{(N)}$, $B^{(N)}_\text{inv}$ is invariant under conjugation
by arbitrary orthogonal matrices (for $\beta = 1$) or unitary matrices (for $\beta = 2$).
This feature facilitates the study of the eigenvalues of these matrices, and indeed, most
of the results have been proved much earlier in this special case.

In particular, the conclusion of Theorem~\ref{th:herm'} was proved for $A_\text{inv}^{(N)}$ in the
early 90-s, by Bowick and Br\'ezin, Forrester, Moore, and others, building on earlier
work by Wigner, Dyson, and Mehta (see \cite{M,TW1} and references therein.)

The conclusion of Theorem~\ref{th:covM'} was established for the invariant case
$B^{(N)}_\text{inv}$ by Johansson \cite{J.K} (for $\beta = 2$) and Johnstone \cite{J.I}
(for $\beta = 1$); see also Soshnikov \cite{S2}. The conclusion of Theorem~\ref{th:cov1'}
was proved for $B^{(N)}_\text{inv}$ by Borodin and Forrester \cite{BF}, under the weaker assumption
$N - M(N) \to + \infty$ (instead of $\limsup M(N)/N < 1$).

\vspace{2mm}\noindent
It has been long conjectured that, in the asymptotic limit $N \to \infty$, some of the
statistical properties that were proved for the eigenvalues of matrices with Gaussian entries
should be valid, in particular, for more general random matrices with independent entries.
See for example Conjecture~1.2.1, Conjecture~1.2.2, and various remarks scattered
in Mehta's book \cite{M}. In particular, this should be true for local statistics
of the eigenvalues at the edge of the spectrum.

The first rigorous results of this kind are due to Soshnikov. In \cite{S1}, he established
Theorem~\ref{th:herm'}. The main step in his proof is to show that the asymptotics of the
mixed moments
\begin{equation}\label{eq:tr}
\EE \tr {A^{(N)}}^{m_1} \cdots \tr {A^{(N)}}^{m_k}~,
\end{equation}
does not depend on the distribution of the entries of $A^{(N)}$, when $\beta$ is fixed
and $m_1,\cdots,m_k = O(N^{2/3})$. This reduces Theorem~\ref{th:herm'} to the invariant
case $A^{(N)}_\text{inv}$.

In the subsequent work \cite{S2}, Soshnikov applied a similar method to the largest
eigenvalues of the sample covariance matrices $B^{(N)}$, and proved
Theorems~\ref{th:covM},\ref{th:covM'}, under some additional restrictions on $M(N)$.
These restrictions were later disposed of by P\'ech\'e \cite{P}.

This method does not seem to be directly applicable to the smallest eigenvalue of
$B^{(N)}$, since the asymptotics of (\ref{eq:tr}) does not depend on the eigenvalues
that are small in absolute value. In this paper, we make use of a modified technique,
using traces of certain orthogonal polynomials of $A^{(N)}$, $B^{(N)}$. This technique
is based on an idea going back to Bai and Yin \cite{BY}, which was developed in
several subsequent works; see \cite{me} and references therein.

\section{More definitions}\label{s:defs}

For the convenience of the reader, we provide some definitions; this section is copied,
up to change of notation, from the work of Soshnikov \cite{S1}.

\begin{dfn}
The measure $\rho_k = \rho_{k,\xi} = \EE \xi^{\otimes k}$ on $\RR^k$ is called
the {\em $k$-point correlation measure} of a point process $\xi$.
\end{dfn}

\begin{rmk} Thus defined, the correlation measures have singular components on the
diagonals $\{ x_1 = x_2 \}$, et cet. It is common to modify the
definition to annihilate these singular components. However, the
modified correlation measures $\tilde\rho_k$ are uniquely determined
by $\rho_k$, and vice versa; thus the difference is not very
essential, and we find it more convenient to work with $\rho_k$ as above.
\end{rmk}

\begin{rmk} In general, a point process is not uniquely defined by its correlation
measures. However, a sufficient condition due to Lenard \cite{L} ensures
uniqueness for the processes that we encounter in this paper.
\end{rmk}

For the sequel, let us introduce a topology on measures:
\begin{dfn}\label{def:top}
Let $\{ \mu_N \}$ be a sequence of measures on $\RR^k$. We shall write $\mu_N \conv \mu$
if $\int f d\mu_n \to \int f d\mu$ for any bounded continuous function $f$ on $\RR^k$
such that $\supp f \cap \RR_-^k$ is compact.
\end{dfn}

\begin{dfn} The Airy function $\Ai$ is (uniquely) defined by
\[ \Ai''(x) = x \Ai(x)~, \qquad
    \Ai(x) \sim \frac{1}{2 \sqrt{\pi} x^{1/4}} \exp\left(- \frac{2}{3} x^{3/2}\right)~,
    \quad x \to + \infty~.\]
\end{dfn}

\begin{dfn}\hfill\begin{enumerate}
\item The Airy point process $\mathfrak{Ai}_2$ is the (unique) point process such that,
for every $k$ and any compact set
\[ T \subset \left\{ (x_1, \cdots, x_k) \, \big| x_1 < \cdots < x_k \right\}~,\]
the restriction $\rho_{k}|_T$ is absolutely continuous with respect to the Lebesgue
measure, and
\[ \frac{d\rho_k|_T(x_1,\cdots,x_k)}{dx_1 \cdots dx_k}
    = \det \Big( K(x_i, x_j) \Big)_{1 \leq i,j \leq k}~,\]
where
\[ K(x,x') = \frac{\Ai(x) \Ai'(x') - \Ai'(x) \Ai(x')}{x-x'}~. \]
\item The Tracy--Widom law $TW_2$ is defined by its cumulative distribution
function
\[ F_2(x) = \exp \left\{ - \int_x^{+\infty} (s-x) q^2(s) ds~,\right\}\]
where $q(\cdot)$ is the solution to the II$^\text{nd}$ Painlev\'e equation:
\[ q''(s) = s q(s) + 2 q(s)^3~, \]
such that
\[ q(s) \sim \Ai(s)~, \quad s \to + \infty \]
(the so-called Hastings--McLeod solution.)
\end{enumerate}
\end{dfn}

\noindent For $\beta = 1$, the density of $\rho_k$ can be expressed as the square root
of the determinant of a $2k \times 2k$ block matrix, which is composed of $2\times 2$ blocks.
Denote
\begin{eqnarray*}
DK(x, x') &=& - \frac{\partial}{\partial x'} K(x, x')~, \\
JK(x, x') &=& - \int_{x}^{+\infty} K(x'', x') dx'' - \frac{1}{2} \sign(x-x')~;
\end{eqnarray*}
then let
\[ K_1(x, x') =
\left( \begin{array}{ccc}
K(x, x')  & DK(x, x') \\
JK(x, x') & K(x,x') \end{array} \right)~.\]

\begin{dfn}\hfill\begin{enumerate}
\item The Airy point process $\mathfrak{Ai}_1$ is the (unique) point process such that,
for every $k$ and any compact set
\[ T \subset \left\{ (x_1, \cdots, x_k) \, \big| x_1 < \cdots < x_k \right\}~,\]
the restriction $\rho_{k}|_T$ is absolutely continuous with respect to the Lebesgue
measure, and
\[ \frac{d\rho_k|_T(x_1,\cdots,x_k)}{dx_1 \cdots dx_k}
    = \sqrt{\det \Big( K_1(x_i, x_j) \Big)_{1 \leq i,j \leq k}}~.\]
\item The Tracy--Widom law $TW_1$ is defined by its cumulative distribution
function
\[ F_1(x) = \exp \left\{ - \int_x^{+\infty} \left[ q(s) + (s-x) q^2(s) \right] ds\right\}~.\]
\end{enumerate}
\end{dfn}

\begin{thm*}[Tracy--Widom \cite{TW1,TW2}]
For $\beta \in \{1,2\}$, the distribution of the rightmost atom of $\mathfrak{Ai}_\beta$
is exactly $TW_\beta$.
\end{thm*}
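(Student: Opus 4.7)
The plan is to rewrite the event $\{\text{rightmost atom} \leq x\}$ as the gap event $\{\xi([x,+\infty))=0\}$ and then identify the gap probability with $F_\beta(x)$ by exploiting the integrable structure of the Airy kernel. First I would verify that $\mathfrak{Ai}_\beta$ almost surely has a rightmost atom: the diagonal value $K(s,s)$ decays like $\exp(-c s^{3/2})$ as $s \to +\infty$ (by the stated asymptotics of $\Ai$), so the intensity measure on $[x,+\infty)$ is finite and the number of points there is almost surely finite, by Markov's inequality applied to $\rho_1$.

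For $\beta=2$, the gap probability is obtained from the correlation measures by the standard factorial-moment expansion
\[ \PP\{\xi([x,+\infty))=0\} = \sum_{k\geq 0} \frac{(-1)^k}{k!}\, \tilde\rho_k\big([x,+\infty)^k\big)~, \]
where $\tilde\rho_k$ is the off-diagonal version of $\rho_k$. Substituting $d\tilde\rho_k = \det(K(x_i,x_j))\,dx_1\cdots dx_k$ and using the rapid decay of $\Ai$ to justify term-by-term manipulation, the sum collapses into the Fredholm determinant $\det(I-\mathcal{K}_x)$ of the Airy operator on $L^2([x,+\infty))$. For $\beta=1$, the same computation with the $2\times 2$ matrix kernel $K_1$ produces a Fredholm Pfaffian, i.e.\ the square root of a determinant involving $K_1$ on $L^2([x,+\infty))\oplus L^2([x,+\infty))$.

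It then remains to identify these expressions with the Painlev\'e II transcendents in the definitions of $F_\beta$. Following Tracy--Widom, I would set $\phi(x)=\log\det(I-\mathcal{K}_x)$, differentiate in $x$, and use the ``integrable'' form $K(s,s') = [\Ai(s)\Ai'(s')-\Ai'(s)\Ai(s')]/(s-s')$ together with $\Ai''(s)=s\Ai(s)$ to obtain a closed system of ODE's for the auxiliary quantities $q(x) := ((I-\mathcal{K}_x)^{-1}\Ai)(x)$ and its companions. After simplification this system collapses to the single equation $q''=xq+2q^3$ and an integrated identity $\phi(x) = -\int_x^{+\infty}(s-x)\,q^2(s)\,ds$. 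The asymptotic condition $q(s)\sim\Ai(s)$ as $s\to+\infty$ is inherited from $\mathcal{K}_x\to 0$ in trace norm, which uniquely selects the Hastings--McLeod solution and hence $F_2$. For $\beta=1$, a parallel but more delicate differentiation of the Fredholm Pfaffian produces the extra term $q(s)$ in the exponent and yields $F_1$ with the same $q$.

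The main obstacle is the last step: extracting a nonlinear ODE from the Fredholm determinant of an integrable operator, and matching boundary data at $+\infty$, is a classical but nontrivial piece of integrable systems theory. The preceding reductions (gap formula, determinantal/Pfaffian identity, trace-class estimates) are essentially routine once the decay of $\Ai$ is available.
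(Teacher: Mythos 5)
This statement is not proved in the paper at all: it is imported verbatim from Tracy--Widom \cite{TW1,TW2}, and the paper uses it only as a black box to pass from convergence of the point processes $\xi^{(N)}$ to convergence of the rescaled extreme eigenvalue. Your sketch is therefore not an alternative to anything in the paper --- it is an outline of the cited proof itself, and as an outline it is faithful to the classical argument: gap probability via the alternating factorial-moment series, identification with the Fredholm determinant of the Airy operator on $L^2([x,+\infty))$ (resp.\ a Fredholm Pfaffian for $\beta=1$), and then the Tracy--Widom reduction to Painlev\'e II with the Hastings--McLeod boundary condition selected by $\mathcal{K}_x\to 0$.

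Two caveats on completeness. First, the step you label ``classical but nontrivial'' --- deriving $q''=xq+2q^3$ and the integrated formula for $\log\det(I-\mathcal{K}_x)$ from the integrable structure of the kernel --- is the entire content of \cite{TW1}; as written, your argument proves nothing beyond reducing the theorem to that reference. Second, for $\beta=1$ the matrix kernel $K_1$ contains the term $-\tfrac12\sign(x-x')$, so the associated operator is not trace class on $L^2([x,+\infty))\oplus L^2([x,+\infty))$ and the ``square root of a determinant'' must be interpreted via a regularized determinant or genuine Pfaffian formalism (this is the content of \cite{TW2}); saying the computation is ``parallel but more delicate'' glosses over the point where a naive Fredholm expansion would actually fail to converge. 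With those two ingredients taken from the references, the reduction you describe is correct.
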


The functional that sends a locally finite configuration of points (= locally finite
integer-valued Borel measure) to its rightmost point (= atom) is continuous with respect
to the convergence $\conv$, and therefore Theorem~\ref{th:cov1'} implies Theorem~\ref{th:cov1},
Theorem~\ref{th:covM'} implies Theorem~\ref{th:covM}, Theorem~\ref{th:herm'} implies
Theorem~\ref{th:herm}.

\section{The main technical statements}\label{s:prf}

\begin{dfn}
The Chebyshev polynomials of the second kind are defined as follows:
\begin{equation}\label{eq:defun}
U_n(\cos \theta) = \frac{\sin ((n+1)\theta)}{\sin \theta}~.
\end{equation}
\end{dfn}

The following elementary proposition may clarify the connection between $U_n$
and the spectra of the matrices considered in this paper. We shall not use it, and
therefore omit the proof (see e.g.\ \cite[\S 5.1]{me}.)

\begin{prop}\label{p:orth}\hfill
\begin{enumerate}
\item The polynomials $U_n$ are the orthogonal polynomials with respect to Wigner's
semicircle measure $\sigma_\text{W}$:
\[ \frac{d\sigma_\text{W}(x)}{dx} = \frac{2}{\pi}(1-x^2)_+^{1/2}~. \]
That is,
\[ \int U_n(x) U_{n'}(x) d\sigma_\text{W}(x) = \delta_{nn'}~.\]
\item For $0 \leq s \leq 1$, the polynomials $V_{n, s} = U_n + \sqrt{s} U_{n-1}$
are orthogonal with respect to the Marchenko--Pastur measure $\sigma_\text{MP}^{(s)}$:
\[ \frac{d\sigma_\text{MP}^{(s)}(x)}{dx}
    = \frac{2}{\pi} \frac{(1-x^2)_+^{1/2}}{(1+ s) + 2\sqrt{s} x}~. \]
\end{enumerate}
\end{prop}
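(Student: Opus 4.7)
The plan is to reduce both parts to trigonometric integrals via the substitution $x = \cos\theta$, $\theta \in [0,\pi]$, which diagonalizes the Chebyshev polynomials through the defining identity~\eqref{eq:defun}: $U_n(\cos\theta)\sin\theta = \sin((n+1)\theta)$. For part~(1), the substitution converts $\sqrt{1-x^2}\,|dx|$ to $\sin^2\theta\, d\theta$, which cancels the two factors of $\sin\theta$ hidden in the denominators of $U_n(\cos\theta)$ and $U_m(\cos\theta)$, reducing the orthogonality integral to $\tfrac{2}{\pi}\int_0^{\pi}\sin((n+1)\theta)\sin((m+1)\theta)\, d\theta = \delta_{nm}$. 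This is completely routine.

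For part~(2), the two key algebraic observations are
\[
(1+s) + 2\sqrt{s}\cos\theta = |e^{i\theta} + \sqrt{s}|^{2},
\qquad
V_{n,s}(\cos\theta)\sin\theta = \operatorname{Im}\bigl(e^{in\theta}(e^{i\theta}+\sqrt{s})\bigr).
\]
Setting $A_n(\theta) := e^{in\theta}(e^{i\theta}+\sqrt{s})$ and extending integration from $[0,\pi]$ to $[-\pi,\pi]$ by parity, one arrives at
\[
\int V_{n,s} V_{m,s}\, d\sigma_{\mathrm{MP}}^{(s)}
 \;=\; \frac{1}{\pi}\int_{-\pi}^{\pi}
      \frac{\operatorname{Im} A_n\,\operatorname{Im} A_m}{|e^{i\theta}+\sqrt{s}|^{2}}\, d\theta.
\]
Expanding $\operatorname{Im} A_n\,\operatorname{Im} A_m = -\tfrac{1}{4}(A_n-\bar A_n)(A_m-\bar A_m)$ produces four monomials. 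Two of them are ``mixed'' and collapse immediately, since $A_n\bar A_m/|e^{i\theta}+\sqrt{s}|^{2} = e^{i(n-m)\theta}$, integrating to $2\pi\delta_{nm}$. The two ``pure'' terms have the form $A_n A_m/|e^{i\theta}+\sqrt{s}|^{2} = e^{i(n+m)\theta}(e^{i\theta}+\sqrt{s})/(e^{-i\theta}+\sqrt{s})$; substituting $z = e^{i\theta}$ turns these into contour integrals around $|z|=1$ with integrand proportional to $z^{n+m}(z+\sqrt{s})/(1+\sqrt{s}z)$, which is holomorphic inside the unit disk (the only pole $z = -1/\sqrt{s}$ lies outside for $0 \leq s < 1$), so Cauchy's theorem forces them to vanish. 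Collecting the contributions yields $\int V_{n,s} V_{m,s}\, d\sigma_{\mathrm{MP}}^{(s)} = \delta_{nm}$.

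No serious obstacle arises: the whole argument is a direct computation. The most delicate point is the boundary case $s = 1$, where the pole of the contour integrand touches the unit circle; this is handled by observing that $d\sigma_{\mathrm{MP}}^{(1)}$ has only an integrable singularity at $x = -1$, so both sides depend continuously on $s$ and the identity extends by taking $s \nearrow 1$. (An alternative, if one wanted to avoid contour integration altogether, would be to note that $V_{n,s}$ satisfies the recursion $V_{n+1,s} = (2x+\sqrt{s})U_n - U_{n-1}$ inherited from $2xU_n = U_{n+1} + U_{n-1}$, and to match this against the recursion forced by orthogonality with respect to $d\sigma_{\mathrm{MP}}^{(s)}$; but the trigonometric route is shorter and self-contained.)
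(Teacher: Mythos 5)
Your computation is correct. Note, however, that the paper deliberately omits any proof of this proposition (it states ``We shall not use it, and therefore omit the proof'' and refers to \cite[\S 5.1]{me}), so there is no internal argument to compare against; your write-up is a valid self-contained verification. All the steps check out: the substitution $x=\cos\theta$ together with $U_n(\cos\theta)\sin\theta=\sin((n+1)\theta)$ and the identity $(1+s)+2\sqrt{s}\cos\theta=|e^{i\theta}+\sqrt{s}|^2$ reduces part (2) to the four-term expansion you describe; the mixed terms produce $\pi\delta_{nm}$ each after division by $|e^{i\theta}+\sqrt s|^2$, and the pure terms vanish by Cauchy's theorem since the only pole $z=-1/\sqrt{s}$ lies outside the unit disk for $0\le s<1$ (and $n+m\ge 0$ keeps the integrand holomorphic at the origin). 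Your treatment of the endpoint $s=1$ by dominated convergence is also sound, since $\sqrt{1-x^2}/((1+s)+2\sqrt{s}x)$ is dominated uniformly for $s$ near $1$ by a constant multiple of $\sqrt{(1-x)/(1+x)}$, which is integrable. In fact your computation yields the slightly stronger orthonormality statement $\int V_{n,s}V_{m,s}\,d\sigma_{\mathrm{MP}}^{(s)}=\delta_{nm}$, matching the normalization asserted in part (1).
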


In the next parts of this paper we shall prove the following two statements:

\begin{thm}\label{th:herm''}
Fix $\beta \in \{1,2\}$, and let $\{A^{(N)}\}$ be a sequence of random matrices
satisfying the assumptions of Theorem~\ref{th:herm}. Fix $k \geq 1$, and let
$\{ (n_1^{(N)} , \cdots , n_k^{(N)})\}_N$ be a sequence of $k$-tuples.
\begin{enumerate}
\item If $\sum n_i^{(N)} \equiv 1 \mod 2$,
\[ \EE \prod_{i=1}^k \tr U_{n_i^{(N)}}(A^{(N)}/(2\sqrt{N-2})) = 0~.\]
\item Suppose $\sum n_i^{(N)} = 2n^{(N)}$. There exists a constant $C$ (depending only on
$C_0$ in (A2)), such that
\[ \EE \prod_{i=1}^k \tr U_{n_i^{(N)}}(A^{(N)}/(2\sqrt{N-2}))
    \leq (C n^{(N)})^k \, \exp \left\{ C {n^{(N)}}^{3/2}/{N^{1/2}} \right\}~.\]
\item If moreover $n^{(N)} = O(N^{1/3})$,
\begin{multline*}
\EE \prod_{i=1}^k \tr U_{n_i^{(N)}}(A^{(N)}/(2\sqrt{N-2})) \\
    = \EE \prod_{i=1}^k \tr U_{n_i^{(N)}}(A_\text{inv}^{(N)}/(2\sqrt{N-2})) + o((n^{(N)})^k)
\end{multline*}
as $N \to +\infty$, where $A_\text{inv}^{(N)}$ is as in Example~\ref{ex:gbe}, and the implicit
constant in $o(\cdots)$ may depend on $k$, $C_0$, and $n/N^{1/3}$.
\end{enumerate}
\end{thm}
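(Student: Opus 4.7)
Part (1) is immediate from assumption (A1): since the joint law of the entries of $A^{(N)}$ is invariant under $A \mapsto -A$ and $U_n(-x) = (-1)^n U_n(x)$, one has
\[ \EE \prod_i \tr U_{n_i^{(N)}}(A^{(N)}/(2\sqrt{N-2})) = (-1)^{\sum n_i^{(N)}} \EE \prod_i \tr U_{n_i^{(N)}}(A^{(N)}/(2\sqrt{N-2})), \]
which forces the left-hand side to vanish whenever $\sum n_i^{(N)}$ is odd.

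For parts (2) and (3) the plan is to exploit the combinatorial meaning of the Chebyshev trace. Expanding $U_n(x) = \sum_{k}(-1)^k \binom{n-k}{k}(2x)^{n-2k}$ and then $\tr A^m = \sum A_{i_0 i_1}\cdots A_{i_{m-1} i_0}$ writes $\tr U_n(A/(2\sqrt{N-2}))$ as a signed sum over closed walks on $\{1,\dots,N\}$ of varying lengths. Following the approach of \cite{me} (building on Bai--Yin), one then regroups walks according to their height profile. The crucial point is that the alternating binomial coefficients of $U_n$ produce massive cancellations: after regrouping, the signed sum equals an \emph{unsigned} sum over closed walks of length exactly $n$ whose successive edge-visits encode a Dyck path, i.e., walks that "fold onto" a plane tree. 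This Chebyshev cancellation is precisely what makes $\tr U_n(A/\cdot)$ quantitatively better suited to edge universality than the monomial trace $\tr A^n$.

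Given this expansion, I would expand $\EE \prod_i \tr U_{n_i^{(N)}}(\cdots)$ as a sum over $k$-tuples $(w_1,\ldots,w_k)$ of tree-like closed walks of lengths $n_1^{(N)},\ldots,n_k^{(N)}$. By (A1), a tuple contributes only when every undirected edge of $A^{(N)}$ is used an even total number of times. Tuples in which each used edge is traversed exactly twice (a pure pairing) contribute independently of the entry distribution beyond (A3$_\beta$), since only $\EE|r|^2$ and $\EE r^2$ enter; a standard count of glueings of plane trees yields the main $(Cn^{(N)})^k$ factor. Each additional "collision" — two tree-edges identified, or an edge traversed with multiplicity $2\ell \geq 4$ — pays at most $(C_0\ell)^\ell$ by (A2), while its combinatorial cost is at most $(n^{(N)})^{3/2}/\sqrt{N}$ by the tree-walk structure; summing the resulting geometric series in the number of collisions produces the exponential factor in part~(2). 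Part (3) comes from the same decomposition: the paired tuples contribute identically for $A^{(N)}$ and $A^{(N)}_\text{inv}$ (depending only on $\EE|r|^2$ and $\EE r^2$), while the non-paired tuples, separated by their excess multiplicity profile and then summed using (A2), are $o((n^{(N)})^k)$ once $n^{(N)} = O(N^{1/3})$.

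The principal obstacle is the Chebyshev-to-tree identification together with the sharp collision-counting estimate. One must verify both that the alternating sum over $k$ in the $U_n$ expansion precisely cancels non-tree-like walks, and that the cost-gain balance per edge identification gives the $(n^{(N)})^{3/2}/\sqrt{N}$ exponent (rather than the weaker $(n^{(N)})^2/N$ one obtains from the monomial expansion). This combinatorial bookkeeping, which uses the specific form of the Chebyshev coefficients, is the technical heart of the argument; once in place, parts (2) and (3) follow from the pairing and geometric-series argument sketched above.
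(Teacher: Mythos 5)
Part (1) is fine: the symmetry (A1) together with $U_n(-x)=(-1)^nU_n(x)$ kills the odd case, exactly as in the paper (which dismisses it as obvious).

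For parts (2) and (3), however, your central combinatorial claim is backwards, and the error is fatal to the counting. You assert that after the Chebyshev cancellation ``the signed sum equals an unsigned sum over closed walks \ldots\ that fold onto a plane tree,'' and that ``a standard count of glueings of plane trees yields the main $(Cn^{(N)})^k$ factor.'' The tree-like (Dyck-path) walks are precisely what the Chebyshev combination \emph{subtracts off}: $U_n$ is orthogonal with respect to the semicircle law, whose moments are exactly the tree counts dominating $\tr A^m$. What survives is the sum over \emph{non-backtracking} closed paths --- this is the content of the paper's Claim~\ref{cl} (the polynomials $P_n$ of (\ref{eq:defpn}) satisfy $P_n(A)_{uv}=\sum_{p_n}A_{u_0u_1}\cdots A_{u_{n-1}u_n}$ over non-backtracking paths, for unimodular entries) and of its general-entry analogue Claim~\ref{cl1}. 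A quick sanity check shows your version cannot be right: a sum over tree-like closed walks of length $2n$, with each edge doubled, has $\sim N^{n+1}$ terms against the normalization $(N-2)^n$, so it would produce a bound of order $N$, not the order-$n^{(N)}$ bound asserted in part (2). The main term of size $\Theta(n)$ comes instead from non-backtracking circuits in which each edge is doubled, organized in the paper by the diagram/automaton ``genus expansion'' (Claims~\ref{cl_vert}--\ref{cl:diag1}, Propositions~\ref{prop:ndiag}, \ref{prop:herm''.bern}), where the factor $n$ counts the placements of the single cycle and each extra handle costs $n^3/N$, yielding $\exp(Cn^{3/2}/N^{1/2})$ after summing against $D_\beta(s)\approx s^s/(3s-2)!$.

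A second, independent gap: for general entries the identity ``$\tr U_n=$ sum over non-backtracking walks'' is only approximate, since $A_{uv}A_{vu}=|A_{uv}|^2\neq 1$. The paper spends all of Part~\ref{P:gen} on this: the $\gamma(p,A)$ construction (Claim~\ref{cl1}) re-expresses the trace as a sum over all paths decomposed into a non-backtracking core plus attached forests, and Sections~\ref{s:gen.nbt}--\ref{s:gen.bt} show separately that (i) multiple edges on the core and (ii) the backtracking forests contribute negligibly --- the latter requiring the tree-diagram count of Lemma~\ref{prop:ndiag-t} and the leaf-coincidence condition (\ref{e}). Your sketch treats general entries as if only $\EE|r|^2$, $\EE r^2$, and a per-collision moment bound were needed; without the core/forest decomposition and the explicit bookkeeping of which paths survive the cancellation, parts (2) and (3) do not follow.
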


There are several ways to deduce Theorem~\ref{th:herm'} from Theorem~\ref{th:herm''}.
For example, one may use Levitan's uniqueness theorem \cite{L} for the transform
\[ \mu \mapsto \mathfrak{T}(\mu), \quad
    \mathfrak{T}(\mu)(\alpha)
    = \int_{-\infty}^{+\infty} \frac{\sin(\alpha \sqrt{x})}{\alpha \sqrt{x}} d\mu(x)~, \]
which appears naturally from the asymptotics of $U_n$ near $\pm 1$. However,
the justification of convergence makes this approach quite cumbersome.

We shall follow Soshnikov's original argument \cite{S1} and go back to moments and to
the Laplace transform
\[ \mu \mapsto \mathfrak{L}(\mu), \quad
    \mathfrak{L}(\mu)(\alpha)
    = \int_{-\infty}^{+\infty} \exp(\alpha x) d\mu(x)~. \]

\begin{proof}[Proof of Theorem~\ref{th:herm'}]

We shall use the following simple identities (see e.g.\ Snyder \cite{Sn}):
\begin{eqnarray}
\label{eq:sn1}
x^{2m}      &=& \frac{1}{(2m+1) 2^{2m}} \sum_{n=0}^m (2n+1) \binom{2m+1}{m-n} U_{2n}(x)~; \\
\label{eq:sn2}
x^{2m-1}    &=& \frac{1}{(2m) 2^{2m-1}} \sum_{n=0}^m 2n \binom{2m}{m-n} U_{2n-1}(x)~.
\end{eqnarray}

Let us show that
\begin{enumerate}
\item $\EE \tr (A^{(N)}/(2\sqrt{N}))^m \leq \frac{C_1 N}{m^{3/2}} \exp(C_2 m^{3}/N^2)$,
where $C_1,C_2$ may depend on $C_0$;
\item $\EE \tr (A^{(N)}/(2\sqrt{N}))^m = \EE \tr (A^{(N)}_\text{inv}/(2\sqrt{N}))^m + o(1)$
for $m = O(N^{2/3})$, where the implicit constant in $o(1)$ may depend on $C_0$ and on $m/N^{2/3}$.
\end{enumerate}
(This is more or less the content of Theorem~2 in \cite{S1}.) Substitute
\[ x = A^{(N)}/(2\sqrt{N-2}) \]
in (\ref{eq:sn1}) and take the expectation of the trace:
\begin{multline}\label{eq:substA}
\EE \tr \left[ \frac{A^{(N)}}{2 \sqrt{N-2}}\right]^m \\
    = \frac{1}{(2m+1) 2^{2m}} \sum_{n=0}^m (2n+1) \, \binom{2m+1}{m-n} \,
        \EE \tr U_{2n} \left[ \frac{A^{(N)}}{2 \sqrt{N-2}}\right]~.
\end{multline}
The $0$-th term in (\ref{eq:substA}) is equal to
\[ \term_0 = \binom{2m+1}{m} \, N \leq \frac{C \, 2^{2m} \, N}{\sqrt{m}}~. \]
By the second item of Theorem~\ref{th:herm''},
\begin{equation}
\label{eq:p11tmp}\begin{split}
\term_n
    &\leq (2n+1) \binom{2m+1}{m-n} \, Cn \, \exp(Cn^{3/2}/N^{1/2}) \\
    &\leq 2^{2m} \, \frac{C' n^2}{\sqrt{m}} \, \exp(-c n^2/m + C n^{3/2}/N^{1/2})~.
\end{split}
\end{equation}
Thus
\begin{multline}
\EE \tr \left[ \frac{A^{(N)}}{2 \sqrt{N-2}}\right]^m \\
    \leq \frac{C 2^{2m}}{\sqrt{m}} \frac{1}{m 2^{2m}} \left\{
        N +  \sum_{n=1}^m n^2 \exp(-c n^2/m + C n^{3/2}/N^{1/2}) \right\} \\
    \leq \frac{CN}{m\sqrt{m}} \exp(Cm^3/N^2)~. %
\end{multline}
This proves 1.

The inequality (\ref{eq:p11tmp}) also ensures that the contribution of
\[ n > C' m^2/N + N^{1/3} \]
(with, say, $C' = 10$) is negligible. Hence one can restrict the
sum to
\[ n \leq C' m^2/N + N^{1/3}~,\]
and apply the third item. This proves 2.\ for even values
of $m$; for odd values of $m$, both sides are zero.

Proceeding with Soshnikov's argument, we deduce that the sequences
$\{\xi^{(N)}\}$, $\{\eta^{(N)}\}$ are precompact, and that
\[  (4N)^{-m_N/2} \, \EE \tr (A^{(N)})^{m_N} \to \int \exp(\alpha y) ((-1)^p d\rho_{1,\xi}(y) + d\rho_{1,\eta}(y))\]
for any limit points $\xi$, $\eta$, as long as $m_N/N^{2/3} \to \alpha$ and $m_N$ is of constant
parity $p$. Therefore the Laplace transforms
$\mathfrak{L}(\rho_{1,\xi}), \mathfrak{L}(\rho_{1,\eta})$
do not depend on the distribution of the entries of the matrix $A^{(N)}$.
In exactly the same way we show that
$\mathfrak{L}(\rho_{k,\xi}), \mathfrak{L}(\rho_{k, \eta})$
are defined uniquely for any $k \geq 1$, and hence are the same as for $A^{(N)}_\text{inv}$.
Therefore (again, see \cite{S1}), we deduce that
\[ \xi^{(N)}, \eta^{(N)} \convD \mathfrak{Ai}_\beta~.\]
\end{proof}

\begin{thm}\label{th:cov''}
Fix $\beta \in \{1,2\}$, and let $\{B^{(N)}\}$ be a sequence of random matrices
satisfying the assumptions of Theorem~\ref{th:covM}. Fix $k \geq 1$, and let
$\{ (n_1^{(N)} , \cdots , n_k^{(N)})\}_N$ be a sequence of $k$-tuples.
\begin{enumerate}
\item Suppose $\sum n_i^{(N)} = n^{(N)}$. There exists a constant $C$ (depending only on
$C_0$ in (A2)), such that
\begin{multline*}
\EE \prod_{i=1}^k \tr V_{n_i^{(N)}, M(N)/N}\left( \frac{B^{(N)} - (M(N)+N-2)}{2\sqrt{(M(N)-1)(N-1)}}\right) \\
    \leq (C n^{(N)})^k \, \exp \left\{ C {n^{(N)}}^{3/2}/{M(N)^{1/2}} \right\}~.
\end{multline*}
\item If moreover $n^{(N)} = O(M(N)^{1/3})$,
\begin{multline*}
\EE \prod_{i=1}^k \tr V_{n_i^{(N)}, M(N)/N}\left( \frac{B^{(N)} - (M(N)+N-2)}{2\sqrt{(M(N)-1)(N-1)}}\right) \\
    = \EE \prod_{i=1}^k \tr V_{n_i^{(N)}, M(N)/N}\left( \frac{B_\text{inv}^{(N)} - (M(N)+N-2)}{2\sqrt{(M(N)-1)(N-1)}}\right)
         + o((n^{(N)})^k)
\end{multline*}
as $N \to +\infty$, where $B_\text{inv}^{(N)}$ is as in Example~\ref{ex:wish}, and the implicit
constant in $o(\cdots)$ may depend on $k$, $C_0$, and $n/M(N)^{1/3}$.
\end{enumerate}
\end{thm}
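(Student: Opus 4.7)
The plan is to adapt the combinatorial moment method developed for Theorem~\ref{th:herm''} to the bipartite structure of $B^{(N)} = X^{(N)}{X^{(N)}}^\ast$. The polynomial $V_{n,s} = U_n + \sqrt{s}\,U_{n-1}$ is the correct analog of $U_n$ for sample covariance matrices: by Proposition~\ref{p:orth} it is orthogonal with respect to the Marchenko--Pastur law, and under the affine normalization $y = (x - (M+N-2))/(2\sqrt{(M-1)(N-1)})$ its trace admits a combinatorial expansion over closed bipartite walks on $[M] \sqcup [N]$ that inherits the cancellation properties of the Chebyshev expansion in the Hermitian case.

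First I would expand $\tr (XX^\ast)^k$ as a sum over closed alternating walks
\[ u_0 \to v_1 \to u_1 \to v_2 \to \cdots \to u_{k-1} \to v_k \to u_0, \]
substitute into $\tr V_{n,s}(B')$, and collect terms. The shift $M+N-2$ plays the role of subtracting the mean: it cancels exactly the contribution of walks that backtrack immediately along the same edge, which is the bipartite analog of removing the "single loop" contributions in the Hermitian case. The extra $\sqrt{s}\,U_{n-1}$ summand, with weight $\sqrt{s} = \sqrt{M/N}$, records the asymmetry between $M$- and $N$-steps, so that the surviving sum is over "balanced" walks analogous to those appearing in Theorem~\ref{th:herm''}. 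Part (1) for $k=1$ then reduces to a walk enumeration plus an entrywise moment estimate.

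Next I would group walks by topological type (the underlying multigraph with edge multiplicities) and apply Sinai--Soshnikov-style accounting. Assumption (A2) bounds the weight of an edge used $2\ell$ times by $(C_0\ell)^\ell$, and a Harer--Zagier style count enumerates walks of each type. The scaling contributes $\sqrt{(M-1)(N-1)}$ per pair of uses of a given edge, while a walk with an excess $M$- or $N$-visit pays only $\sqrt{M}$ or $\sqrt{N}$; since $M \leq N$, this yields the bound $(Cn)^k \exp(Cn^{3/2}/M^{1/2})$. For part (2), under $n = O(M^{1/3})$ the dominant contribution comes from walks whose underlying multigraph is a tree in which every edge is traversed exactly twice. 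Such contributions depend only on the second moments of the entries, which are fixed by (A3$_\beta$), and therefore coincide with those for $B^{(N)}_\text{inv}$ up to an $o(n^k)$ error. The extension from $k=1$ to products is handled by observing that tuples of walks whose underlying graphs share vertices are suppressed combinatorially, with the overlaps absorbed into the moment bound.

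The main obstacle is the identification of the precise class of bipartite walks singled out by $V_{n,s}$, and the verification that the polynomial identity $V_{n,s} = U_n + \sqrt{s}\,U_{n-1}$ cancels exactly the non-universal short walks (those attached to the "boundary" of their walk diagrams). In the Hermitian setting the corresponding statement is classical and is encoded in the orthogonality of $U_n$; in the bipartite setting one must additionally track how the $\sqrt{s}\,U_{n-1}$ correction balances the mismatch between the number of $[M]$- and $[N]$-visits in a walk. Once this combinatorial reduction is in place, the moment accounting and the reduction to the Wishart case follow the strategy of \cite{S1,S2,P} and the refinement in \cite{me}.
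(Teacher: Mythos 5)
Your overall strategy (orthogonal polynomials adapted to the Marchenko--Pastur law, expansion over bipartite walks, diagram counting, reduction to second moments) is the right one, and your heuristic for why the correction term $\sqrt{s}\,U_{n-1}$ appears is consistent with what the paper does. But there are two genuine problems. First, the step you yourself flag as ``the main obstacle'' --- identifying exactly which bipartite walks survive in $\tr V_{n,s}$ --- is not an obstacle one can defer: it is the core of the proof. The paper resolves it by introducing polynomials $Q_n$ via the three-term recurrence $Q_n(x)=(x-(M+N-2))Q_{n-1}(x)-(M-1)(N-1)Q_{n-2}(x)$, showing (Lemma~\ref{l:qnvn}) that $Q_n$ agrees with $((M-1)(N-1))^{n/2}V_{n,M/N}$ of the normalized variable up to lower order, and proving by induction the exact identity that $Q_n(B)_{u_0u_n}$ is the sum over \emph{non-backtracking} alternating paths $u_0v_0u_1v_1\cdots v_{n-1}u_n$ on $K_{M,N}$ (the bipartite analogue of Claim~\ref{cl}). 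Without this exact identity (or the $\gamma(p,A)$-type surrogate of Part~\ref{P:gen} for general entries), ``collecting terms'' after expanding monomials $\tr B^m$ does not give you control of the cancellations.

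Second, and more seriously, your identification of the dominant walks in part (2) is backwards. You claim the main contribution for $n=O(M^{1/3})$ comes from walks whose underlying multigraph is a doubled tree. Doubled trees are precisely the backtracking walks that the Chebyshev-type recurrence is designed to cancel; in the paper their residual contribution (for non-unimodular entries) is shown to be \emph{negligible} (Section~\ref{s:gen.bt}). The surviving dominant walks are non-backtracking closed paths in which every edge is traversed exactly twice; their quotient diagrams have $2s$ vertices and $3s-1$ edges for $s\ge 1$, hence contain $s$ independent cycles and are never trees, and each topological class $s$ contributes at relative order $(n^3/M)^{s-1}=\Theta(1)$ when $n\sim M^{1/3}$ (Proposition~\ref{prop:cov''.bern}). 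If only doubled trees mattered, the answer would miss the entire genus expansion $\phi_\beta$ that produces the Airy asymptotics. You are also missing the bookkeeping specific to the bipartite case: one must count separately the vertices of the two types on each diagram (the lemma computing $V_\pm=(n\pm 2\mp\bar V_+ \mp (2s-2))/2$, together with the parity decomposition Lemma~\ref{l:dec.cov}), which is what produces the weights $(1\pm\sqrt{M/N})(M^{-1/2}\pm N^{-1/2})^{2s-2}$ and, in particular, the exponent $Cn^{3/2}/M^{1/2}$ (with $M$, not $N$) in part (1).
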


Similarly to the above, Theorem~\ref{th:cov''} implies Theorems~\ref{th:cov1'},\ref{th:covM'}.

\begin{proof}[Sketch of proof of Theorems~\ref{th:cov1'},\ref{th:covM'}.]
As in the proof of Theorem~\ref{th:herm'}, we consider moments. Expressing
\begin{multline}\label{eq:cov.mom.+}
\EE \tr \left[ \frac{B^{(N)} - (M(N)+N-2)}{2 \sqrt{(M(N)-1)(N-1)}}\right]^{2m} \\
    + \EE \tr \left[ \frac{B^{(N)} - (M(N)+N-2)}{2 \sqrt{(M(N)-1)(N-1)}}\right]^{2m-1}
\end{multline}
in terms of
\[ \EE \tr V_{n, M(N)/N} \left[ \frac{B^{(N)} - (M(N)+N-2)}{2 \sqrt{(M(N)-1)(N-1)}}\right]~,\]
one may check that the asymptotics of (\ref{eq:cov.mom.+}) is the same as for
$B^{(N)} = B^{(N)}_\text{inv}$. If $M(N)/N < 1 - \eta < 1$, the same is true for
\begin{multline}\label{eq:cov.mom.-}
\EE \tr \left[ \frac{B^{(N)} - (M(N)+N-2)}{2 \sqrt{(M(N)-1)(N-1)}}\right]^{2m} \\
    - \EE \tr \left[ \frac{B^{(N)} - (M(N)+N-2)}{2 \sqrt{(M(N)-1)(N-1)}}\right]^{2m-1}
\end{multline}
(with the implicit constants depending on $\eta$.) From this point, proceed as in the proof
of Theorem~\ref{th:herm'}.
\end{proof}

\begin{rmk} Taking Remarks~\ref{r:phi.k},\ref{r:phi.k.cov} into account, one can actually
avoid the use of any results for Wishart matrices, and compare the correlation measures to those
in Theorem~\ref{th:herm'}.
\end{rmk}

\vspace{2mm}\noindent
{\bf Plan of the proceeding sections.}
Parts~\ref{P:bern},\ref{P:gen} are devoted to the proof of Theorem~\ref{th:herm''}.
In Part~\ref{P:bern} we focus on the special case of matrices the entries of which
are uniformly distributed on the $(\beta-1)$-dimensional sphere (except for the diagonal
entries, which are zero, see (\ref{ex:bern}) below.) We discuss the asymptotics of
the expectations in Theorem~\ref{th:herm''} in detail, first for $k=1$, and obtain
a certain ``genus expansion'', Proposition~\ref{prop:herm''.bern}.
In Section~\ref{s:bern.k} we extend these results to arbitrary $k \geq 1$. This part
is based on the connection to non-backtracking paths on the complete graph, which is
very explicit and simple in the special case (\ref{ex:bern}) (see Claim~\ref{cl} below).

\vspace{1mm}\noindent
In Part~\ref{P:gen} we show that the results of Part~\ref{P:bern} can be extended
to matrices with arbitrary distribution of entries (that satisfy the conditions of
Theorem~\ref{th:herm}.) The three main technical difficulties that appear are:
\begin{enumerate}
\item to express $\tr U_n(A/(2\sqrt{N-2}))$ as a sum over paths;
\item to show that multiple edges do not contribute to the part of the
asymptotics that comes from non-backtracking paths.
\item to show that paths with backtracking do not contribute to the asymptotics of the
expressions in Theorem~\ref{th:herm''}.
\end{enumerate}

\vspace{1mm}\noindent
In Part~\ref{P:cov} we prove Theorem~\ref{th:cov''}. The asymptotics of the expressions
in Theorem~\ref{th:cov''} is closely connected to non-backtracking paths on the complete
bipartite graph. Therefore the proofs mostly mimic the proofs in
Parts~\ref{P:bern},\ref{P:gen}, and we mainly indicate the necessary modifications.

\vspace{1mm}\noindent
Part~\ref{P:coda} is devoted to extensions and some remarks. We discuss additional results
that can be proved using the methods of this paper, and indicate the modifications that
should be made in the proofs. In particular, we discuss quaternionic random matrices
(which correspond to $\beta=4$), and matrices with unequal real and imaginary part.
In Section~\ref{s:dev} we discuss some deviation inequalities for the extreme eigenvalues.

\vspace{2mm}\noindent
{\bf Notation:} The large parameter in this paper is $N \to \infty$. For quantities
$\phi, \psi$ depending on $N$, we write $\phi \ll \psi$ for $\phi = o(\psi)$, and
$\phi \sim \psi$ for $\phi/\psi = 1 + o(1)$; $\phi = \Theta(\psi)$ if $\phi = O(\psi)$ and
$\psi = O(\phi)$. The letters $C, C', C_1, \cdots$ will stand for positive constants
the value of which may vary from line to line. Some of these may depend on $C_0$ in (A2)
or on other parameters; we mention it explicitly when this is the case.

\part{Matrices with uniform entries}\label{P:bern}

In this part, we focus on the special cases
\begin{equation}\tag{II.0.1}\label{ex:bern}
\begin{split}
\beta=1, &\quad A_{uv} =
    \begin{cases}
        \pm 1 \quad \text{with prob.\ $1/2$}, &u \neq v~, \\
        0,                                    &u = v~;
    \end{cases} \\
\beta=2, &\quad A_{uv} \sim
    \begin{cases}
        \textrm{unif}(S^1),                   &u \neq v~, \\
        0,                                    &u = v~.
    \end{cases} \\
\end{split}
\end{equation}

From this point, we suppress the dependence on $N$ in the notation.

\section{Reduction to diagrams}\label{s:bern.1}

Consider the following sequence of polynomials $P_n = P_{n,N}$:
\begin{equation}\label{eq:defpn}
\begin{split}
P_0(x) &= 1, P_1(x) = x, P_2(x) = x^2 - (N - 1), \\
P_n(x) &= x P_{n-1}(x) - (N-2) P_{n-2}(x) \quad \text{for $n \geq 3$.}
\end{split}
\end{equation}

\begin{lemma}\label{l:pnun}
The following identity holds:
\begin{equation}\label{eq:pnun}
\begin{split}
&P_n(x) = (N-2)^{n/2} \\
&\qquad\times\left\{ U_n \left( \frac{x}{2\sqrt{N-2}} \right)
    - \frac{1}{N-2} U_{n-2} \left( \frac{x}{2 \sqrt{N-2}} \right) \right\}~,
\end{split}
\end{equation}
where formally $U_{-2} \equiv U_{-1} \equiv 0$.
\end{lemma}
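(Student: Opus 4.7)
The plan is to prove the identity by induction on $n$, exploiting the fact that the rescaled Chebyshev polynomials and the $P_n$ satisfy essentially the same three-term recurrence. The starting point is to introduce the auxiliary quantity
\[ T_n(x) := (N-2)^{n/2}\, U_n\!\left(\frac{x}{2\sqrt{N-2}}\right), \]
so that, with the convention $U_{-1}=U_{-2}=0$, the right-hand side of \eqref{eq:pnun} is exactly $T_n(x) - T_{n-2}(x)$. Substituting $y = x/(2\sqrt{N-2})$ into the standard Chebyshev recurrence $U_n(y) = 2y\,U_{n-1}(y) - U_{n-2}(y)$ and multiplying through by $(N-2)^{n/2}$ yields
\[ T_n(x) = x\,T_{n-1}(x) - (N-2)\,T_{n-2}(x), \qquad n \geq 2, \]
which is precisely the recursion defining $P_n$ for $n \geq 3$.

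Next I would dispatch the base cases $n = 0,1,2$ by direct computation: one checks that $T_0 = 1$, $T_1 = x$, $T_2 = x^2-(N-2)$, so $T_0 - T_{-2} = 1 = P_0$, $T_1 - T_{-1} = x = P_1$, and $T_2 - T_0 = x^2 - (N-1) = P_2$. The mild subtlety here is that the $P_n$ recurrence begins only at $n = 3$; the required correction is absorbed by the $-(N-1)$ (rather than $-(N-2)$) appearing in the definition of $P_2$, which is exactly the $-T_0 = -1$ contribution from the $T_{n-2}$ term of \eqref{eq:pnun}.

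For the inductive step, since $T_n$ and $T_{n-2}$ both obey the rescaled three-term recurrence, so does their difference: for $n \geq 4$,
\[ T_n - T_{n-2} = x(T_{n-1} - T_{n-3}) - (N-2)(T_{n-2} - T_{n-4}). \]
The inductive hypothesis identifies the right-hand side with $x P_{n-1} - (N-2)P_{n-2} = P_n$. The case $n = 3$ is handled separately by the same calculation together with the already-checked identities at lower indices.

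No serious obstacle is anticipated; the argument is simply a comparison of two linear recurrences with matching initial data. As an alternative route, the same conclusion can be drawn in one stroke from generating functions: the classical identity $\sum_n U_n(y) z^n = (1 - 2yz + z^2)^{-1}$ rescales to $\sum_n T_n z^n = (1 - xz + (N-2)z^2)^{-1}$, and a short check from the defining recurrence and initial values of $P_n$ shows $\sum_n P_n z^n = (1 - z^2)(1 - xz + (N-2)z^2)^{-1}$, which coincides with $\sum_n (T_n - T_{n-2}) z^n$.
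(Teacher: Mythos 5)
Your proof is correct and follows essentially the same route as the paper's: verify the identity for $n=0,1,2$ directly, then observe that both sides satisfy the recurrence $f_n = x f_{n-1} - (N-2) f_{n-2}$, which follows from the Chebyshev relation $U_n(y) = 2yU_{n-1}(y) - U_{n-2}(y)$ after the substitution $y = x/(2\sqrt{N-2})$. Your extra care with the $n=3$ step and the generating-function alternative are fine but not needed beyond what the paper does.
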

\begin{proof} 
For $n=0,1,2$ the identity (\ref{eq:pnun}) follows directly from (\ref{eq:defpn}), (\ref{eq:defun}).
Next, (\ref{eq:defun}) implies (cf.\ \cite{Sn}) that
\begin{equation}\label{eq:un.rec}
U_n(y) = 2y U_{n-1}(y) - U_{n-2}(y)~, \quad n = 2,3,\cdots~.
\end{equation}
Taking $y = x/(2\sqrt{N-2})$, we see that the right-hand side of (\ref{eq:pnun}) satisfies the
same recurrent relation as the left-hand side.
\end{proof}

\begin{cl}\label{cl} For any Hermitian $N \times N$ matrix $A$ with zeros on the diagonal
and other entries on the unit circle,
\begin{equation}\label{eq:pnnbt}
P_n(A)_{u_0 u_n}
    = \sum_{p_n} A_{u_0 u_1} A_{u_1 u_2} \cdots A_{u_{n-1}u_n}~,
\end{equation}
where the sum is over all paths $p_n = u_0 u_1 \cdots u_n$ such that
\begin{description}
\item[(a)] $u_j \neq u_{j-1}$ for $j = 1, \cdots, n$;
\item[(b)] $u_j \neq u_{j-2}$ for $j = 2, \cdots, n$ (the {\em non-backtracking} condition).
\end{description}
\end{cl}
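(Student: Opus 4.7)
The plan is to prove the identity by induction on $n$, exploiting the three-term recurrence that defines $P_n$. The cases $n=0,1$ are immediate: $P_0(A)=I$ matches the unique empty path, and $P_1(A)=A$ matches single-edge paths, which are vacuously non-backtracking. For $n=2$ I will check directly that $(A^2)_{u_0 u_2}=\sum_{u_1}A_{u_0 u_1}A_{u_1 u_2}$: when $u_0\neq u_2$ no length-two backtracking path exists and the correction $-(N-1)\delta_{u_0 u_2}$ vanishes; when $u_0=u_2$, Hermiticity and the unit-modulus assumption give $|A_{u_0 u_1}|^2=1$ for each of the $N-1$ choices $u_1\neq u_0$, precisely cancelling $-(N-1)$, consistent with the absence of non-backtracking closed length-two paths.

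For the inductive step with $n\geq 3$, assume the claim for $P_{n-1}$ and $P_{n-2}$. Expanding
\[ (A P_{n-1}(A))_{u_0 u_n} = \sum_{u_1 \neq u_0} A_{u_0 u_1}\, P_{n-1}(A)_{u_1 u_n} \]
(the restriction $u_1\neq u_0$ coming from the zero diagonal) and invoking the inductive hypothesis for $P_{n-1}$, this double sum runs over all paths $u_0 u_1 u_2\cdots u_n$ with $u_1\neq u_0$ and $u_1 u_2\cdots u_n$ non-backtracking. Such a path satisfies both (a) and (b) in full precisely when $u_2\neq u_0$, so the sum splits naturally into a ``good'' part (the desired sum over non-backtracking $u_0\cdots u_n$) plus a ``bad'' part on which $u_2=u_0$.

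On the bad part, fixing $u_2=u_0$ forces $u_0 u_3 u_4\cdots u_n$ to be non-backtracking of length $n-2$, subject only to the extra constraint $u_1\neq u_3$. Because $A$ is Hermitian with unit-modulus off-diagonal entries, $A_{u_0 u_1} A_{u_1 u_0}=|A_{u_0 u_1}|^2=1$ for every $u_1\neq u_0$; and since any non-backtracking tail automatically satisfies $u_3\neq u_0$, the number of admissible $u_1\notin\{u_0, u_3\}$ is exactly $N-2$. The inductive hypothesis for $P_{n-2}$ identifies the remaining sum over tails as $P_{n-2}(A)_{u_0 u_n}$, so the bad contribution is exactly $(N-2)\,P_{n-2}(A)_{u_0 u_n}$—precisely what is subtracted in the recurrence. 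The good part then yields the claim.

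The main subtlety lies in the inductive step: noting that appending a single edge to a non-backtracking path can violate non-backtracking in exactly one way (namely $u_2=u_0$), and verifying that the count of admissible intermediate vertices $u_1$ is exactly $N-2$—matching the coefficient in the recurrence. This is the structural reason the Chebyshev-type polynomials $P_n$ encode non-backtracking walks on the complete graph with loops removed. The separate treatment of $n=2$, where the coefficient is $N-1$ rather than $N-2$, reflects the boundary effect that the backtracking vertex has no forbidden successor $u_3$ to avoid at that length.
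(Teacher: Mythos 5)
Your proof is correct and follows essentially the same route as the paper's: induction on $n$ via the three-term recurrence, identifying the backtracking contribution ($u_2=u_0$, with exactly $N-2$ admissible intermediate vertices since $A_{u_0u_1}A_{u_1u_0}=1$) with the subtracted term $(N-2)P_{n-2}(A)$. You merely spell out the combinatorial verification that the paper leaves implicit, and your explicit treatment of the $n=2$ boundary case (coefficient $N-1$) is a welcome point of care.
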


\begin{proof} For $n = 0, 1$ the identity (\ref{eq:pnnbt}) is trivial. For
$n \geq 2$ observe that
\begin{equation}
\label{eq:pn.rec} P_n(A) = P_{n-1}(A) A - (N-2)P_{n-2}(A)
\end{equation}
according to (\ref{eq:un.rec}), and on the other hand
\[ A_{uv} A_{vu} =
    \begin{cases}
        1, &u \neq v \\
        0, &u = v
    \end{cases}
\]
and hence the right-hand side of (\ref{eq:pnnbt}) also satisfies (\ref{eq:pn.rec}).
\end{proof}

By Claim~\ref{cl}, the expectation $\EE \tr P_n(A)$ is equal to the number of
paths $p_n = u_0 u_1 \cdots u_n$ that satisfy the conditions (a),(b) (above)
and (c),(d$_\beta^1$) (below):
\begin{description}
\item[(c)] $u_n = u_0$;
\item[(d$_1^1$)] for any $u \neq v$,
\[ \# \left\{ j \, | \, u_j = u, \, u_{j+1} = v \right\}
    \equiv \# \left\{ j \, | \, u_j = v, \, u_{j+1} = u \right\} \mod 2~; \]
\item[(d$_2^1$)] for any $u \neq v$,
\[ \# \left\{ j \, | \, u_j = u, \, u_{j+1} = v \right\}
    = \# \left\{ j \, | \, u_j = v, \, u_{j+1} = u \right\}~. \]

\end{description}

In particular, $\EE \tr P_{2n+1}(A) = 0$, therefore we shall only study
\begin{equation}\label{eq:s1b}
\Sigma_\beta^1 = \Sigma_\beta^1(2n) = \EE \tr P_{2n}(A)~.
\end{equation}

Let $p_{2n} = u_0 u_1 \cdots u_{2n}$ be a path satisfying (a), (b), (c), (d$_\beta^1$). Consider a
directed multigraph $G = (V, E_\text{dir})$, where $V \subset \{1,\cdots,N\}$ is the set of all
vertices $u_j$, and $E_\text{dir}$ is the set of edges $(u_{j-1},u_j)$ (with multiplicities).
A {\em matching} of $p_{2n}$ is a matching (= involution without fixed points) of
$\{0,1,\cdots,2n-1\}$, so that
\begin{itemize}
\item for $\beta = 1$, every edge $(u,v)$ is matched either to a coincident edge $(u,v)$ or to $(v, u)$;
\item for $\beta = 2$, an edge $(u,v)$ is matched to $(v,u)$.
\end{itemize}
A path together with a matching will be called a {\em matched path}.

Denote by ${\Sigma^{1m}_\beta}(2n)$ the number of matched paths (satisfying (a), (b), (c), (d$_\beta^1$)),
and denote by ${\Sigma_\beta}(2n)$ the number of paths satisfying (a), (b), (c) and the stronger condition
(d$_\beta$):
\begin{description}
\item[(d$_1$)] for any $u \neq v$,
\[ \# \left\{ j \, | \, u_j = u, \, u_{j+1} = v \right\}
    + \# \left\{ j \, | \, u_j = v, \, u_{j+1} = u \right\} \in \{0, 2\}~. \]
\item[(d$_2$)] for any $u \neq v$,
\[ \# \left\{ j \, | \, u_j = u, \, u_{j+1} = v \right\}
    = \# \left\{ j \, | \, u_j = v, \, u_{j+1} = u \right\} \in \{0, 1\}~. \]
\end{description}
Obviously,
\begin{equation}\label{eq:obv}
\Sigma_\beta(2n) \leq \Sigma_\beta^1(2n) \leq \Sigma_\beta^{1m}(2n)~.
\end{equation}

Our next goal is to study the asymptotics of $\Sigma^{1m}_\beta(2n)$. In particular, we shall prove that
\[ \Sigma_\beta^{1m}(2n) \leq \Sigma_\beta(2n) (1 + o(1)) \]
as long as $n = o(N^{1/2})$.

Let us introduce some more graph-theoretical notation.

\begin{dfn}\label{def:diag} Let $\beta \in \{1,2\}$.\hfill
\begin{itemize}
\item A {\em diagram} of type $\beta$ is an (undirected) multigraph $\bar{G} = (\bar{V}, \bar{E})$,
together with a circuit $\bar{p} = \bar{u}_0 \bar{u}_1 \cdots \bar{u}_0$ on
$\bar{G}$, such that
\begin{itemize}
\item $\bar{p}$ is  {\em non-backtracking} (meaning that no edge is followed by
its reverse, unless the edge is $\bar{u}\bar{u}$ and $\beta = 1$);
\item For every $(\bar{u}, \bar{v}) \in \bar{E}$,
\[\begin{split}
\# \left\{ j \, | \, \bar{u}_j = \bar{u}, \, \bar{u}_{j+1} = \bar{v} \right\}
    + \# \left\{ j \, | \, \bar{u}_j = \bar{v}, \, \bar{u}_{j+1} = \bar{u} \right\} = 2
    \quad (\beta = 1)~, \\
\# \left\{ j \, | \, \bar{u}_j = \bar{u}, \, \bar{u}_{j+1} = \bar{v} \right\}
    = \# \left\{ j \, | \, \bar{u}_j = \bar{v}, \, \bar{u}_{j+1} = \bar{u} \right\} = 1
    \quad (\beta = 2)~;
\end{split}\]
\item the degree of $\bar{u}_0$ in $\bar{G}$ is 1; the degrees of all the other vertices
are equal to 3.
\end{itemize}
\item A {\em weighted diagram} is a diagram $\bar{G}$ together with a weight function
$\bar{w}: \bar{E} \to \{-1,0,1,2,\cdots\}$.
\end{itemize}
\end{dfn}
\vspace{2mm}

Let us construct a mapping from the collection of matched paths satisfying (a), (b), (c),
(d$_\beta^1$) into the collection of weighted diagrams (of type $\beta$.)

\vspace{2mm} \noindent {\bf\em (i)} Start with the multigraph
$G = G(p_{2n}) = (V, E_\text{dir})$ corresponding to the path $p_{2n}$:
\[ V = \{ u \, \mid \, \exists j, \, u_j = u \}~, \,\,
   E_\text{dir} = \{ (u_j, u_{j+1}) \}~, \]
and unite each pair of matched edges into a single undirected edge.

\vspace{1mm} \noindent {\bf\em (ii)} If the degree of $u_0$ is greater than 1, add a vertex $r$
connected to $u_0$, and replace $p_{2n}$ with $ru_0u_1 \cdots u_0 r$. Otherwise set $r = u_0$.

\vspace{1mm} \noindent {\bf\em (iii)} For every vertex $u \neq r$ of degree $d > 3$, replace $u$
with $\leq d - 2$ vertices of degree $\leq 3$ using the inductive procedure illustrated
in Figure~\ref{fig:split}.

\vspace{1mm} \noindent {\bf\em (iv)} Erase all the vertices of degree 2.

\vspace{1mm} \noindent {\bf\em (v)} Set
\[ \bar{w}(\bar{e}) = \begin{cases}
    \text{the number of erased vertices on $\bar{e}$} \\
    -1, \quad \text{if $\bar{e}$ was created at {\bf\em(ii) - (iii)}}~.
\end{cases}\]

\begin{figure}[htp]
\centering
\includegraphics[totalheight=0.2\textheight]{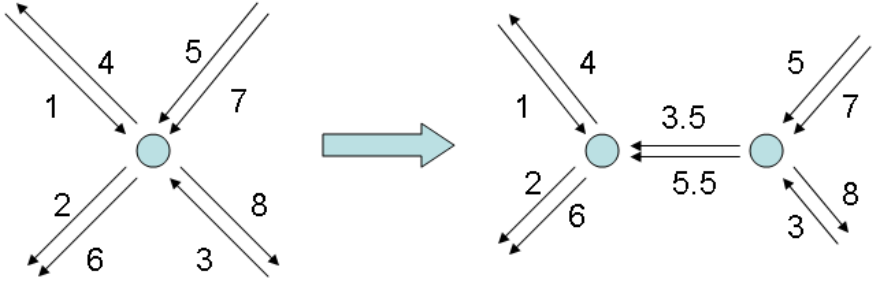}
\caption[Fig. 1]{Splitting a vertex of high degree ($\beta = 1$)}\label{fig:split}
\end{figure}

\noindent The above construction yields
\begin{cl}\label{cl_vert} There are at most $N^{\# \bar{V} + \sum_{\bar{e}} \bar{w}(\bar{e})}$
matched paths corresponding to a weighted diagram $(\bar{G}, \bar{p}, \bar{w})$.
If $\bar{w}(\bar{e}) \geq 1$ for every $\bar{e} \in \bar{E}$, there are
exactly
\begin{equation}
N \big(N-1\big) \cdots \big(N - (\# \bar{V} + \sum _{\bar{e}} \bar{w}(\bar{e})) + 1\big)
\end{equation}
such paths. In particular, if $\bar{w}(\bar{e}) \geq 1$ for every $\bar{e}\in \bar{E}$,
and if
\[ \# \bar{V} + \sum_{\bar{e}} \bar{w}(\bar{e}) = o(N^{1/2})~,\]
the number of matched paths and the number of paths (without a matching) are both
\[ N^{\# \bar{V} + \sum_{\bar{e}} \bar{w}(\bar{e})} \, (1 - o(1))~.\]
\end{cl}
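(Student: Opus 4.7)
Fix a weighted diagram $(\bar G, \bar p, \bar w)$. My plan is to count the matched paths $(p_{2n},\text{matching})$ satisfying (a), (b), (c), (d$_\beta^1$) whose image under the construction (i)--(v) equals $(\bar G, \bar p, \bar w)$, by reconstructing the path from a labelling of an enlarged graph. The construction (i)--(v) can be inverted up to the choice of labels: for each edge $\bar e$ with $\bar w(\bar e) \ge 0$, I subdivide $\bar e$ by inserting $\bar w(\bar e)$ new degree-$2$ vertices, and for each edge $\bar e$ with $\bar w(\bar e) = -1$, I \emph{identify} its two endpoints (inverting the vertex-splitting of step (iii) or the root-insertion of step (ii)). A matched path with image $(\bar G, \bar p, \bar w)$ is then equivalent to a labelling of the vertices of this enlarged multigraph by elements of $\{1,\ldots,N\}$, subject only to the constraint that endpoints of a $\bar w = -1$ edge receive the same label.

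For the upper bound I simply count the independent labels. The number of vertices of the enlarged multigraph is $\#\bar V + \sum_{\bar e : \bar w(\bar e) \ge 0}\bar w(\bar e)$, and each $\bar w = -1$ edge imposes exactly one identification. Since each such edge contributes $-1$ to $\sum_{\bar e}\bar w(\bar e)$, the number of \emph{free} labels equals
\[
k := \#\bar V + \sum_{\bar e \in \bar E}\bar w(\bar e),
\]
and each label is chosen from $\{1,\ldots,N\}$, giving at most $N^k$ matched paths.

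For the equality in the case $\bar w(\bar e) \ge 1$ for every $\bar e$, the crux is to show that every valid labelling is automatically injective, so the count is exactly $N(N-1)\cdots(N-k+1)$. In this regime no $\bar w = -1$ edges arise and hence no forced identifications. I would argue that any collision of labels must produce either (i) an additional visit to a common vertex that raises its degree in the underlying multigraph $G(p_{2n})$ above what the diagram prescribes (contradicting the bookkeeping of (d$_\beta^1$) together with the degree $\le 3$ constraint enforced in steps (ii)--(iv)), or (ii) a configuration with two consecutive edges coinciding, contradicting non-backtracking (b). A short case analysis separating collisions between two vertices of $\bar V$, between two inserted degree-$2$ vertices on a common edge, and across these two families, should close this step.

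The asymptotic statement is then a Taylor-type estimate:
\[
N(N-1)\cdots(N-k+1) = N^k \prod_{j=0}^{k-1}\Bigl(1-\frac{j}{N}\Bigr) = N^k\bigl(1 - O(k^2/N)\bigr) = N^k(1-o(1))
\]
whenever $k = o(N^{1/2})$, which handles the matched count. To pass to unmatched paths, note that once $\bar w(\bar e) \ge 1$ for every $\bar e$ each diagram edge corresponds to exactly two directed traversals of the path; for $\beta = 2$ these must form an opposite pair, and for $\beta = 1$ each pair of parallel traversals admits a unique matching compatible with (d$_1^1$). Hence the matched and unmatched counts coincide in this regime. I expect the main obstacle to be the injectivity step: one must carefully exploit non-backtracking and the matching condition to rule out every possible label collision once all weights are strictly positive, uniformly in $\beta \in \{1,2\}$.
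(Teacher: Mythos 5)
Your proposal is correct and is precisely the argument the paper leaves implicit: Claim~\ref{cl_vert} is stated there with no separate proof, as an immediate consequence of the construction {\bf\em (i)}--{\bf\em (v)}, and your count of free labels on the subdivided (and, for weight $-1$, identified) multigraph is exactly that bookkeeping. The injectivity step you defer closes the way you expect, and most cleanly by arguing about the diagram rather than about the path conditions: if two distinct vertices of the subdivided diagram received the same label, that label would have degree at least $4$ in the multigraph $G(p_{2n})$ (or the root would have degree greater than $1$), so steps {\bf\em (ii)}--{\bf\em (iii)} would split it and create an edge of weight $-1$; hence such a path lies in the preimage of a \emph{different} weighted diagram, and the preimage of a diagram with all weights $\geq 1$ consists exactly of the injective labellings, yielding the falling factorial.
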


\section{Counting diagrams}\label{s:bern.2}

Let us present an automaton which constructs all possible diagrams. Consider first the
case $\beta = 2$.

\vspace{2mm}\noindent{\em States}: $(t; \ell_1, \cdots, \ell_k)$, where $t,k \geq 0$ and
$\ell_j > 0$; initial state: $t = k = 0$. We can visualise the state of the automaton
as a ``thread'' made of $t$ pieces, and $k$ ``loops'' (the $j$-th loop is made of $\ell_j$
pieces).

\vspace{2mm}\noindent There are 2 {\em transitions} for $\beta = 2$:
\begin{description}
\item[1.] ({\em ``creation'' of a new loop}, see Figure~\ref{fig:c}):
\[ t \longleftarrow t' \leq t + 1~; \quad
\ell_1,\cdots,\ell_k \longleftarrow \ell_1,\cdots,\ell_k,\ell_{k+1}~, \]
where $\ell_{k+1} \leq t - t' + 2$.
\begin{figure}[h]
\vspace{2.2cm}
\setlength{\unitlength}{1cm}
\begin{pspicture}(-1,0)
\pscurve(.2, 0)(.3, .7)(.4, 1.5)(1.15, 1.8)
\psarcn(3, .5){.5}{0}{360}
\psarcn(3.7, 1.7){.5}{0}{360}
\psset{linestyle=dotted}
\pscurve(1.15,1.8)(1.5, 1.8)(1.7,1.8)
\psarcn(2.3, 1.7){.5}{180}{185}
\pscurve(1.85,1.65)(1.5,1.65)(1.15,1.65)(.55, 1.5)
\psset{linestyle=solid}
\psline[linewidth=3mm](5,1)(6.5,1)
\pscurve(7.2, 0)(7.35, .7)(7.4, 1.5)
\psarcn(10, .5){.5}{0}{360}
\psarcn(10.7, 1.7){.5}{0}{360}
\psarcn(9.3, 1.7){.5}{180}{180}
\end{pspicture}
\caption[Fig. 2]{Transition 1.}\label{fig:c}
\end{figure}
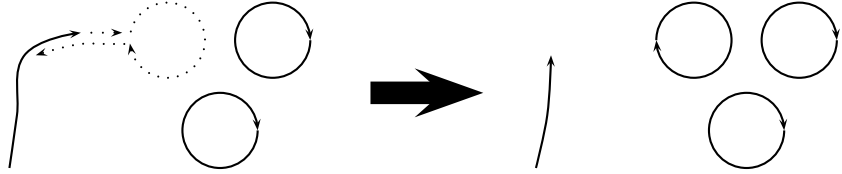

\item[2.] ({\em ``annihilation'' of the $j$-th loop}, see Figure~\ref{fig:a}):
\[ t \longleftarrow t' \leq t + \ell_j + 2~; \quad
 \ell_1,\cdots,\ell_k \longleftarrow \ell_1,\cdots,\ell_{j-1},\ell_{j+1},\cdots,\ell_k~.\]
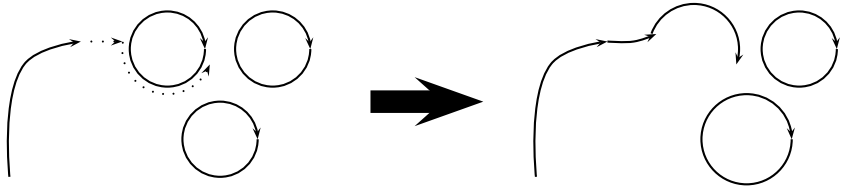
\begin{figure}[h]
\vspace{2cm}
\setlength{\unitlength}{1cm}
\begin{pspicture}(-1,0)
\pscurve(.2, 0)(.4, 1.5)(1.15, 1.8)
\psarcn(3, .5){.5}{0}{360}
\psarcn(3.7, 1.7){.5}{0}{360}
\psarcn(2.3, 1.7){.5}{0}{360}
\psset{linestyle=dotted}
\pscurve(1.15,1.8)(1.5, 1.8)(1.7,1.8)
\psarc(2.3, 1.7){.6}{160}{-20}
\psset{linestyle=solid}
\psline[linewidth=3mm](5,1)(6.5,1)
\pscurve(7.2, 0)(7.4, 1.5)(8.15, 1.8)
\psarcn(10, .5){.6}{0}{360}
\psarcn(10.7, 1.7){.5}{0}{360}
\pscurve(8.15,1.8)(8.5, 1.8)(8.8,1.9)
\psarcn(9.3, 1.7){.6}{160}{-20}
\end{pspicture}
\caption[Fig. 2]{Transition 2.\ for $\beta = 2$}\label{fig:a}
\end{figure}
\end{description}

We impose the restriction $t>0$ all along the way, and demand that after some (even)
number of steps $s = 2g$ the automaton return to the original state and stop.

\begin{cl}\label{cl:diag2} Every diagram (corresponding to $\beta = 2$) is generated by the automaton.
If the automaton stops after $s = 2g$ steps, the diagram has $\# \bar{E} = 6g-1 = 3s-1$ edges
and $\# \bar{V} = 4g = 2s$ vertices.
\end{cl}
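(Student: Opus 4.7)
The plan is to split the claim into two parts. First, the edge and vertex counts will follow from Euler's formula and a handshake identity once I identify $s$ with twice the genus. Second, every diagram arises from an automaton run via an explicit procedure that traces the circuit $\bar p$ and records, at each topological event, whether a creation or annihilation takes place.

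For the counts, write $V = \#\bar V$ and $E = \#\bar E$. In a $\beta = 2$ diagram each edge appears in $\bar p$ exactly once in each direction, so $\bar p$ is the boundary of the unique face of the ribbon graph associated to $(\bar G, \bar p)$; in particular $F = 1$. Combining Euler's formula $V - E + F = 2 - 2g$ with the handshake identity $2 E = 1 + 3(V - 1)$ coming from the degree constraints of Definition~\ref{def:diag} yields $V = 4g$ and $E = 6g - 1$. That $s$ is even (allowing the notation $s = 2g$) is automatic because the number of loops changes by $\pm 1$ at each step of the automaton and must return to $0$; the coincidence of this $g$ with the topological genus is what remains to be verified.

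To produce an automaton run from a given diagram, I would track the partial traversal of $\bar p$. Let $H_j$ consist of those edges of $\bar G$ that have been traversed exactly once in the prefix $\bar u_0 \bar u_1 \cdots \bar u_j$. The key invariant I would establish by induction on $j$ is that $H_j$ always decomposes as a thread of length $t$ emanating from $\bar u_0$, together with $k$ further cycles (the loops) of lengths $\ell_1, \ldots, \ell_k$ attached along the thread. The automaton state at time $j$ is read off as $(t; \ell_1, \ldots, \ell_k)$, and a transition is triggered precisely when the topology of $H_j$ changes: either the circuit returns to a previously visited vertex and produces a new cycle of half-used edges (creation) or it closes out the last edge of an existing such cycle (annihilation). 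The inequalities $t' \le t + 1$ and $t' \le t + \ell_j + 2$ encode the possible ways the thread length may change when pieces of it get absorbed into, or released from, a loop. Since $H_0$ and $H_L$ are both empty, every loop created is eventually annihilated, so the total number of transitions is twice the number of handles, giving $s = 2g$.

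The main technical obstacle is verifying the structural invariant: that $H_j$ is always a thread plus disjoint loops, with no finer branching. A priori one might fear that revisiting a vertex could create a more tangled substructure than a single extra loop, but this is ruled out by the degree-$3$ constraint (leaving no room for further branches once incident edges have been partially used) together with the exactly-once-per-direction condition and the non-backtracking rule. Once this invariant is secured, the remainder is bookkeeping: the two tabulated transition types exhaust the possible topological changes in $H_j$, the numerical parameters $t'$ and $\ell$ range over exactly the intervals dictated by the automaton, and the edge and vertex counts from Euler's formula yield the claimed values $\#\bar E = 3s - 1$ and $\#\bar V = 2s$.
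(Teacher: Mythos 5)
The paper's own proof of the counting part is a one-line per-step tally: the first (creation) step contributes $3$ edges and $3$ vertices, the final (annihilation) step contributes $2$ edges and $1$ vertex, and each of the remaining $s-2$ steps contributes $3$ edges and $2$ vertices, whence $\#\bar{E}=3+2+3(s-2)=3s-1$ and $\#\bar{V}=3+1+2(s-2)=2s$. Your handshake identity $2\#\bar{E}=1+3(\#\bar{V}-1)$ is correct, and your structural invariant for $H_j$ (thread plus disjoint loops, enforced by the degree-$3$ constraint) is a reasonable --- indeed more explicit than the paper's --- justification that every diagram is generated by the automaton.

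The gap is in the counting. Euler's formula expresses $\#\bar{V}$ and $\#\bar{E}$ in terms of the \emph{topological} genus of the ribbon graph, whereas the claim is about the number of automaton steps $s$. You explicitly flag that ``the coincidence of this $g$ with the topological genus is what remains to be verified,'' and then never verify it: your closing sentence, ``the total number of transitions is twice the number of handles,'' asserts precisely this identification without proof. What follows from $H_0=H_L=\varnothing$ is only that creations and annihilations are equinumerous, i.e.\ that $s$ is even; that each creation--annihilation pair adds exactly one handle is the nontrivial content, and establishing it is essentially no easier than the paper's direct count (the paper states the genus interpretation only as a separate remark, after the claim, and does not use it in the proof). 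As written, your computation yields $\#\bar{V}=4g_{\mathrm{top}}$ and $\#\bar{E}=6g_{\mathrm{top}}-1$ for the topological genus $g_{\mathrm{top}}$, which is not yet the statement about $s$. The simplest repair is to count the vertices (or edges) created at each automaton transition directly, and then, if desired, recover the other count from your handshake identity.
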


\begin{proof} It suffices to observe that the first (creation) step creates 3 edges and 3
vertices, the last (annihilation) step creates 2 edges and one vertex, and every other step
creates 3 edges and 2 vertices.
\end{proof}

\noindent Denote by $D_2(s)$ the number of diagrams corresponding to $s$ steps (of course,
$D_2(s) = 0$ for odd values of $s$.)

\vspace{2mm}\noindent
For $\beta = 1$, the transitions are slightly different. First, every time a loop is
annihilated (transition 2.), the automaton has to choose a direction in which the loop
is passed. That is, there are two possibilities: the one in Figure~\ref{fig:a}, and
the one in Figure~\ref{fig:a2}.

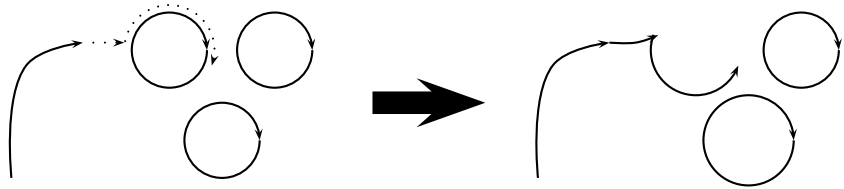
\begin{figure}[h]
\vspace{2.2cm}
\setlength{\unitlength}{1cm}
\begin{pspicture}(-1,0)
\pscurve(.2, 0)(.4, 1.5)(1.15, 1.8)
\psarcn(3, .5){.5}{0}{360}
\psarcn(3.7, 1.7){.5}{0}{360}
\psarcn(2.3, 1.7){.5}{0}{360}
\psset{linestyle=dotted}
\pscurve(1.15,1.8)(1.5, 1.8)(1.7,1.8)
\psarcn(2.3, 1.7){.6}{180}{-20}
\psset{linestyle=solid}
\psline[linewidth=3mm](5,1)(6.5,1)
\pscurve(7.2, 0)(7.4, 1.5)(8.15, 1.8)
\psarcn(10, .5){.6}{0}{360}
\psarcn(10.7, 1.7){.5}{0}{360}
\pscurve(8.15,1.8)(8.5, 1.8)(8.8,1.9)
\psarc(9.3, 1.7){.6}{160}{-20}
\end{pspicture}
\caption[Fig. 3]{Transition 2.\ (2$^\text{nd}$ possibility) for $\beta = 1$}\label{fig:a2}
\end{figure}

Also, we have a new transition
\begin{description}
\item[3.] {\em (``creation and annihilation'')}: $t \longleftarrow t' \leq t + 1$
(see Figure~\ref{fig:ca}.)
\end{description}

\begin{figure}[h]
\vspace{2cm}
\setlength{\unitlength}{1cm}
\begin{pspicture}(-1,0)
\pscurve(.2, 0)(.4, 1.5)(1.15, 1.8)
\psarcn(3, .5){.5}{0}{360}
\psarcn(3.7, 1.7){.5}{0}{360}
\psset{linestyle=dotted}
\pscurve(1.15,1.8)(1.5, 1.8)(1.7,1.8)
\psarcn(2.3, 1.7){.6}{180}{190}
\psset{arrows=-}
\psline(1.7,1.6)(1.8, 1.6)
\psset{arrows=->}
\psarcn(2.3, 1.7){.5}{0}{360}
\pscurve(1.85,1.65)(1.5,1.65)(1.15,1.65)(.55, 1.5)
\psset{linestyle=solid}
\psline[linewidth=3mm](5,1)(6.5,1)
\psline[linewidth=3mm](5,1)(6.5,1)
\pscurve(7.2, 0)(7.35, .7)(7.4, 1.5)
\psarcn(10, .5){.5}{0}{360}
\psarcn(10.7, 1.7){.5}{0}{360}
\end{pspicture}
\caption[Fig. 4]{Transition~3.\ for $\beta = 1$}\label{fig:ca}
\end{figure}
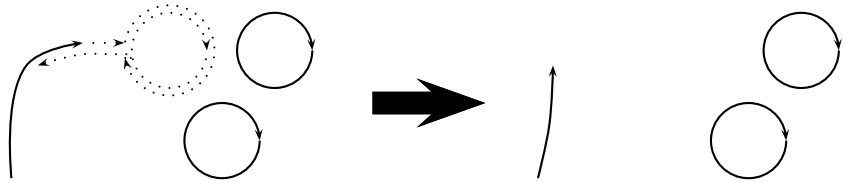

Now the number of steps $s$ can be written as $s = 2g + h$, where $g$ is the number of
steps of the first kind, and $h$ is the number of steps of the third kind. Similarly to
Claim~\ref{cl:diag1}, we have

\begin{cl}\label{cl:diag1} Every diagram (corresponding to $\beta = 1$) is generated by
the (new) automaton. If the automaton stops after $s = 2g + h$ steps (with $g,h$ as above),
the diagram has $\# \bar{E} = 6g+3h-1 = 3s-1$ edges and $\# \bar{V} = 4g+2h=2s$ vertices.
\end{cl}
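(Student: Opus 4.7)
The plan is to mirror the proof of Claim~\ref{cl:diag2}, handling the two additional features that arise for $\beta = 1$: the extra orientation choice in transition~2, and the new transition~3. The heart of the argument is once again a bookkeeping of how many vertices and edges each transition contributes to $\bar{G}$, together with the observation that reading off the circuit $\bar{p}$ uniquely determines a sequence of automaton moves.

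First I would verify the vertex/edge count for each kind of transition. For a \emph{generic} (i.e., not first, not last) step, a transition of kind~1 (Fig.~\ref{fig:c}) contributes 3 edges and 2 vertices; a transition of kind~2, in either of its two orientations (Figs.~\ref{fig:a} and \ref{fig:a2}), also contributes 3 edges and 2 vertices (the orientation choice does not affect the count, since the new ``dotted'' piece visits the same number of fresh vertices and uses the same number of fresh edges in either direction of the loop); and a transition of kind~3 (Fig.~\ref{fig:ca}), which creates and immediately annihilates a loop, contributes 3 edges and 2 vertices as well. The two exceptional cases are the very first and the very last step: the first step (necessarily of kind~1 or kind~3, since $k = t = 0$ initially) contributes one extra vertex, while the last step (necessarily of kind~2 or kind~3) contributes one fewer edge and one fewer vertex than the generic count. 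These first/last corrections cancel on vertices and leave a deficit of one edge.

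Assembling the totals for $s = 2g+h$ steps: vertices are $2s + 1 - 1 = 2s = 4g + 2h$, and edges are $3s + 0 - 1 = 3s - 1 = 6g + 3h - 1$, as asserted.

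It remains to argue that every diagram of type $\beta = 1$ actually arises. Given $(\bar G, \bar p, \bar w)$, traverse $\bar p$ starting at $\bar u_0$ and inductively define the automaton's state $(t; \ell_1, \ldots, \ell_k)$: the ``thread'' records the prefix of $\bar p$ whose reverse has not yet been traversed, and the ``loops'' record the closed sub-circuits that have been opened but not yet exited. A transition occurs at each vertex of degree~3: if the next edge has not been used before, the walk is opening either a new loop (kind~1) or a pendant structure that returns at the next vertex of degree~3 (kind~3); if it has been used before, the walk is closing a loop, and the choice of kind~2 orientation records whether the loop is traversed in the same or opposite sense as it was opened. The non-backtracking condition and the condition on edge multiplicities (Definition~\ref{def:diag}) guarantee that this reading is well-defined and that the state returns to $(0; \emptyset)$ precisely at the end, so the walk defines a legal run of the automaton producing $(\bar G, \bar p)$.

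The main obstacle is the third point: checking that the orientation-dependent transition~2 and the new transition~3 together cover all ways a circuit can close or traverse a pendant substructure in an undirected multigraph, without producing spurious duplicate generations. This is the only genuine new ingredient compared with the $\beta = 2$ argument; once it is in place, the vertex/edge count is essentially the same tally as in the proof of Claim~\ref{cl:diag2}.
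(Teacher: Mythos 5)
Your proof is correct and follows essentially the same route as the paper, which establishes the count by noting that the first step creates $3$ edges and $3$ vertices, the last step $2$ edges and $1$ vertex, and every intermediate step $3$ edges and $2$ vertices (the paper only writes this out for $\beta=2$ and invokes the analogy for $\beta=1$). Your additional discussion of surjectivity is more detail than the paper provides, but it is consistent with the intended argument.
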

\noindent Denote by $D_1(s)$ the number of diagrams corresponding to $s$ steps (and $\beta = 1$).

\vspace{2mm}\noindent
The following crude estimate will be of use:
\begin{prop}\label{prop:ndiag} For $\beta \in \{1,2\}$,
\[ (s/C)^s \leq D_\beta(s) \leq (Cs)^s~, \]
where $C > 1$ is a universal constant.
\end{prop}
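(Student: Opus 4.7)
I treat the lower and upper bounds of Proposition~\ref{prop:ndiag} separately.

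\textbf{Lower bound.} I exhibit an explicit family of at least $(s/C)^s$ distinct walks of the automaton of length~$s$; since each walk produces a diagram (Claims~\ref{cl:diag2} and~\ref{cl:diag1}), this yields the required bound. For $\beta = 2$ with $s = 2g$, consider walks whose first $g$ steps are Transition~1 moves that build up $t$ deterministically by choosing $t'_i = t_{i-1} + 1$ and $\ell^{(\mathrm{new})}_i = 1$, so that after phase~1 we are in the state $t = g$, $k = g$, $\ell_1 = \cdots = \ell_g = 1$. The next $g$ steps are Transition~2 moves. At annihilation step $i$ we may choose $j_i \in \{1, \ldots, g - i + 1\}$ and $t'_i \in \{1, \ldots, t_{i-1} + 3\}$ (with the constraint $t'_g = 0$ at the very end). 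By keeping $t_i$ of order $g$ for all but $O(1)$ steps, we obtain at least $c g^2$ valid $(j_i, t'_i)$-pairs at each such step; taking the product gives $\prod q_i \ge (g^2/C)^g \ge (s/C)^s$. The $\beta = 1$ case is analogous, with an extra factor $2^g$ coming from the direction choices in Transition~2.

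\textbf{Upper bound.} A case-by-case check of the three transitions shows that $L_i := t_i + \sum_{j=1}^{k_i} \ell_j$ increases by at most $2$ per step: for Transition~1, $L_i - L_{i-1} = (t'-t) + \ell^{(\mathrm{new})} \le (t'-t) + (t - t' + 2) = 2$, and similar verifications hold for Transitions~2 and~3. Hence $L_i, t_i, k_i \le 2s$ throughout the walk. At each step the total number of parameter choices $q_i$ (summed over transition types) is bounded by $C L_{i-1}^2$: Transition~1 contributes $(t_{i-1}+1)(t_{i-1}+2)/2$, Transition~2 contributes $k_{i-1}(t_{i-1}+2) + (L_{i-1} - t_{i-1})$, and Transition~3 (for $\beta=1$) only $t_{i-1}+1$. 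A direct AM--GM bound on $\sum L_i^2 \le C s^3$ therefore yields $\prod q_i \le (Cs^2)^s$, which is one factor of $s^s$ too large.

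The extra factor $s^s$ is absorbed by observing that the automaton-walk-to-diagram map is highly non-injective: the relative order in which loops are annihilated (encoded by the index $j$ chosen at each Transition~2 step) together with the relabeling of loops inside the state $(t;\ell_1,\ldots,\ell_k)$ can be permuted without altering the resulting diagram. A careful enumeration of these equivalences shows that each diagram is produced by at least $(s/C)^s$ distinct walks, so dividing gives $D_\beta(s) \le (Cs^2)^s / (s/C)^s = (Cs)^s$, as required.

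\textbf{Main obstacle.} The technically delicate point is the upper bound: the naive per-step count is quadratic in~$s$, so the direct application of AM--GM loses a full factor of $s^s$. Recovering this factor requires showing that the walks realizing a given diagram form a large equivalence class. Quantifying this multiplicity rigorously---and in particular ensuring that the $j$-parameter permutations genuinely yield the same underlying multigraph with its Eulerian circuit---is the step that will require the most careful combinatorial bookkeeping.
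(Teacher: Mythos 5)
Your lower bound is fine and is essentially the argument in the paper (the paper fixes the alternating transition sequence $1.2.1.2.\cdots$ and counts $\prod_{i=1}^{g}(4i-3)(2i-1)\geq (g/C)^{2g}$; your ``all creations, then all annihilations'' family works just as well). The upper bound, however, has a genuine gap, and it is exactly at the point you flag as the main obstacle. Your rescue of the lost factor $s^s$ rests on the claim that each diagram is produced by at least $(s/C)^s$ distinct automaton walks. This is not justified, and it is false: a diagram is not just a multigraph but a multigraph \emph{together with the circuit} $\bar p$, and the circuit records the entire history of the automaton --- the loop labels are canonical (loops are indexed by creation order along the circuit), so permuting the annihilation indices $j_i$ changes which open loop the circuit closes next and hence produces a genuinely different circuit, i.e.\ a different diagram. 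The walk-to-diagram map is therefore essentially injective. Note also the internal tension with your own lower bound: if each diagram were hit by $(s/C)^s$ walks, your family of $(s/C)^s$ walks would be consistent with only $O(1)$ distinct diagrams, and the lower bound would collapse. As it stands your argument only yields $D_\beta(s)\leq (Cs^2)^s=(Cs)^{2s}$, which is too weak for the application (it would degrade the error term $\exp(Cn^{3/2}/N^{1/2})$ in Theorem~\ref{th:herm''} to $\exp(Cn^{3}/N)$, which is not summable in the regime $m=\Theta(N^{2/3})$ used in the proof of Theorem~\ref{th:herm'}).

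The paper gets the linear-in-$s$ per-step count by a different factorization of the choices. First, the order of transition types is a Dyck-type sequence (the number of open loops is nonnegative and vanishes at the ends), so there are at most the Catalan number $\leq 4^g$ orderings. Second, set $m_i = 2 - (\text{increase of } L \text{ at step } i)$ where $L = t + \sum_j \ell_j$; then $m_i\geq 0$ and $\sum_i m_i = 4g$, so the vector $(m_i)$ can be chosen in at most $\binom{6g-1}{4g-1}\leq (3e)^{2g}$ ways. The key point is that once the transition type \emph{and} the value of $m_i$ are fixed, only $O(g)$ choices remain at step $i$: for a creation, fixing the total increment $(t'-t)+\ell_{k+1}$ leaves only the choice of $t'-t$, at most $t+2 = O(g)$ options; for an annihilation, the choice of the loop index $j$ (at most $k=O(g)$ options) determines $\ell_j$ and hence $t'$ from $m_i$. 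Multiplying gives $4^g(3e)^{2g}(6g)^{2g}\leq (Cs)^s$ with no multiplicity argument needed. If you want to keep your per-step accounting, you must condition on the increment of $L$ in this way rather than hoping for non-injectivity of the walk-to-diagram map.
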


\begin{proof} Let us consider for example the case $\beta = 2$ (the argument for $\beta = 1$
is similar.) Let $s = 2g$; the number of loops after $j$ steps is non-negative, and zero at
the beginning and at the end. Hence the number of ways to order the transitions of the two
types is exactly the Catalan number
\[ (2g)!/(g!(g+1)!) \leq 4^g~. \]
Denote by $m_i$ the number $2 - \left(\text{increase in $t + \sum \ell_j$}\right)$
at the $i$-th step. Then $m_i \geq 0$ and $m_1 + \cdots + m_{2g} = 4g$. Therefore the number
of ways to choose the numbers $m_i$ is at most
\[ \binom{6g-1}{4g-1} \leq (3e)^{2g}~.\]
The number of diagrams corresponding to a fixed order of transitions and fixed $m_i$
is at most $(6g)^{2g}$. This proves the upper bound.

To prove the lower bound, consider the fixed sequence of transitions $1.2.1.2.\cdots1.2.$,
and $m_i = 0$ ($1 \leq i < 2g$.) It is not hard to check that the number of diagrams
thus restricted is equal to
\[ \prod_{i=1}^{g} (4i-3)(2i-1) \geq (g/C)^{2g}~.\]
\end{proof}

\begin{rmk}\label{r:ndiag} Observe that
\[ D_1 (1) = 1, \quad D_2(1) = 0, \quad D_2(2) = 1 \]
(see Figure~\ref{fig:diag1}.) Therefore the upper bound in Proposition~\ref{prop:ndiag}
can be formally improved to $D_\beta(s) \leq C^{s-1}s^s$ (perhaps, with a different
constant $C > 0$).
\end{rmk}

\begin{figure}[h]
\vspace{1cm}
\setlength{\unitlength}{1cm}
\begin{pspicture}(-1,0)
\psline(.2,1)(2, 1)
\psarcn(2.8,1){.8}{180}{195}
\psset{arrows=-}
\psline(2,.8)(2.1,.92)
\psarcn(2.8,1){.7}{185}{195}
\psline(2,.92)(2.1,.8)
\psset{arrows=->}
\psline(2,.9)(.2,.9)
\psline(7.2,1)(9,1)
\psarcn(9.8,1){.8}{180}{185}
\psline(9,.9)(8,.9)
\pscurve(8,.9)(8.5,.7)(9,.5)(9.4,.5)
\psarc(9.8,1){.65}{230}{218}
\pscurve(9.27,.6)(8.9, .6)(8,.8)(7.8,.9)(7.2,.9)
\end{pspicture}
\caption[Fig. 4.5]{The simplest diagrams: $s = \beta = 1$, $g = 0, h = 1$ (left),
    $s = \beta = 2$, $g = 1$ (right)}\label{fig:diag1}
\end{figure}
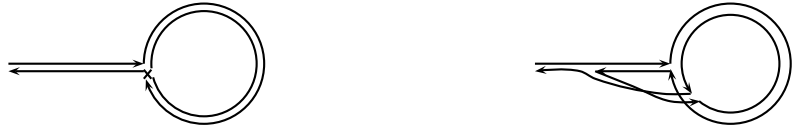

\noindent The preceding considerations allow to prove (a more precise form of)
Theorem~\ref{th:herm''} for the special case (\ref{ex:bern}), $k=1$.
\begin{prop}\label{prop:herm''.bern} Let $\beta \in \{1,2\}$, and let the random matrix $A$ be as
in (\ref{ex:bern}). Then
\begin{enumerate}
\item $\EE \tr U_{2n+1}\left(A/(2\sqrt{N-2})\right) = 0$;
\item $\EE \tr U_{2n}\left(A/(2\sqrt{N-2})\right) \leq n \exp(Cn^{3/2}/N^{1/2})$;
\item for $n = o(\sqrt{N})$,
\[ \EE \tr U_{2n}\left(A/(2\sqrt{N-2})\right)
    = (1 + o(1)) \, n \sum_{s \geq 1} (n^3/N)^{s-1} \frac{D_\beta(s)}{(3s-2)!}~.\]
\end{enumerate}
\end{prop}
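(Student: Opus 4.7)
\emph{Part 1} is immediate from the symmetry built into (\ref{ex:bern}): $A$ and $-A$ have the same distribution, while $U_{2n+1}$ is an odd polynomial, so $\EE \tr U_{2n+1}(A/(2\sqrt{N-2}))$ equals its own negation and hence vanishes.

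For \emph{parts 2 and 3}, the plan is to translate $U_{2n}$ into a sum of $P_{2j}$ and then apply the diagram machinery. Iterating the identity $U_n = (N-2)^{-n/2} P_n + (N-2)^{-1} U_{n-2}$ (a rearrangement of (\ref{eq:pnun})) and using $U_0 = P_0 = 1$ gives
\begin{equation*}
U_{2n}\!\left(\frac{A}{2\sqrt{N-2}}\right) = (N-2)^{-n} \sum_{j=0}^{n} P_{2j}(A)~,
\end{equation*}
so, in the notation of (\ref{eq:s1b}),
\begin{equation*}
\EE \tr U_{2n}\!\left(\frac{A}{2\sqrt{N-2}}\right) = (N-2)^{-n} \sum_{j=0}^{n} \Sigma_\beta^1(2j)~.
\end{equation*}

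I would bound each $\Sigma_\beta^1(2j)$ by $\Sigma_\beta^{1m}(2j)$ (using (\ref{eq:obv})) and group matched walks by their weighted diagrams. A diagram with $s$ transitions has $\#\bar V = 2s$ and $\#\bar E = 3s-1$ (Claims \ref{cl:diag2}, \ref{cl:diag1}); a weighting $\bar w : \bar E \to \{0,1,2,\dots\}$ with $\sum_{\bar e} \bar w(\bar e) = j - (3s-1)$ corresponds to a walk of length $2j$. By Claim \ref{cl_vert}, each weighted diagram produces at most $N^{\#\bar V + \sum \bar w} = N^{j-s+1}$ matched walks, with equality up to a $(1+o(1))$ factor when all weights are $\geq 1$ (the falling factorial $N(N-1)\cdots$ is controlled by $n = o(\sqrt N)$). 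The number of nonnegative weightings of the $3s-1$ slots with sum $j-(3s-1)$ is $\binom{j-1}{3s-2}$. Combining,
\begin{equation*}
\Sigma_\beta^{1m}(2j) \leq \sum_{s \geq 1} D_\beta(s) \binom{j-1}{3s-2} N^{j-s+1}~;
\end{equation*}
after multiplying by $(N-2)^{-n}$, the sum over $j \leq n$ is geometrically dominated by $j = n$, contributing $(1 + O(1/N))\binom{n-1}{3s-2} N^{-s+1}$ from each $s$. Since $\binom{n-1}{3s-2} = (1+o(1))\, n^{3s-2}/(3s-2)!$ when $n = o(\sqrt N)$, the $s$th term equals $(1+o(1))\, n \cdot D_\beta(s)/(3s-2)! \cdot (n^3/N)^{s-1}$, which (once the bound is shown to be essentially tight) is precisely what part 3 claims. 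Part 2 follows by plugging $D_\beta(s) \leq (Cs)^s$ from Proposition \ref{prop:ndiag} into the same computation and invoking Stirling to bound $\sum_s (Cs)^s (n^3/N)^{s-1}/(3s-2)!$ by $\exp(Cn^{3/2}/N^{1/2})$.

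The main obstacle, for part 3, is showing that the upper bound on $\Sigma_\beta^{1m}(2j)$ is tight up to $(1+o(1))$. I would need to prove that weighted diagrams with some weight $= 0$ (forcing distinct diagram edges to coincide in the multigraph), diagrams with some weight $= -1$ (virtual edges arising from the vertex-splitting steps (ii)–(iii) applied to high-degree vertices), and matched walks that violate (d$_\beta$), all contribute only a lower-order term. Each such irregularity forces an extra vertex coincidence, costing a factor of $N^{-1}$ in the count; under $n = o(\sqrt N)$ the accumulated combinatorial penalty is $O(n^2/N) = o(1)$. Making this book-keeping rigorous — in particular for large $s$, where vertex splittings become generic and the weight $-1$ edges proliferate — is the delicate part I would need to execute carefully.
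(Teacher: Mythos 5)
Your route is the paper's route: part 1 by symmetry, the telescoping $U_{2n}(A/(2\sqrt{N-2})) = (N-2)^{-n}\sum_{j=0}^{n} P_{2j}(A)$ coming from iterating Lemma~\ref{l:pnun}, the squeeze $\Sigma_\beta \le \Sigma^1_\beta \le \Sigma^{1m}_\beta$ of (\ref{eq:obv}), and the count of matched paths per weighted diagram via Claims~\ref{cl_vert}, \ref{cl:diag2}, \ref{cl:diag1} and Proposition~\ref{prop:ndiag}. One local error: your displayed bound on $\Sigma^{1m}_\beta(2j)$ is not actually an upper bound, because the weights produced at steps {\bf\em(ii)}--{\bf\em(iii)} of the construction take values in $\{-1,0,1,\dots\}$; the number of admissible weightings of the $3s-1$ edges with $\sum_{\bar e}\bar w(\bar e)=j-(3s-1)$ is $\binom{j+3s-2}{3s-2}$ (shift each weight by $+1$), not $\binom{j-1}{3s-2}$, which counts only nonnegative weightings. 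This is what the paper uses, and it is asymptotically harmless since both binomials are $\sim j^{3s-2}/(3s-2)!$ for $s\ll j^{1/2}$; part 2 then follows exactly as you say.

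The genuine gap is the tightness step for part 3, which you defer, and the mechanism you propose for it would not carry the argument. A weight-$0$ edge forces no vertex coincidence and costs no factor of $N^{-1}$; and the term-by-term asymptotic $\binom{n-1}{3s-2}\sim n^{3s-2}/(3s-2)!$ is simply false once $s\gtrsim \sqrt n$ (the left-hand side vanishes for $3s-2>n-1$), so the claimed equality of the two sides of your inequality cannot be checked summand by summand over all $s$. The paper closes the gap differently: choose $s_0$ with $\max(1,n^{3/2}/N^{1/2})\ll s_0\ll n^{1/2}$ and use $D_\beta(s)\le (Cs)^s$ to show that $s>s_0$ contributes a relative $o(1)$ to \emph{both} the upper and the lower bound; for $s\le s_0$, compare the count $\binom{n+3s-2}{3s-2}$ of all weightings with the count $\binom{n-3s}{3s-2}$ of weightings having every $\bar w(\bar e)\ge 1$ --- both are $(1+o(1))\,n^{3s-2}/(3s-2)!$ --- and observe that every path built from an all-positive weighting satisfies (d$_\beta$) and carries a unique matching, so that $\Sigma_\beta(2n)$ and $\Sigma^{1m}_\beta(2n)$ are squeezed together. (A separate, easier treatment is needed for $n\le n_0$ with $1\ll n_0\ll N^{1/3}$.) With the corrected binomial and this truncation-plus-comparison argument in place of the per-irregularity $N^{-1}$ accounting, your proof coincides with the paper's.
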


\begin{rmk}\label{r:phi}
For future use, denote
\[ \phi_{\beta}(n; N) = \frac{n}{4} \sum_{s \geq 1} ((n/2)^3/N)^{s-1} \frac{D_\beta(s)}{(3s-2)!} \]
(for any $n, N \in \RR_+$.) Then for $n \ll N^{1/2}$
\[ \EE \tr U_{n}(A/(2\sqrt{N-2})) = (1+ o(1)) \Big\{ \phi_\beta(n; N) + (-1)^n \phi_\beta(n; N) \Big\}~.\]
\end{rmk}

\begin{proof}[Proof of Proposition~\ref{prop:herm''.bern}]
The first statement is obvious.

\vspace{2mm}\noindent
According to (\ref{eq:s1b}),(\ref{eq:obv}),
\[ \EE P_{2n}(A) = \Sigma^1_\beta(2n) \leq \Sigma^{1m}_\beta(2n)~.\]
Every matched path corresponds to some weighted diagram with a certain number of
steps $1 \leq s \leq n$. For this diagram, $\# \bar{V} = 2s$ and $\# \bar{E} = 3s - 1$ by
Claims~\ref{cl:diag1},\ref{cl:diag2}. Also,
\[ \sum_{\bar{e}} \bar{w}(\bar{e}) = n - \# \bar{E} = n - 3s + 1~. \]
The number of ways to place the weights on the diagram is at most
\[ \binom{n+3s-2}{3s-2} \leq (n+3s-2)^{3s-2} / (3s-2)!~. \]
By Claim~\ref{cl_vert}, the number of ways to choose the vertices is at most
$N^{2s + n - 3s + 1} = N^{n-s+1}$. Hence by Proposition~\ref{prop:ndiag} (and Remark~\ref{r:ndiag})
\[\begin{split}
\Sigma^{1m}_\beta(2n)
    &\leq \sum_{1 \leq s \leq n} D_\beta(s) N^{n-s+1} \frac{(n+3s-2)^{3s-2}}{(3s-2)!} \\
    &\leq \sum_{1 \leq s \leq n} C^{s-1} s^s N^{n-s+1} \frac{(n+3s-2)^{3s-2}}{(3s-2)!} \\
    &\leq n N^n \sum_{s \geq 1} \frac{(C_1 n^3/N)^{s-1}}{(2(s-1))!}
    \leq n N^n \exp(C_2 n^{3/2} / N^{1/2})~.
\end{split}\]
Thus
\[ \EE \frac{P_{2n}(A)}{(N-2)^n}
    \leq n \exp(C_3 n/N + C_2 n^{3/2}/N^{1/2})
    \leq n \exp(C_4 n^{3/2}/N^{1/2})~.\]
Hence by Lemma~\ref{l:pnun}
\begin{multline*}
\EE U_{2n} \left( \frac{A}{2\sqrt{N-2}} \right) \\
    \leq \sum_{k\geq 0} \frac{(n-k) \exp(C_4 (n-k)^{3/2} / N^{1/2})}{(N-2)^{k}}
    \leq n \exp(C_5 n^{3/2} / N^{1/2})~.
\end{multline*}
This proves the second statement.

\vspace{2mm}\noindent
Fix $1 \ll n_0 \ll N^{1/3}$. Suppose $n \ll N^{1/2}$. If $n > n_0$, choose $s_0$ so that
\[ \max(1, n^{3/2}/N^{1/2}) \ll s_0 \ll n^{1/2}~. \]
Then by Proposition~\ref{prop:ndiag}
\begin{equation}\label{eq:up}
\begin{split}
\Sigma^{1m}_\beta(2n)
    &\leq \sum_{1 \leq s \leq n} D_\beta(s) N^{n-s+1} \frac{(n+3s-2)^{3s-2}}{(3s-2)!} \\
    &\leq \sum_{1 \leq s \leq s_0} D_\beta(s) N^{n-s+1} \frac{(n+3s-2)^{3s-2}}{(3s-2)!} \,\, (1 + o(1))\\
    &\leq \sum_{1 \leq s \leq s_0} D_\beta(s) N^{n-s+1} \frac{n^{3s-2}}{(3s-2)!} \,\, (1 + o(1)) \\
    &\leq \sum_{1 \leq s} D_\beta(s) N^{n-s+1} \frac{n^{3s-2}}{(3s-2)!} \,\, (1 + o(1))~. \\
\end{split}
\end{equation}
On the other hand, for every diagram corresponding to a certain $s \geq 1$, there are
\[ \binom{n-3s}{3s-2} \]
ways to place the weights so that $\bar{w}(\bar{e}) \geq 1$ for every $\bar{e} \in \bar{E}$;
for $s \ll n^{1/2}$,
\[ \binom{n-3s}{3s-2} = \frac{n^{3s-2}}{(3s-2)!} \, (1 + o(1))~. \]
If the weights are placed in this way, the number of ways to choose the vertices is
\[ N^{n-s+1} (1 + o(1))~,\]
according to the second part of Claim~\ref{cl_vert}. Every path thus constructed satisfies
the condition (d$_\beta$) and hence has a unique matching. Therefore
\begin{equation}\label{eq:low}
\begin{split}
\Sigma^{1m}_\beta(2n) \geq  \Sigma_\beta(2n)
    &\geq \sum_{1 \leq s \leq s_0} D_\beta(s) N^{n-s+1} \frac{n^{3s-2}}{(3s-2)!} \,\, (1 + o(1)) \\
    &\geq \sum_{1 \leq s} D_\beta(s) N^{n-s+1} \frac{n^{3s-2}}{(3s-2)!} \,\, (1 + o(1))~, \\
\end{split}
\end{equation}
where on the last step we have used Proposition~\ref{prop:ndiag} again. The inequalities
(\ref{eq:up}), (\ref{eq:low}) yield:
\[\Sigma^{1m}_\beta(2n) \sim  \Sigma_\beta(2n)
    \sim \sum_{1 \leq s} D_\beta(s) N^{n-s+1} \frac{n^{3s-2}}{(3s-2)!}~,\]
whence by (\ref{eq:s1b}), (\ref{eq:obv})
\begin{equation}\label{eq:pnasim}
\EE \tr P_{2n}(A) \sim \sum_{1 \leq s} D_\beta(s) N^{n-s+1} \frac{n^{3s-2}}{(3s-2)!}~.
\end{equation}

\vspace{2mm}\noindent
If $n \leq n_0$, a similar argument shows that
\[\begin{split}
\Sigma_\beta^{1m} (2n)
    &\sim \Sigma_\beta (2n)
    \sim D_\beta(\beta) N^{n-\beta+1} \frac{n^{3\beta-2}}{(3\beta-2)!} \\
    &\sim \sum_{1 \leq s} D_\beta(s) N^{n-s+1} \frac{n^{3s-2}}{(3s-2)!}~,
\end{split}\]
and hence (\ref{eq:pnasim}) is still true.

Applying Lemma~\ref{l:pnun} as in the proof of the second statement of this proposition,
we deduce the third statement.

\end{proof}

\section{Product of several traces}\label{s:bern.k}

In this section, we shall consider the expectations
\[ \EE \tr P_{n_1} (A) \tr P_{n_2}(A) \cdots \tr P_{n_k}(A) \]
for $k > 1$,
which we need in order to study
\[ \EE \tr U_{n_1} (A/(2\sqrt{N-2})) \tr U_{n_2}(A/(2\sqrt{N-2}))
    \cdots \tr U_{n_k}(A/(2\sqrt{N-2}))~.\]
{\em Mutatis mutandis}, the analysis will be quite similar to the case $k=1$, which we have
considered in the two preceding sections.

According to Claim~\ref{cl},
\[ \EE \tr P_{n_1} (A) \tr P_{n_2}(A) \cdots \tr P_{n_k}(A)
    = \Sigma^1_\beta(n_1, \cdots, n_k)~,\]
where $\Sigma^1_\beta(n_1, \cdots, n_k)$ is the number of $k$-tuples
of paths (or shortly: $k$-paths)
\[ p_{n_1, \cdots, n_k} = u_{0}^1 u_1^1 \cdots u_{n_1}^1,
    u_{0}^2 u_0^2 \cdots u_{n_2}^2, \cdots, u_{0}^k u_1^k \cdots u_{n_k}^k \]
that satisfy the conditions
\begin{description}
\item[(a)] $u_j^i \neq u_{j-1}^i$ for $i = 1, \cdots, k$ and
$j = 1, \cdots, n_i$;
\item[(b)] $u_j^i \neq u_{j-2}^i$ for $i = 1, \cdots, k$ and
$j = 2, \cdots, n_i$;
\item[(c)] $u_{n_i}^i = u_0^i$ for $i = 1, \cdots, k$;
\item[(d$_\beta^1$)] for any $u \neq v$,
\[\begin{cases}
\begin{aligned}
&\# \{ (i, j) \, | \, u_j^i = u, u_{j+1}^i = v \} \\
&\qquad    \equiv \# \{ (i, j) \, | \, u_j^i = v, u_{j+1}^i = u \} \mod 2~, \end{aligned} &\beta = 1~; \\
\begin{aligned}
&\# \{ (i, j) \, | \, u_j^i = u, u_{j+1}^i = v \} \\
&\qquad    = \# \{ (i, j) \, | \, u_j^i = v, u_{j+1}^i = u \}~, \end{aligned} &\beta = 2~.
\end{cases}\]
\end{description}

As in Section~\ref{s:bern.1}, we also consider {\em matched} $k$-paths, that is, $k$-paths together
with a matching (= involution of ${\uplus_{i=1}^n}\{0,1,\cdots,n_i-1\} \times \{i\}$ without fixed points) such
that
\begin{itemize}
\item for $\beta = 1$, every edge $(u,v)$ is matched either to a coincident edge $(u,v)$ or to $(v, u)$;
\item for $\beta = 2$, an edge $(u,v)$ is matched to $(v,u)$.
\end{itemize}

Denote by $\Sigma_\beta^{1m}(n_1, \cdots, n_k)$ the number of matched $k$-paths satisfying (a), (b),
(c), (d$_\beta^1$), and by $\Sigma_\beta(n_1, \cdots, n_k)$ the number of $k$-paths satisfying
(a), (b), (c), and (d$_\beta$) below:
\begin{description}
\item[(d$_\beta$)] for any $u \neq v$,
\[\begin{split}
&\# \{ (i, j) \, | \, u_j^i = u, u_{j+1}^i = v \} \\
&\qquad    + \# \{ (i, j) \, | \, u_j^i = v, u_{j+1}^i = u \} \in \{0,2\}~,  \quad\beta = 1~; \\
&\# \{ (i, j) \, | \, u_j^i = u, u_{j+1}^i = v \} \\
&\qquad    = \# \{ (i, j) \, | \, u_j^i = v, u_{j+1}^i = u \} \in \{0, 1\}~, \quad \beta = 2~.
\end{split}\]
\end{description}

Similarly to (\ref{eq:obv}),
\begin{equation}\label{eq:obv.k}
\Sigma_\beta(n_1, \cdots, n_k)
    \leq \Sigma_\beta^1(n_1, \cdots, n_k)
    \leq \Sigma_\beta^{1m}(n_1, \cdots, n_k)~.
\end{equation}

Next, we extend the definition of a diagram (Definition~\ref{def:diag}) in the following way:

\begin{dfn}\label{def:diag.k} Let $\beta \in \{1,2\}$.\hfill
\begin{itemize}
\item A {\em $k$-diagram} of type $\beta$ is an (undirected) multigraph $\bar{G} = (\bar{V}, \bar{E})$,
together with a $k$-tuple of circuits
\[ \bar{p} = \bar{u}_0^1 \bar{u}_1^1 \cdots \bar{u}_0^1 ,\,\,\,
             \bar{u}_0^2 \bar{u}_1^2 \cdots \bar{u}_0^2 ,\,\,\,
             \cdots ,\,\,\, \bar{u}_0^k \bar{u}_1^k \cdots \bar{u}_0^k \]
on $\bar{G}$, such that
\begin{itemize}
\item $\bar{p}$ is  {\em non-backtracking} (meaning that in every circuit no edge is
followed by its reverse, unless $\beta=1$ and the edge is $\bar{u}\bar{u}$);
\item For every $(\bar{u}, \bar{v}) \in \bar{E}$,
\[\begin{split}
&\# \left\{ (i,j) \, | \, \bar{u}_j^i = \bar{u}, \, \bar{u}_{j+1}^i = \bar{v} \right\} \\
&\qquad    + \# \left\{ j \, | \, \bar{u}_j^i = \bar{v}, \, \bar{u}_{j+1}^i = \bar{u} \right\} = 2
    \quad (\beta = 1)~, \\
&\# \left\{ (i,j) \, | \, \bar{u}_j^i = \bar{u}, \, \bar{u}_{j+1}^i = \bar{v} \right\} \\
&\qquad    = \# \left\{ j \, | \, \bar{u}_j^i = \bar{v}, \, \bar{u}_{j+1}^i = \bar{u} \right\} = 1
    \quad (\beta = 2)~;
\end{split}\]
\item the degree of $u_0^i$ in $\bar{G}$ is 1; the degrees of all the other vertices
are equal to 3.
\end{itemize}
\item A {\em weighted $k$-diagram} is a $k$-diagram $\bar{G}$ together with a weight function
$\bar{w}: \bar{E} \to \{-1,0,1,2,\cdots\}$.
\end{itemize}
\end{dfn}
\vspace{2mm}

The mapping from the collection of matched $k$-paths satisfying the (new) conditions
(a), (b), (c), (d$_\beta^1$) to the collection of weighted $k$-diagrams is constructed exactly
as for $k=1$, and Claim~\ref{cl_vert} remains true {\em verbatim}.

To make the automaton from Section~\ref{s:bern.2} generate $k$-diagrams,
we start from the same initial state $t = k = 0$, and demand that the automaton return
to the same initial state after $s$ steps, and that $t = 0$ exactly $k+1$ times
during the procedure. That is, $t = 0$ after $0$, $s_1$, $s_1 + s_2$, $\cdots$,
$s_1 + \cdots s_k$ steps, where $s_1,\cdots,s_k > 0$ are some numbers such
that $s_1 + \cdots + s_k = s$.

Claims~\ref{cl:diag1} and \ref{cl:diag2} take on the following form:
\begin{cl}\label{cl:diag.k}
Every $k$-diagram is generated by the automaton, with the new restrictions. If the
automaton stops after $s = s_1 + \cdots + s_k$ steps (with $s_1,\cdots,s_k$ as above),
the $k$-diagram has $\# \bar{E} = \sum (3 s_i - 1) = 3s - k$ edges and
$\# \bar{V} = \sum 2 s_i = 2s$ vertices.
\end{cl}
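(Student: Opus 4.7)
The strategy is to reduce Claim~\ref{cl:diag.k} to the already-established Claims~\ref{cl:diag1} and~\ref{cl:diag2}, exploiting the fact that the modified automaton returns to the empty state $t=0$ at exactly $k+1$ designated moments (including the beginning and the end). Between consecutive returns to the empty state the automaton performs a complete self-contained run, which is precisely a $1$-diagram-generating sequence of the type analysed in Section~\ref{s:bern.2}.

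First I would establish that every $k$-diagram is generated by the modified automaton. Given a $k$-diagram $(\bar{G},\bar{p})$ with circuits $\bar{u}_0^i\bar{u}_1^i\cdots\bar{u}_0^i$, each individual circuit is non-backtracking and starts and ends at its own leaf $\bar{u}_0^i\in\bar{V}$. Applying the $k=1$ reconstruction of Claim~\ref{cl:diag2} (respectively Claim~\ref{cl:diag1}) to the $i$-th circuit in isolation produces a block of $s_i$ automaton transitions that starts and ends in the empty state; concatenating these $k$ blocks yields a run of the modified automaton which visits $t=0$ exactly $k+1$ times, as required. Conversely, any such run decomposes uniquely into $k$ consecutive blocks of lengths $s_1,\ldots,s_k$, each of which reconstructs one circuit, and the collection of these circuits gives back the $k$-diagram.

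Next I would compute the edge and vertex counts. The per-step bookkeeping carried out in the proofs of Claims~\ref{cl:diag1} and~\ref{cl:diag2} is local to each block: within a block of $s_i$ steps, the first (creation) transition contributes $3$ edges and $3$ vertices, the final (annihilation) transition contributes $2$ edges and $1$ vertex, and each remaining transition contributes $3$ edges and $2$ vertices, for a total of $3s_i-1$ edges and $2s_i$ vertices per block. Summing over $i$ yields
\[
\#\bar{E}=\sum_{i=1}^{k}(3s_i-1)=3s-k,\qquad \#\bar{V}=\sum_{i=1}^{k}2s_i=2s,
\]
as claimed.

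The only point that deserves attention is that vertices and edges introduced in distinct blocks are genuinely distinct in $\bar{G}$, so that the per-block counts add up without overlap. This is not a separate combinatorial issue but an intrinsic feature of the construction (i)--(v) of Section~\ref{s:bern.1} applied to a $k$-tuple of paths: each circuit of the $k$-diagram is built from its own segment of the matched $k$-path, and the splitting and trimming operations act independently on these segments. I expect this to be the only step requiring explicit (but routine) verification; everything else follows at once by applying the $k=1$ analysis blockwise.
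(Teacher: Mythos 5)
Your reduction to $k$ independent runs of the $k=1$ automaton does not work, and the point you flag as ``the only point that deserves attention'' --- that edges and vertices introduced in distinct blocks are distinct --- is precisely the point that fails. In the modified automaton the intermediate returns to $t=0$ are returns of the \emph{thread} only: the loop-list $(\ell_1,\dots,\ell_k)$ need not be empty at the times $s_1, s_1+s_2,\dots$, so a loop created while tracing circuit $i$ may be annihilated while tracing circuit $j>i$. Correspondingly, in Definition~\ref{def:diag.k} the edge-multiplicity condition counts traversals over all pairs $(i,j)$, so distinct circuits of a $k$-diagram can (and, for the whole enterprise to be non-trivial, must) share edges and vertices; a single circuit taken in isolation then traverses some of its edges only once and is \emph{not} a $1$-diagram, so Claims~\ref{cl:diag1} and \ref{cl:diag2} cannot be applied to it. If your block decomposition were valid, one would get $D_\beta(s_1,\dots,s_k)=\prod_i D_\beta(s_i)$ and the mixed moments $\EE\prod_i \tr U_{n_i}$ would asymptotically factorize, contradicting the non-trivial correlation structure of the Airy process.

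The stated edge and vertex counts are nonetheless correct, and the intended argument is the same per-step bookkeeping as in the proofs of Claims~\ref{cl:diag1} and \ref{cl:diag2}, applied to the single $s$-step run rather than blockwise: each of the $k$ steps that starts from $t=0$ creates $3$ edges and $3$ vertices (including the degree-one root $\bar u_0^i$), each of the $k$ steps that returns the thread to $t=0$ creates $2$ edges and $1$ vertex, and each of the remaining $s-2k$ steps creates $3$ edges and $2$ vertices --- regardless of whether the loop being created or annihilated straddles a block boundary. This gives $3k+2k+3(s-2k)=3s-k$ edges and $3k+k+2(s-2k)=2s$ vertices. Likewise, for the generation statement one should run the automaton along the concatenation of all $k$ circuits at once, with the loop-list carrying the once-traversed pieces across circuit boundaries, rather than reconstructing each circuit separately.
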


Denote by $D_{\beta,k}(s)$ the number of $k$-diagrams generated in $s$ steps, and let
$D_\beta(s_1, \cdots, s_k)$ be the number of $k$-diagrams corresponding to $s_1, \cdots, s_k$;
that is,
\[ D_{\beta,k}(s) = \sum_{s_1 + \cdots + s_k = s} D_\beta(s_1, \cdots, s_k)~.\]
Then Proposition~\ref{prop:ndiag} can be extended in the following way:
\begin{prop}\label{prop:ndiag.k}
For $\beta \in \{1, 2\}$,
\[ (s/C)^{s+k-1} / (k-1)! \leq D_{\beta,k}(s) \leq (Cs)^{s+k-1} / (k-1)!~. \]
\end{prop}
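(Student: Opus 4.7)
The plan is to reduce the estimation of $D_{\beta,k}(s)$ to Proposition~\ref{prop:ndiag} via a disjoint-union decomposition. The automaton generating a $k$-diagram hits the state $t=0$ at exactly the $k+1$ instants $0,\, s_1,\, s_1+s_2,\,\ldots,\, s$, and between two consecutive such instants it evolves precisely as in the $k=1$ setting of Section~\ref{s:bern.2}. Moreover, the degree constraints in Definition~\ref{def:diag.k} (only the $\bar{u}_0^i$ have degree $1$; every other vertex has degree $3$) prevent two distinct circuits from sharing any vertex, since two circuits meeting at an intermediate vertex would force its degree up to $\geq 4$. Hence the $k$ circuits sit on pairwise vertex-disjoint subgraphs, and
\[ D_\beta(s_1,\ldots,s_k) = \prod_{i=1}^k D_\beta(s_i), \qquad D_{\beta,k}(s) = \sum_{\substack{s_1+\cdots+s_k=s \\ s_i\geq 1}} \prod_{i=1}^k D_\beta(s_i). \]

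For the upper bound I would combine the estimate $D_\beta(s_i) \leq (Cs_i)^{s_i}$ from Proposition~\ref{prop:ndiag} with the two elementary facts $\prod_i s_i^{s_i} \leq s^s$ (each factor is bounded by $s^{s_i}$) and $\binom{s-1}{k-1} \leq s^{k-1}/(k-1)!$ to get
\[ D_{\beta,k}(s) \leq C^s \binom{s-1}{k-1} s^s \leq \frac{(Cs)^{s+k-1}}{(k-1)!}, \]
enlarging $C$ to absorb the residual $C^{k-1}$ factor.

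For the lower bound I would apply $D_\beta(s_i) \geq (s_i/C)^{s_i}$ and estimate $\sum \prod s_i^{s_i}$ from below by isolating \emph{unbalanced} compositions, in which one part $s_j$ carries the bulk of $s$ while the other $k-1$ parts are as small as admissible ($s_i=1$ if $\beta=1$; $s_i=2$ if $\beta=2$, since $D_2(1)=0$). Summing over the choice $j\in\{1,\ldots,k\}$ of the distinguished index and over the $\binom{s-s_j-1}{k-2}$ ways of distributing the small parts for each $s_j$, the binomial coefficients produce the factor $s^{k-1}/(k-1)!$, while the bound $s_j^{s_j} \geq s^{s_j}(1-O(1/s))^{s_j}$ for $s_j$ close to $s$ supplies the factor $(s/C)^{s}$ after absorbing the residual $(s-s_j)$ into a constant exponent.

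The main obstacle is the matching of exponents in the lower bound: one has to count enough unbalanced compositions to generate the Stirling-type $1/(k-1)!$ factor, while ensuring that each contributes essentially $(s/C)^s$. This requires restricting $s_j$ to a window close to $s$ so that $(s_j/s)^{s_j}$ stays bounded below, and cross-checking the regime against the factorial growth. Apart from this bookkeeping, the argument is a direct extension of the $k=1$ analysis already carried out in Proposition~\ref{prop:ndiag}, and no essentially new combinatorial ideas are required.
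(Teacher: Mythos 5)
Your opening reduction is where the argument breaks: the circuits of a $k$-diagram need \emph{not} be vertex-disjoint, and the identity $D_\beta(s_1,\ldots,s_k)=\prod_i D_\beta(s_i)$ is false. The degree argument does not give what you want. An internal vertex of degree $3$ carries six traversal incidences (each of its three edges is traversed twice in total), hence exactly three passages of the union of the circuits, and nothing prevents these three passages from being split $2+1$ or $1+1+1$ among different circuits. Concretely, for $\beta=2$, $k=2$, take roots $r_1,r_2$, internal vertices $a,b$, edges $r_1a$, $r_2b$, and a double edge $e_1,e_2$ between $a$ and $b$, with circuits $r_1\,a\,b\,a\,r_1$ and $r_2\,b\,a\,b\,r_2$ (the second traversing $e_1,e_2$ in the directions left unused by the first). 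All conditions of Definition~\ref{def:diag.k} hold and the counts $\#\bar E=3s-k=4$, $\#\bar V=2s=4$ of Claim~\ref{cl:diag.k} are met with $s_1=s_2=1$; thus $D_2(1,1)\geq 1$ while $D_2(1)^2=0$, and your formula would give $D_{2,2}(2)=0$, contradicting the very lower bound you are proving. In the automaton picture, the intermediate returns to $t=0$ are not returns to the \emph{initial} state: loops created while building one circuit may persist and be annihilated while building a later one, and this is exactly how circuits become entangled. These entangled diagrams are not a technicality --- they carry the correlations between the traces $\tr U_{n_i}$, i.e.\ they are the reason $\phi_\beta$ in Remark~\ref{r:phi.k} does not factor over its arguments.

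Consequently your upper bound is unproven: $\sum_{s_1+\cdots+s_k=s}\prod_i D_\beta(s_i)$ \emph{undercounts} $D_{\beta,k}(s)$, so bounding it from above says nothing. The intended argument is to rerun the proof of Proposition~\ref{prop:ndiag} on the $k$-circuit automaton itself: the transition sequence is a nonnegative lattice path hitting $0$ exactly $k+1$ times, and choosing the $k-1$ intermediate return times among the $s$ steps is what produces the factor $\binom{s-1}{k-1}\leq s^{k-1}/(k-1)!$, the counting of the increments $m_i$ and of the diagrams per fixed combinatorial data going through as before. Your lower bound is salvageable, since disjoint unions of $1$-diagrams do yield the valid inequality $D_{\beta,k}(s)\geq\sum_{s_1+\cdots+s_k=s}\prod_i D_\beta(s_i)$; but it must be stated as an inequality, the parity restriction for $\beta=2$ must be kept (only even $s_i$ contribute, as for Proposition~\ref{prop:ndiag} itself), and the bookkeeping you defer is genuinely delicate: only compositions with $s-s_j=O(s/\log s)$ contribute at the level $(s/C)^s$, so the count of usable compositions falls short of $s^{k-1}/(k-1)!$ by a $(\log s)^{k-1}$ factor that must be absorbed into the geometric gain $(C'/C)^s$.
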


Now we can extend Proposition~\ref{prop:herm''.bern} to all $k \geq 1$, in the following
(slightly weaker) form:
\begin{prop}\label{prop:herm''.bern.k}
Let $\beta \in \{1, 2\}$, and let $k \geq 1$, $(n_1, \cdots, n_k) \in \NN^k$. Also let
$A$ be an $N \times N$ random matrix as in (\ref{ex:bern}). Then
\begin{enumerate}
\item If $\sum n_i \equiv 1 \mod 2$,
\[ \EE \prod_{i=1}^k \tr U_{n_i^{(N)}}(A^{(N)}/(2\sqrt{N-2})) = 0~.\]
\item If $\sum n_i = 2n \equiv 0 \mod 2$,
\[ \EE \prod_{i=1}^k \tr U_{n_i} (A/(2\sqrt{N-2}))
    \leq (Cn)^k \, \exp \left\{ C n^{3/2}/{N^{1/2}} \right\}~.\]
\item If moreover $n \ll N^{1/2}$,
\[ \EE \prod_{i=1}^k \tr U_{n_i^{(N)}}(A^{(N)}/(2\sqrt{N-2}))
    \sim \Sigma_\beta(n_1, \cdots, n_k) \]
as $N \to +\infty$.
\end{enumerate}
\end{prop}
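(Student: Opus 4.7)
Part~1 will follow from (A1): the joint distribution of the entries of $A$ is invariant under $A\mapsto -A$, while the expansion of $\prod_i \tr U_{n_i}(A/(2\sqrt{N-2}))$ consists of monomials of total degree $\sum n_i$ (odd by hypothesis), each of which therefore has zero expectation.

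For Parts~2 and 3 I first iterate Lemma~\ref{l:pnun} to obtain
\[
U_n\bigl(x/(2\sqrt{N-2})\bigr) = (N-2)^{-n/2} \sum_{\ell\geq 0} P_{n-2\ell}(x),
\]
which rewrites $\EE\prod_i \tr U_{n_i}(A/(2\sqrt{N-2}))$ as $(N-2)^{-n}$ times a sum of $\EE\prod_i \tr P_{n_i-2\ell_i}(A) = \Sigma_\beta^1(n_1-2\ell_1,\ldots,n_k-2\ell_k)$ (using the natural $k$-path extension of Claim~\ref{cl}). The upper bound of Part~2 will then reduce, via (\ref{eq:obv.k}), to estimating $\Sigma_\beta^{1m}$. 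For this I will sum over weighted $k$-diagrams as in Section~\ref{s:bern.2}: by Claim~\ref{cl:diag.k}, a $k$-diagram on $s$ steps has $\#\bar V = 2s$ and $\#\bar E = 3s-k$, so the weights sum to $n-3s+k$. Combining the $N^{n-s+k}$ vertex choices from Claim~\ref{cl_vert} (which extends verbatim), the at most $\binom{n+3s-k-1}{3s-k-1}$ weight placements, and Proposition~\ref{prop:ndiag.k}, I expect the sum over $s$ to take the shape $(Cn)^k \, N^n \, \exp(C n^{3/2}/N^{1/2})$; restoring the $1/(N-2)^n$ prefactor and absorbing the shifted $P_{n_i-2\ell_i}$ terms (each additional unit of $\sum\ell_i$ costing a factor $1/N$) will then give Part~2.

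Part~3 will proceed by a sandwich. Under $n\ll N^{1/2}$ the sum concentrates on $s\ll n^{1/2}$, where the upper binomial $\binom{n+3s-k-1}{3s-k-1}$ and the weight-$\geq 1$ lower binomial $\binom{n-3s+k-1}{3s-k-1}$ are asymptotically equal, and Claim~\ref{cl_vert} yields the exact vertex count $N^{n-s+k}(1+o(1))$. Weights $\geq 1$ together with non-backtracking force condition (d$_\beta$), so the relevant matched $k$-paths are in bijection with the unmatched $k$-paths counted by $\Sigma_\beta$. The chain $\Sigma_\beta\leq\Sigma_\beta^1\leq\Sigma_\beta^{1m}\sim\Sigma_\beta$ then yields $\EE\prod_i\tr P_{n_i}(A)\sim\Sigma_\beta(n_1,\ldots,n_k)$, and since the shifted terms contribute only an $O(1/N)$ relative correction, the same asymptotic passes to $\EE\prod_i\tr U_{n_i}$.

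The main bookkeeping hurdle will be making the $k$-dependent factors come out right: the $(k-1)!$ in the denominator of Proposition~\ref{prop:ndiag.k} must cancel against the $\binom{s-1}{k-1}$ ways of decomposing $s=s_1+\cdots+s_k$ into ordered positive parts, in order to produce the correct $(Cn)^k$ shape (rather than $(Cn)^{k-1}/(k-1)!$) in Part~2. A secondary concern in Part~3 is to verify that matched $k$-paths whose diagrammatic image has collisions between different circuits are negligible relative to $\Sigma_\beta(n_1,\ldots,n_k)$; this should follow from a mild sharpening of the vertex-counting clause of Claim~\ref{cl_vert}, but deserves care because no $k!$ slack remains to absorb the error.
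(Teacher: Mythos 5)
Your Part~1 and the reduction from $U_{n_i}$ to $P_{n_i-2\ell_i}$ via Lemma~\ref{l:pnun} are fine, and the overall architecture (bound $\Sigma^{1m}_\beta$ by summing over weighted $k$-diagrams, then sandwich) is the paper's. But there is a genuine gap in your count of weight placements, and it breaks both Part~2 and Part~3 for $k\geq 2$. You constrain the weights only by their total, $\sum_{\bar e}\bar w(\bar e)=n-3s+k$, and hence count $\binom{n+3s-k-1}{3s-k-1}=\Theta(n^{3s-k-1})$ placements per $s$-step $k$-diagram. In fact each of the $k$ circuits has its own prescribed length $n_i$, so the weights satisfy a system of $k$ independent linear equations, one per circuit (this is (\ref{eq:diageqs}); the equations are independent because the stem edge of circuit $i$ occurs only in the $i$-th equation). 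The solution set therefore has only $3s-2k$ degrees of freedom, giving $\Theta(n^{3s-2k})$ placements --- a factor of order $n^{k-1}$ fewer than yours. With your count, the minimal diagrams ($s$ of order $k$) already contribute $\Theta(N^n\,n^{2k-1})$ to $\Sigma^{1m}_\beta$, so after dividing by $(N-2)^n$ you can only conclude a bound of order $n^{2k-1}$, not the claimed $(Cn)^k$; and the true order is indeed $\prod_i n_i\leq n^k$ (cf.\ Remark~\ref{r:phi.k}), so the discrepancy is real, not slack in the target. The same error infects the Part~3 sandwich: your lower count $\binom{n-3s+k-1}{3s-k-1}$ of weight-$\geq 1$ placements likewise ignores the per-circuit length constraints and so does not enumerate configurations realizable by $k$-paths with circuit lengths exactly $n_1,\dots,n_k$; your upper and lower bounds agree with each other but not with $\Sigma_\beta(n_1,\dots,n_k)$.

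The repair is what the paper does: work with the system (\ref{eq:diageqs}) directly. For the upper bound, choose freely the weights on the $3s-2k$ edges other than the first edge of each circuit, subject to ${\sum}'(\bar w(\bar e)+2)\leq n+6s-3k$, which gives $\binom{n+6s-3k}{3s-2k}$ choices; the $k$ remaining weights are then determined by the equations. For Part~3, split the indices into those with $n_i\leq n_0$ (whose circuits are typically trivial) and the rest, and show that most solutions of the system have all weights $\geq 1$. Note that for $k=1$ the two counts coincide ($3s-2=3s-k-1$), which is why your scheme reproduces Proposition~\ref{prop:herm''.bern} correctly and why the error is easy to miss. Your two closing worries are comparatively harmless: the $(k-1)!$ bookkeeping is absorbed into $(Cn)^k$ by Stirling, and inter-circuit collisions are controlled by Claim~\ref{cl_vert}, which extends verbatim.
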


\begin{proof} As in Proposition~\ref{prop:herm''.bern}, the first statement is
obvious.

\vspace{2mm}\noindent
The weights $\bar{w}(\bar{e})$ satisfy a system of linear equations
(depending on the diagram):
\begin{equation}\label{eq:diageqs}
\sum_j \bar{w}(\bar{u}^i_j, \bar{u}^i_{j+1}) = n_i, \quad i=1,\cdots,k~.
\end{equation}
For every $\bar{e} \in \bar{E}$ and every $i$, the coefficient $c_{i} (\bar{e})$
of $\bar{w} (\bar{e})$ in the $i$-th equation is $0$, $1$, or $2$, and
\[ \sum_{i=1}^k c_i(\bar{e}) = 2~. \]
As $\bar{w}(\bar{e}) \geq -1$,
\[ {\sum}' \bar{w}(\bar{e}) \leq \frac{\sum n_i + 2k}{2} = n+k~, \]
where the sum is over all the edges except the first one in every circuit. Therefore
\begin{equation}\label{eq:s}
{\sum}' (\bar{w}(\bar{e}) + 2) \leq n+k+2(3s-2k) = n+6s-3k~,
\end{equation}
and the number of ways to place the weights is at most
\[ \binom{n+6s-3k}{3s-2k} \leq (n+6s-3k)^{3s-2k} / (3s-2k)!\]

The number of ways to choose the vertices is at most $N^{n-s+k}$; hence
\[\begin{split}
\Sigma_\beta^{1m}(n_1, \cdots, n_k)
    &\leq \sum_{k \leq s \leq n} N^{n-s+k} \frac{(Cs)^{s+k-1}}{(k-1)!} \,
        \frac{(n+6s-3k)^{3s-2k}}{(3s-2k)!} \\
    &\leq N^n (C_1n)^k \sum \left(\frac{C_2 n^3}{N s^2}\right)^{s-k}
    \leq N^n (C_1n)^k \exp  \left(\frac{C_2 n^{3/2}}{N^{1/2}}\right)~.
\end{split}\]
This proves the second statement.

\vspace{2mm}\noindent
The proof of the third statement is similar to the proof of the third statement
in Proposition~\ref{prop:herm''.bern}. Choose $1 \ll n_0 \ll N^{1/3}$. We write
\[ \{1, \cdots, k\} = I_1 \uplus I_2~,\]
where $I_1$ is the set of indices such that $n_i \leq n_0$ (and $I_2$ is its complement).
The circuits corresponding to $i \in I_1$ are (typically) trivial; for the circuits
corresponding to $i \in I_2$, we consider the system of equations (\ref{eq:diageqs})
and prove that most of its solutions satisfy $\bar{w}(\bar{e}) \geq 1$. This is again
similar to the proof of Proposition~\ref{prop:herm''.bern}; we omit the details.
\end{proof}

\begin{rmk}\label{r:phi.k}
Extending Remark~\ref{r:phi}, one can write
\begin{multline*}
\EE \prod_{i=1}^k \tr U_{n_i}(A/(2\sqrt{N-2})) \\
    = (1+o(1)) \sum_{I \subset \{1, \cdots, k \}} (-1)^{\sum_{i \in I} n_i} \,
        \phi_{\beta} (\{n_i\}_{i \in I}; N) \,
        \phi_{\beta} (\{n_i\}_{i \notin I}; N)~,
\end{multline*}
where now
\[ \phi_\beta : \uplus_{k \geq 0} \RR_+^k \times \NN \to \RR_+ \]
(e.g.\ $\phi_\beta(\varnothing, N) = 1$, and $\phi_\beta(\{n\}, N)$ is as in
Remark~\ref{r:phi}.)
\end{rmk}

\begin{rmk}
The number $D_2(2g)$ is equal to the number of homotopically distinct ways to glue
the boundary of a disk\footnote{with a marked point on the boundary}, obtaining a compact
orientable surface of (orientable) genus $g$. There is a similar interpretation for $\beta = 1$:
$D_{1}(s)$ is the number of homotopically distinct ways to obtain a compact surface
of non-orientable genus $s$.

One can extend this observation to $D_{\beta,k}(s)$ (which corresponds to gluing $k$ disks);
this could be compared to the Harer--Zagier formul\ae, cf.\ \cite[6.5.6]{M}.

\end{rmk}

\part{General matrices}\label{P:gen}

In Part~\ref{P:bern}, we have proved a version of Theorem~\ref{th:herm''} for the
special case (\ref{ex:bern}). In this part, we extend the considerations of Part~\ref{P:bern}
to general matrices $A$ that satisfy the assumptions of Theorem~\ref{th:herm}.

Unfortunately, the nice formula (\ref{eq:pnnbt}) is not valid for general matrices $A$.
Instead, for every path $p$ and matrix $A$, we shall define an expression $\gamma(p, A)$,
such that for every $1 \leq u, v \leq N$ and every $n \geq 0$,
\[ (N-2)^{n/2} U_{n}(A/(2\sqrt{N-2}))_{uv}
    = \sum \gamma(p, A)~, \]
where the sum is over all paths $p$ of length $n$ from $u$ to $v$.

\section{Extending Claim~\ref{cl}}\label{s:gen:cl}

To any path $p_n = u_0 u_1 \cdots u_n$ we associate an expression $\gamma(p_n, A)$
and a sub-path $\mathcal{C}(p_n)$ that satisfies (a),(b). Namely, set
\[ \gamma(p_0, A) = 1, \quad \mathcal{C}(p_0) = p_0~,\]
and proceed as follows:
\begin{enumerate}[1:]
\item If $u_n = u_{n-1}$,
    \begin{enumerate}[1:]
    \item if $n \geq 2$ and $\mathcal{C}(p_{n-2}) = u_0$, set 
        \[ \gamma(p_n, A) = \gamma(p_{n-1}, A) A_{u_{n-1}u_n} + \gamma(p_{n-2}, A)~,
            \quad \mathcal{C}(p_n) = \mathcal{C}(p_{n-1})~;\]
    \item else, set
        \[ \gamma(p_n, A) = \gamma(p_{n-1}, A) A_{u_{n-1}u_n}~,
            \quad \mathcal{C}(p_n) = \mathcal{C}(p_{n-1})~.\]
    \end{enumerate}
\item Else,
    \begin{enumerate}[1:]
    \item if $(u_n, u_{n-1})$ is the last edge edge of $\mathcal{C}(p_{n-1})$,
        \begin{enumerate}[1]
        \item if the previous step of type 2 was not of sub-type 2:1 (or did not exist), set
            \[ \begin{split}
                    \gamma(p_n, A)
                        &= \gamma(\tilde{p}_{n-1}, A) \left\{ |A_{u_{n-1}u_n}|^2 - 1 \right\} \\
                        &= \gamma(p_{n-1}, A) A_{u_{n-1}u_n} - \gamma(\tilde{p}_{n-1}, A)~,
                \end{split} \]
            where $\tilde{p}_{n-1}$ is obtained from $p_{n-1}$ by erasing the last edge in
            $\mathcal{C}(p_{n-1})$, and let $\mathcal{C}(p_{n})$ be $\mathcal{C}(p_{n-1})$
            without the last edge;
        \item else, set
            \[ \gamma(p_n, A) = \gamma(p_{n-1}, A) A_{u_{n-1}u_n}~, \]
            and again, $\mathcal{C}(p_{n})$ is $\mathcal{C}(p_{n-1})$
            without the last edge;
        \end{enumerate}
    \item else, set
        \[ \gamma(p_n, A) = \gamma(p_{n-1}, A) A_{u_{n-1}u_n}~, \]
    and append $(u_{n-1}, u_n)$ to $\mathcal{C}(p_{n-1})$ in order to obtain
        $\mathcal{C}(p_{n})$.
    \end{enumerate}
\end{enumerate}

\begin{cl}\label{cl1} For any Hermitian $N \times N$ matrix $A$,
\begin{equation}\label{eq:cl1} (N-2)^{n/2} U_{n}(A/(2\sqrt{N-2}))_{u_0u_n}
    = \sum_{p_n} \gamma(p_n, A)~,
\end{equation}
where the sum is over all paths $p_n = u_0u_1 \cdots u_n$.
\end{cl}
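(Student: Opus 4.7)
The plan is to prove the claim by induction on $n$. The base cases $n=0$ (both sides equal $\delta_{u_0 u_n}$ since $U_0=1$ and the only length-$0$ path is trivial) and $n=1$ (both sides equal $A_{u_0 u_1}$; by rule 2:2 of the definition, $\gamma(u_0u_1,A)=\gamma(u_0,A)\,A_{u_0u_1}=A_{u_0u_1}$) are immediate.

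For the inductive step, set $Q_n(x):=(N-2)^{n/2}U_n(x/(2\sqrt{N-2}))$. Substituting $y=x/(2\sqrt{N-2})$ into the Chebyshev three-term recurrence $U_n(y)=2yU_{n-1}(y)-U_{n-2}(y)$ yields
\begin{equation*}
Q_n(x)=x\,Q_{n-1}(x)-(N-2)\,Q_{n-2}(x),\qquad n\ge 2.
\end{equation*}
Evaluating at $A$ and reading off the $(u_0,u_n)$ entry,
\begin{equation*}
Q_n(A)_{u_0 u_n}=\sum_{v=1}^{N}Q_{n-1}(A)_{u_0 v}\,A_{v u_n}-(N-2)\,Q_{n-2}(A)_{u_0 u_n}.
\end{equation*}
Applying the induction hypothesis to both $Q_{n-1}$ and $Q_{n-2}$ and identifying a length-$n$ path $p_n=u_0\cdots u_n$ with the pair consisting of its length-$(n-1)$ prefix $p_{n-1}$ and the terminal vertex $u_n$, the equation becomes
\begin{equation*}
Q_n(A)_{u_0 u_n}=\sum_{p_n:u_0\to u_n}\gamma(p_{n-1},A)\,A_{u_{n-1}u_n}-(N-2)\sum_{p_{n-2}:u_0\to u_n}\gamma(p_{n-2},A).
\end{equation*}
So the claim reduces to the combinatorial identity
\begin{equation*}
\sum_{p_n:u_0\to u_n}\!\!\bigl[\gamma(p_n,A)-\gamma(p_{n-1},A)\,A_{u_{n-1}u_n}\bigr]=-(N-2)\!\!\sum_{p_{n-2}:u_0\to u_n}\!\!\gamma(p_{n-2},A).\qquad(\ast)
\end{equation*}

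By direct inspection of the five sub-cases in the recursive definition of $\gamma$, the bracketed term on the left vanishes except in two situations: sub-case 1:1, where it equals $+\gamma(p_{n-2},A)$, and sub-case 2:1:1, where it equals $-\gamma(\tilde p_{n-1},A)$. Every other sub-case gives $\gamma(p_n,A)=\gamma(p_{n-1},A)A_{u_{n-1}u_n}$ by design. So $(\ast)$ is equivalent to
\begin{equation*}
\sum_{\text{1:1}}\gamma(p_{n-2},A)\;-\sum_{\text{2:1:1}}\gamma(\tilde p_{n-1},A)=-(N-2)\!\!\sum_{p_{n-2}:u_0\to u_n}\!\!\gamma(p_{n-2},A).
\end{equation*}

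The main obstacle is verifying this identity. The strategy is to fix a ``reduced'' path $p_{n-2}$ from $u_0$ to $u_n$ and count the extensions $p_{n-2}\mapsto p_n$ that fall into cases 2:1:1 or 1:1 and produce this $p_{n-2}$ as either $\tilde p_{n-1}$ or $p_{n-2}$ itself. In case 2:1:1 the appended pair $(u_{n-1},u_n)$ is a reverse of the last canonical edge, so the map $p_n\mapsto\tilde p_{n-1}$ identifies each case-2:1:1 extension with a choice of an intermediate vertex $u_{n-1}\neq u_n$ such that the update of $\mathcal{C}$ lands in sub-case 2:1:1 rather than 2:1:2 or 2:2; case 1:1 contributes when instead the two appended edges are a ``trivial'' detour $u_0\,u_{n-1}\,u_{n-1}$ emanating from a fully-contracted canonical path. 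A careful bookkeeping of the $\mathcal{C}(\cdot)$-updates shows that exactly $N-2$ of the $N$ possible choices of $u_{n-1}$ contribute, with the remaining two (corresponding to the two excluded configurations, which depend on the state of $\mathcal{C}(p_{n-2})$) falling into the non-contributing sub-cases. This exhausts the identity $(\ast)$ and completes the induction.
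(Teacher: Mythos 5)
Your proof follows the same route as the paper's: induction on $n$ via the Chebyshev three-term recurrence, isolating the correction terms coming from sub-cases 1:1 and 2:1:1, and then the counting argument that for each reduced path of length $n-2$ from $u_0$ to $u_n$ the net number of contributing extensions is $N-2$ (the paper phrases this as the map $p_n \mapsto \tilde p_{n-1}$ being $(N-2)$- or $(N-1)$-to-one, with the surplus in the degenerate case $\mathcal{C}(q_{n-2})=u_0$ exactly cancelled by the 1:1 terms). The argument is correct and is presented at essentially the same level of detail as the paper's own sketch.
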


\begin{proof}[Sketch of proof]
We check the claim for $n = 0,1$ and proceed by induction. Suppose (\ref{eq:cl1}) holds up
to $n-1$. Then by (\ref{eq:un.rec})
\begin{multline*}
(N-2)^{n/2} U_n (A/(2\sqrt{N-2}))_{u_0u_n} \\
    = \sum_{u_{n-1}} {\sum_{p_{n-1}}}' \gamma(p_{n-1}, A) A_{u_{n-1}u_n}
        - (N-2) {\sum_{q_{n-2}}}'' \gamma(q_{n-2}, A)~,
\end{multline*}
where $\sum'$ is over paths $p_{n-1}$ of length $n-1$ from $u_0$ to $u_{n-1}$, and $\sum''$ is
over paths $q_{n-2}$ of length $n-2$ from $u_0$ to $u_n$.

In the first sum, set $p_n = p_{n-1} u_n$; it is then equal to
\[ \sum_{p_n} \gamma(p_n, A) - {\sum}^{1:1} \gamma(p_{n-2}, A)
    + {\sum}^{2:1:1} \gamma(\tilde{p}_{n-1}, A) \]
(where the sum is split according to the construction above.)
The map $\sim: p_n \mapsto \tilde{p}_{n-1}$ is almost $N-2$ to 1. Namely,
\[ \# \sim^{-1}(q_{n-2}) =
    \begin{cases}
        N-1, &\mathcal{C}(q_{n-2}) = u_0 \\
        N-2~.
    \end{cases}
\]
Thus
\begin{equation}\label{eq:cl1.pr.tmp}
{\sum}^{2:1:1} = (N-2) \sum_{q_{n-2}} \gamma(q_{n-2}, A)
    + \sum_{\mathcal{C}(q_{n-2}) = u_0} \gamma(q_{n-2}, A)~.
\end{equation}
But the last sum in (\ref{eq:cl1.pr.tmp}) is exactly $\sum^{1:1}$. Thus finally
\[ (N-2)^{n/2} U_{n}(A/(2\sqrt{N-2}))_{u_0u_n} = \sum_{p_n} \gamma(p_n, A)~. \]
\end{proof}

By Claim~\ref{cl1},
\begin{equation}\label{eq:etr.gen} (N-2)^{n/2} \tr U_n(A/(2\sqrt{N-2})) = \sum_{p_n} \gamma(p_n, A)~,
\end{equation}
where the sum is over all paths $p_n$ satisfying (c). Every path $p_n$ is decomposed into a
non-backtracking part $\mathcal{C}(p_n)$, the `loops'\footnote{these are called `loops' in the
standard graph-theoretical terminology, not to be confused with loops in the sense of
Section~\ref{s:bern.2}} $u_{n-1} = u_n$, and the remainder $\mathcal{F}_1(p_n)$, which is a
forest (= union of trees.)

\section{Non-backtracking paths}\label{s:gen.nbt}

According to (\ref{eq:etr.gen}),
\begin{equation} (N-2)^{n/2} \, \EE \tr U_n(A/(2\sqrt{N-2})) = \sum_{p_n} \EE \gamma(p_n, A)~,
\end{equation}
where the sum is over all paths satisfying (c) in which every edge is passed an even number
of times. This expression is zero for odd $n$.
In this section, we let $\beta = 1$ and focus on the sub-sum
\[ \Sigma_1^{2,A}(2n) = \sum \EE A_{u_0 u_1} A_{u_1 u_2} \cdots A_{u_{2n-1} u_{2n}}\]
over paths $p_{2n}$ satisfying (a), (b), (c), (d$_\beta^1$).

\begin{lemma}\label{l:expw} \hfill
\begin{enumerate}
\item $\Sigma_1^{2,A}(2n) \leq n \exp(C'n^{3/2} / N^{1/2})$;
\item for $n = o(\sqrt{N})$,
$\Sigma_1^{2,A}(2n) = \Sigma_1(2n) (1+ o(1))$.
\end{enumerate}
Here $C' > 0$ and the implicit constant in $o(1)$ depend only on $C_0$ from (A2).
\end{lemma}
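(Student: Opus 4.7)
My plan is to adapt the diagram decomposition of Part~\ref{P:bern} to accommodate the moments of general matrix entries. Fix a path $p_{2n}$ satisfying (a),(b),(c). By (A1) and the independence of the entries above the diagonal, $\EE A_{u_0u_1}\cdots A_{u_{2n-1}u_{2n}}$ vanishes unless each undirected edge $\{u,v\}$ is traversed an even total number of times, i.e.\ unless (d$_1^1$) holds; in that case the expectation equals $\prod_{\{u,v\}} \EE A_{uv}^{m(u,v)}$ with $m(u,v)$ the (even) edge multiplicity. By (A3$_1$) we have $\EE A_{uv}^2 = 1$, so paths in which every edge is traversed \emph{exactly} twice (condition (d$_1$)) each contribute $1$; the sub-sum over these is precisely $\Sigma_1(2n)$. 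By (A2), an edge traversed $2\ell$ times contributes at most $(C_0\ell)^\ell$ per micro-edge.

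Next I would extend the diagram framework of Section~\ref{s:bern.1} to ``thick'' diagrams, in which each edge $\bar e$ carries both a weight $\bar w(\bar e) \geq -1$ (the number of erased degree-$2$ vertices on the chain) and a multiplicity $\ell(\bar e) \geq 1$ (so the edge is traversed $2\ell(\bar e)$ times). The length constraint becomes $\sum_{\bar e} \ell(\bar e)(\bar w(\bar e)+1) = n$, while the number of vertex assignments for a given thick diagram is still bounded by $N^{2s + \sum \bar w(\bar e)}$, exactly as in Claim~\ref{cl_vert}. The moment contribution factorises over micro-edges, which carry independent random entries, as
\[ \prod_{\bar e}[\EE A^{2\ell(\bar e)}]^{\bar w(\bar e)+1} \leq \prod_{\bar e}(C_0\ell(\bar e))^{\ell(\bar e)(\bar w(\bar e)+1)}. \]

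I would then split the sum into the ``simple'' part (all $\ell(\bar e) = 1$) and the ``excess'' part (some $\ell(\bar e) \geq 2$). The simple part recovers the analysis of Proposition~\ref{prop:herm''.bern} with moment factor $1$, yielding the main term $\Sigma_1(2n)$. To control the excess part, observe that replacing $\ell = 1$ by $\ell \geq 2$ on an edge of weight $\bar w$ consumes $2(\ell-1)(\bar w+1)$ additional units of path length without using any new vertices. Tracking this through the identity $\sum \ell(\bar w+1) = n$ and the binomial counting of weight placements as in \eqref{eq:up}, each ``unit of excess multiplicity'' yields a suppression of order $C_0^2 n/N$, summing in aggregate to a factor of $\exp(C n^{3/2}/N^{1/2})$ and giving the upper bound in part 1; under the stronger assumption $n = o(\sqrt{N})$ of part 2, the same estimate is $o(1)$, yielding the claimed asymptotic.

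The main obstacle I anticipate is the uniform bookkeeping of moment penalties against combinatorial savings across all thick diagrams. One must verify that the moment growth $(C_0\ell)^\ell$ per micro-edge is genuinely dominated by the lost vertex factor $N^{-(\ell-1)(\bar w+1)}$, uniformly in $\ell$ and $\bar w$ and summed against the number $D_1(s)$ of underlying (thin) diagrams. This is the analogue, at the level of general matrices, of the asymptotic $\Sigma_1^{1m}(2n) \sim \Sigma_1(2n)$ established for uniform entries in Part~\ref{P:bern}; once the extra moment weights are folded into the binomial tail estimates of the proof of Proposition~\ref{prop:herm''.bern}, the argument there transplants with only cosmetic modifications.
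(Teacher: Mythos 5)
Your overall strategy coincides with the paper's: restrict to paths satisfying (d$_1^1$), identify the all-multiplicities-two part with $\Sigma_1(2n)$ via (A3$_1$), and show that the higher even moments permitted by (A2) are paid for by the loss of vertex choices. The gap is in the bookkeeping of the error term. A ``thick diagram'' with multiplicities $\ell(\bar e)$ does not determine the path: a closed non-backtracking circuit can interleave its $2\ell$ visits to a repeated edge in many ways, and the number of such routings grows roughly like $\prod_{\bar e}\bigl((2\ell(\bar e)-1)!!\bigr)^{\bar w(\bar e)+1}$ --- factorially in the multiplicities, i.e.\ at the same rate as the moment bound $(C_0\ell)^{\ell}$ itself. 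Your comparison of $(C_0\ell)^{\ell(\bar w+1)}$ against $N^{-(\ell-1)(\bar w+1)}$ omits this routing count, so the per-unit suppression $C_0^2 n/N$ is not justified as stated. Two further misstatements: the factor $\exp(Cn^{3/2}/N^{1/2})$ does not come from the excess multiplicities (it is already present in the $\ell\equiv 1$ genus sum of Proposition~\ref{prop:herm''.bern}); and for part 2 one needs the excess contribution to be $o(1)$ \emph{relative} to $\Sigma_1(2n)$, not that some aggregate factor is $1+o(1)$ --- indeed $\exp(Cn^{3/2}/N^{1/2})$ is not $1+o(1)$ for $n=o(\sqrt N)$.

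The paper closes exactly this gap by passing to \emph{matched} paths: since $(C_0\ell)^{\ell}\leq C^{\ell-1}(2\ell-1)!!$, one has $\Sigma_1^{2,A}(2n)\leq \Sigma_1^{1m,C}(2n)=\sum_{p_{2n}}C^{\,n-\# E(p_{2n})}$, the sum running over matched paths; the matching count absorbs both the moments and the routing multiplicity, every matched path already carries a diagram by the construction of Section~\ref{s:bern.1}, and $n-\# E(p_{2n})$ is bounded by the number $b$ of weight-$(-1)$ edges of that diagram. Summing over $b$ costs only $\sum_{b\geq 0}C^b\binom{3s-1}{b}\binom{n+3s-2}{3s-2-b}\leq \binom{n+3s-2}{3s-2}\exp(C_1s/n)$, a relative factor $1+O(s/n)=1+o(1)$ on each diagram, after which both parts of the lemma follow by repeating the proof of Proposition~\ref{prop:herm''.bern}. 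I recommend either adopting this reduction or explicitly incorporating the routing count into your thick-diagram estimates; as written, the key uniform domination you flag as ``the main obstacle'' is precisely the step that is missing.
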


By (A3$_1$),
\[ \Sigma_1^{2,A} \geq \Sigma^1_1(2n) \geq \Sigma_1(2n) \]
(with $\Sigma^1_1(2n),\Sigma_1(2n)$ as in Section~\ref{s:bern.1}). Therefore
we only need to prove the upper bounds. For a constant $C > 0$, denote
\[ \Sigma_1^{1m,C}(2n) = \sum_{p_{2n}} C^{n - \# E(p_{2n})}~,\]
where now the sum is over {\em matched} paths $p_{2n}$, and $\# E(p_{2n})$ is the number
of distinct edges in $p_{2n}$. By (A2),
\[ \Sigma_1^{2,A}(2n) \leq \Sigma_1^{1m,C}(2n)\]
for some constant $C$ depending only on $C_0$ (note that a factor $(k/C')^k$ from (A2)
is absorbed in the number of matchings.) Thus we may restrict our attention
to $\Sigma_1^{1m,C}(2n)$.

For any weighted diagram corresponding to a path $p_{2n}$, $n - \# E(p_{2n}) \leq b$,
where $b$ is the number of edges $\bar{e}$ with $\bar{w}(\bar{e}) = -1$. This allows us to
follow the proof of Proposition~\ref{prop:herm''.bern}.

\begin{proof}[Proof of Lemma~\ref{l:expw}]
As in the proof of Proposition~\ref{prop:herm''.bern},
\[\begin{split}
\Sigma_1^{1m,C}(2n)
    &\leq \sum_{1 \leq s \leq n} D_1(s) N^{n-s+1}
        \sum_{b \geq 0} C^b \binom{n+3s-2}{3s-2-b} \binom{3s-1}{b} \\
    &\leq \sum_{1 \leq s \leq n} D_1(s) N^{n-s+1} \binom{n+3s-2}{3s-2}
        \sum_{b \geq 0} \frac{(C_1s/n)^b}{b!} \\
    &= \sum_{1 \leq s \leq n} D_1(s) N^{n-s+1} \binom{n+3s-2}{3s-2}
        \exp(C_1 s / n)~.
\end{split}\]
From this point proceed exactly as in the proof of Proposition~\ref{prop:herm''.bern}.
\end{proof}

\section{Backtracking paths}\label{s:gen.bt}

Now let us estimate the contribution of all the other paths; we still assume that
$\beta = 1$.

Let $p_{2n}$ be a path that gives non-zero contribution to (\ref{eq:etr.gen}). We
decompose it into $q_{2(n-m)} = \mathcal{C}(p_{2n})$, a forest $f_{2m} = \mathcal{F}(p_{2n})$,
and the `loops'. Recall that $q_{2(n-m)}$ (which may degenerate
to a single vertex) satisfies (a), (b), (c), (d$_1^1$).
Also,
\begin{equation}\label{e}
\text{every leaf of $f_{2m}$ appears somewhere else on $p_{2n}$ (at least once).}
\end{equation}
These statements follow from the expression for $\gamma(p_{n}, A)$ in Section~\ref{s:gen:cl}.

The paths for which $f_{2m}$ is empty correspond to the expression $\Sigma^{2,A}_\beta(2n)$
that we have studies in the previous section. Let us show that the contribution of the other
paths is negligible. The basic idea is to show that the contribution of paths with forests
is negligible with respect to the contribution of non-backtracking paths, where each tree
of the forest is replaced by the simplest non-backtracking piece (Figure~\ref{fig:diag1}, right).

To start the computations, we need a new kind of diagrams (cf.\ Definitions~\ref{def:diag},
\ref{def:diag.k}.)

\begin{dfn}\label{def:diag-f}
A {\em tree diagram} (or shortly, t-diagram) is a rooted binary planar tree (that is,
a binary rooted tree with fixed imbedding into the plane). A {\em weighted t-diagram}
is a t-diagram together with a weight function $\bar{w}$ from the set of edges to
$\{-1, 0, 1, 2, \cdots \}$.
\end{dfn}

Similarly to Section~\ref{s:bern.1}, we can attach a weighted t-diagram to every tree in
the forest $f_{2m}$.

\begin{lemma}\label{prop:ndiag-t}
The number $D^t(\ell)$ of different t-diagrams with $\ell$ leaves satisfies
\[ D^t(\ell) \leq 4^\ell~.\]
\end{lemma}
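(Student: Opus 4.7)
The plan is to reduce the statement to a standard Catalan-number bound. A rooted binary planar tree with $\ell$ leaves has exactly $\ell-1$ internal (non-leaf) vertices, since every internal vertex has two children; hence the total number of vertices is $2\ell-1$ and the total number of edges is $2\ell-2$. Consequently, the number of distinct t-diagrams with $\ell$ leaves equals the number of rooted planar binary trees with $\ell$ leaves, which is the Catalan number $C_{\ell-1}=\frac{1}{\ell}\binom{2\ell-2}{\ell-1}$.

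First I would make this count rigorous by the standard recursive description: a rooted planar binary tree is either a single leaf, or an ordered pair of rooted planar binary trees joined at a new root. Writing $T(x)=\sum_{\ell\geq 1} D^t(\ell)\,x^\ell$ for the generating function by number of leaves, this recursion gives $T(x)=x+T(x)^2$, whose solution is the Catalan generating function shifted by one, yielding $D^t(\ell)=C_{\ell-1}$.

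Next I would apply the elementary estimate $C_n=\frac{1}{n+1}\binom{2n}{n}\leq \binom{2n}{n}\leq 2^{2n}=4^n$. Taking $n=\ell-1$, this gives $D^t(\ell)\leq 4^{\ell-1}\leq 4^\ell$, as required.

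There is no real obstacle here; the only thing to be careful about is the bookkeeping convention (that a single leaf counts as a tree with one leaf and zero internal vertices), which fixes the index shift in $C_{\ell-1}$. If one wishes to avoid invoking Catalan numbers, an equivalent and equally short route is to encode each t-diagram by its depth-first traversal as a balanced word in $\{(,)\}$ of length $2(\ell-1)$: this injects the t-diagrams with $\ell$ leaves into $\{(,)\}^{2\ell-2}$, a set of size $4^{\ell-1}\leq 4^\ell$, yielding the same bound.
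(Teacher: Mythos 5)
Your proposal is correct, but it takes a genuinely different route from the paper's. You identify $D^t(\ell)$ exactly: a t-diagram is a rooted planar full binary tree, so $D^t(\ell)=C_{\ell-1}=\frac{1}{\ell}\binom{2\ell-2}{\ell-1}$ via the standard recursion $T(x)=x+T(x)^2$ (or the balanced-parenthesis encoding), and $C_{\ell-1}\le 4^{\ell-1}\le 4^{\ell}$. The paper instead builds each t-diagram by gluing leaves one at a time in the order the path visits them, records for the $i$-th new leaf the index $d_i$ of the attachment edge on the branch from the root to the previously glued leaf, observes $\sum_i d_i\le 2(\ell-1)$, and bounds the number of admissible sequences by $\binom{2(\ell-1)}{\ell-1}\le 4^{\ell}$. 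Your count is exact and hence sharper by a factor of order $\ell$; the paper's encoding is only an injection, but it mirrors the path-driven constructions used elsewhere (it is the tree analogue of the automaton of Section~\ref{s:bern.2}) and sets up the traversal data that the subsequent weight-placement estimate relies on. One bookkeeping caveat: right after the lemma the paper takes a t-diagram with $\ell$ leaves to have $2\ell-1$ edges, i.e.\ the root is a degree-one vertex attached by a pendant edge and not counted among the leaves, whereas you count $2\ell-2$ edges. This does not change the number of trees, so your bound stands, but the convention matters for the binomial coefficient $\binom{2m'+2\ell-2}{2\ell-2}$ in the argument that follows.
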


\begin{proof}
A t-diagram with $i+1$ leaves is obtained by gluing a leaf to a t-diagram with $i$
leaves. The new leaf can be glued to one of the edges on the branch connecting the root
to the last-glued leaf (see Figure~\ref{fig:addleaf}). Denote by $d_i$ the index of the latter edge on the
branch, so that $d_i = 1$ if the edge is adjacent to a leaf. Then
\[ \sum_{i=1}^{\ell-1} d_i \leq 2 (\ell - 1)~; \]
therefore the number of ways to choose the indices $d_i$ is at most
\[ \binom{2(\ell-1)}{\ell-1} \leq 4^\ell~.\]
This proves the inequality.

\begin{figure}[htp]
\centering
\includegraphics[totalheight=0.2\textheight]{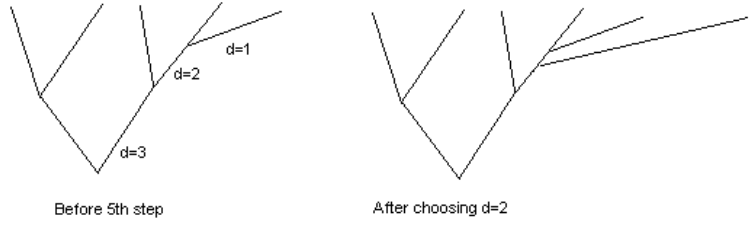}
\caption[Fig. 5]{Adding a leaf to a tree}\label{fig:addleaf}
\end{figure}
\end{proof}

Let us bound the number of ways to construct a tree $t_{2m'}$. The number of edges
on a t-diagram with $\ell$ leaves is $2 \ell - 1$; hence the number of ways to place
the weights is thus at most
\[ \binom{2m' + 2 \ell - 2}{2 \ell - 2} \leq (C_1 m' /\ell)^{2\ell - 2}\]
(since $\ell \leq m'$.)

According to (\ref{e}), every leaf should appear somewhere else on $p_{2n}$.
Thus we can assume there are $\ell_1 \leq \ell/2$ pairs of coinciding leaves,
and every one of the remaining $\ell - 2 \ell_1$ coincides with some edge that
is not a leaf. The number of ways to choose the vertices on the leaves is
therefore at most
\[ \sum_{0 \leq \ell_1
    \leq \ell/2} \binom{\ell}{2\ell_1} \, n^{\ell-2\ell_1} \,
        \frac{(2\ell_1)!}{2^{\ell_1} \ell_1!} \, N^{2 \ell_1}
\leq \begin{cases}
    (C\ell N^2)^{\ell/2}~,              &\ell \equiv 0 \mod 2 \\
    n N^{-1} (C\ell N^2)^{\ell/2}~,  &\ell \equiv 1 \mod 2
\end{cases}\]
(since $n \ll N^{1/2}$.)

The number of ways to choose the other vertices is at most $N^{m'-2\ell}$. Hence the number of
ways to choose all the vertices is at most
\begin{equation}\label{eq:labt}
\begin{cases}
    N^{m'} (C\ell /N^2)^{\ell/2}~,          &\ell \equiv 0 \mod 2 \\
    nN^{m'-1}(C\ell /N^2)^{\ell/2}~,        &\ell \equiv 1 \mod 2~.
\end{cases}
\end{equation}
Thus the total number of trees is bounded by
\begin{multline*}
N^{m'} \Big\{
        \sum_{\ell \equiv 0 \mod 2} 4^\ell (C_1 m' /\ell)^{2\ell - 2} (C\ell /N^2)^{\ell/2} \\
    +   \sum_{\ell \equiv 1 \mod 2} 4^\ell (C_1 m' /\ell)^{2\ell - 2} \frac{n}{N} (C\ell /N^2)^{\ell/2}
    \Big\} \leq C_2 N^{m'-2} (m'^2 + n)~.
\end{multline*}

A more careful computation (in the spirit of Section~\ref{s:gen.nbt}) shows that the last
estimate remains valid if we count every tree with a weight, depending on higher moments of $A$,
and take the `loops' into account.

Therefore the number of $t$-tuples of trees with $m$ edges is at most
\[ \sum_{m_1 + \cdots + m_t = m} \prod_{j=1}^t C_2 N^{m_j-2} (m_j^2 + n)~,\]
and the total contribution of paths $p_{2n}$ with $t$ trees and $m$ edges on these
trees is at most
\[ C_3 \, \binom{n-m+1}{t} \, \Sigma_1(2(n-m))
    \sum_{m_1 + \cdots + m_t = m} \prod_{j=1}^t C_2 N^{m_j-2} (m_j^2 + n)~.\]
It is not hard to check that the sum of these terms over $m,t>0$ is negligible with
respect to $\Sigma_1(2n)$, for $n = o(\sqrt{N})$.

This proves the claim at the beginning of this section; namely, item 3.\ of
Proposition~\ref{prop:herm''.bern} is valid in the generality of Theorem~\ref{th:herm}
(for $\beta = 1$). A similar argument allows to extend item 2.\
of Proposition~\ref{prop:herm''.bern}.

\begin{proof}[Proof of Theorem~\ref{th:herm''}]
We shall only sketch the argument for $k=1$; the extension is straightforward.
For $\beta = 1$, we have just proved the stronger conclusion of
Proposition~\ref{prop:herm''.bern}. \\
\vspace{1mm}\noindent
For $\beta = 2$, the error terms for $A$ are dominated by those for $\widetilde{A}$, where
\[ \widetilde{A}_{uv} = \pm |A_{uv}|~,\]
and the random signs are independent above the diagonal; $\widetilde{A}$ satisfies the assumptions
of the theorem with $\beta=1$.
\end{proof}

\part{Sample covariance matrices}\label{P:cov}
\section{Proof of Theorem~\ref{th:cov''}}

As in the Hermitian case, we start with the special cases
\begin{equation}\label{ex:bern.cov}
\begin{split}
\beta=1, &\quad X_{uv} = \pm 1 \quad \text{with prob.\ $1/2$}~, \\
\beta=2, &\quad X_{uv} \sim \textrm{unif}(S^1)~.
\end{split}
\end{equation}

Define a sequence of polynomials $Q_n = Q_{n,M,N}$ as follows:
\[\begin{split}
Q_0(x)  &= 1~, \quad Q_1(x) = x - N~, \\
Q_{n}(x) &= (x - (M+N-2))Q_{n-1}(x) - (M-1)(N-1)Q_{n-2}(x)~.
\end{split}\]

\begin{lemma}\label{l:qnvn}
\begin{multline*}
Q_n(x) = ((M-1)(N-1))^{n/2} \Big\{ U_{n} \left( \frac{x - (M+N-2)}{2 \sqrt{(M-1)(N-1)}}\right) \\
    + \frac{M-2}{\sqrt{(M-1)(N-1)}} U_{n-1} \left( \frac{x - (M+N-2)}{2 \sqrt{(M-1)(N-1)}}\right) \Big\}
\end{multline*}
\end{lemma}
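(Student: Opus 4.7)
The plan is to mirror the proof of Lemma~\ref{l:pnun}: check the base cases by direct substitution, then verify that both sides satisfy the same three-term recurrence under the rescaling $y = (x-(M+N-2))/(2\sqrt{(M-1)(N-1)})$.

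First, I would verify the cases $n=0,1$. Using the convention $U_{-1}\equiv 0$, the right-hand side at $n=0$ equals $U_0(y)=1=Q_0(x)$. At $n=1$, the right-hand side is
\[
\sqrt{(M-1)(N-1)}\,\Bigl(2y + \tfrac{M-2}{\sqrt{(M-1)(N-1)}}\Bigr)
 = (x-(M+N-2)) + (M-2) = x-N = Q_1(x),
\]
since $2y\sqrt{(M-1)(N-1)} = x-(M+N-2)$.

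Next, define $R_n(y) := U_n(y) + \tfrac{M-2}{\sqrt{(M-1)(N-1)}} U_{n-1}(y)$, and apply the Chebyshev recurrence $U_n(y)=2yU_{n-1}(y)-U_{n-2}(y)$ (equation~(\ref{eq:un.rec})) termwise. This immediately gives $R_n(y)=2y\,R_{n-1}(y)-R_{n-2}(y)$. Multiplying by $((M-1)(N-1))^{n/2}$ and regrouping,
\[
((M-1)(N-1))^{n/2} R_n = \bigl(2y\sqrt{(M-1)(N-1)}\bigr)\,((M-1)(N-1))^{(n-1)/2} R_{n-1} - (M-1)(N-1)\,((M-1)(N-1))^{(n-2)/2} R_{n-2}.
\]
Since $2y\sqrt{(M-1)(N-1)} = x-(M+N-2)$, the right-hand side of the claimed identity satisfies exactly the defining recurrence of $Q_n$. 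Combined with the matching base cases, the identity follows by induction.

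There is no real obstacle here; this is a routine verification, and the specific rescaling and the correction term $\tfrac{M-2}{\sqrt{(M-1)(N-1)}}U_{n-1}$ are precisely what is needed to absorb the asymmetry $Q_1(x)=x-N$ (rather than $x-(M+N-2)$) relative to the symmetric recurrence for $n\ge 2$. The only mild point worth noting is that the coefficient in front of $U_{n-1}$ is exactly the one that makes the base case $n=1$ work out, and the recurrence then propagates it unchanged — paralleling how Lemma~\ref{l:pnun} had a $-\frac{1}{N-2}U_{n-2}$ correction term of constant shape.
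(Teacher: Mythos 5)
Your proof is correct and is exactly the argument the paper intends: the paper only remarks that Lemma~\ref{l:qnvn} ``can be easily proved by induction'' in the same way as Lemma~\ref{l:pnun}, and your verification of the base cases $n=0,1$ together with the observation that both sides satisfy the recurrence $Q_n=(x-(M+N-2))Q_{n-1}-(M-1)(N-1)Q_{n-2}$ via (\ref{eq:un.rec}) fills in precisely that induction.
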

Similarly to Lemma~\ref{l:pnun}, Lemma~\ref{l:qnvn} can be easily proved by induction.

\begin{cl} Let $X$ be an $M \times N$ matrix with entries on the unit circle,
$B = X X^\ast$. Then
\[ Q_{n}(B)_{u_0 u_n} = \sum_{p_n} X_{u_0 v_0} X_{u_1 v_0} X_{u_1 v_1} X_{u_2 v_1}
    \cdots X_{u_{n-1} v_{n-1}} X_{u_n v_{n-1}}~,\]
where the sum is over paths $p_n = u_0 v_0 u_1 v_1 \cdots v_{n-1}u_n$ in the complete bipartite
graph $K_{M,N}$ (that is, $1 \leq u_j \leq M$, $1 \leq v_j \leq N$), such that
\begin{description}
\item[(\^b)] $u_{j-1} \neq u_j$ for $1 \leq j \leq n$ and $v_{j-1} \neq v_j$ for $1 \leq j \leq n-1$.
\end{description}
\end{cl}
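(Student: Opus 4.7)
The plan is to follow the pattern of Claim~\ref{cl} in the Hermitian case and prove the identity by induction on $n$. Denote the right-hand side by $R_n(u_0,u_n)$. Since both $\{Q_n(B)_{u_0u_n}\}_n$ and $\{R_n(u_0,u_n)\}_n$ are determined by their values at $n=0,1$ together with a three-term recurrence, it suffices to verify that $R_n$ satisfies the recurrence
\[
R_n = B R_{n-1} - (M+N-2)\, R_{n-1} - (M-1)(N-1)\, R_{n-2},
\]
with initial conditions $R_0(u_0,u_n) = \delta_{u_0 u_n}$ and $R_1(u_0,u_n) = B_{u_0 u_1} - N\delta_{u_0 u_1}$. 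The base case $n=1$ follows immediately: the admissible paths are $u_0 v_0 u_1$ with $u_0 \neq u_1$ and $v_0$ arbitrary, so the sum equals $B_{u_0 u_1}$ off the diagonal, while the diagonal correction is supplied by $B_{u_0 u_0} = \sum_v |X_{u_0 v}|^2 = N$, using that the entries have unit modulus.

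For the inductive step, expand $(B R_{n-1})(u_0,u_n) = \sum_{u_1,v_0} X_{u_0 v_0}\bar X_{u_1 v_0}\, R_{n-1}(u_1,u_n)$ and compare it to $R_n(u_0,u_n)$. The two agree except that $R_n$ enforces the two additional non-backtracking constraints $u_0 \neq u_1$ and $v_0 \neq v_1$. By inclusion--exclusion the discrepancy $(B R_{n-1}) - R_n$ splits as $T_1 + T_2 - T_{12}$, where $T_1$ collects paths with $u_0 = u_1$, $T_2$ with $v_0 = v_1$, and $T_{12}$ with both. Each of these collapses via the identity $|X_{uv}|^2 = 1$: in $T_1$ the factor $X_{u_0 v_0}\bar X_{u_0 v_0}=1$ lets $v_0$ run freely, giving $T_1 = N R_{n-1}(u_0,u_n)$; in $T_{12}$ both such factors cancel and what remains is exactly $R_{n-1}(u_0,u_n)$.

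The term $T_2$ is the subtle one: after $|X_{u_1 v_0}|^2 = 1$ cancels, the free index $u_1$ is constrained only by $u_1 \neq u_2$ (inherited from $R_{n-1}$), contributing a factor $M-1$, and the reduced path from $u_0$ to $u_n$ satisfies all $R_{n-1}$-constraints except possibly $u_0 \neq u_2$. Splitting on whether $u_0 \neq u_2$ or $u_0 = u_2$, and in the latter case using $|X_{u_0 v_0}|^2 = 1$ to sum $v_0$ over $\{1,\ldots,N\}\setminus\{v_2\}$ for a further factor $N-1$, yields $T_2 = (M-1) R_{n-1}(u_0,u_n) + (M-1)(N-1) R_{n-2}(u_0,u_n)$. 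Assembling, $T_1 + T_2 - T_{12} = (N + (M-1) - 1) R_{n-1} + (M-1)(N-1) R_{n-2} = (M+N-2) R_{n-1} + (M-1)(N-1) R_{n-2}$, which is exactly the defining recurrence for $Q_n(B)$. The main obstacle is purely combinatorial bookkeeping: identifying which of the constraints (\^b) survive in each of $T_1, T_2, T_{12}$ and applying $|X_{uv}|^2 = 1$ consistently so that the coefficients land on precisely $M+N-2$ and $(M-1)(N-1)$ rather than something nearby.
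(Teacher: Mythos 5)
Your argument is essentially the paper's proof run in mirror image: the paper appends a step on the right, multiplying by $Q_1(B)=B-N\mathbf{1}$ so that the constraint $u_n\neq u_{n+1}$ is enforced automatically and only the $v$-constraint at the junction can fail (whence a three-way case split yielding $Q_{n+1}+(M-2)Q_n+(M-1)(N-1)Q_{n-1}$), whereas you prepend a step by left-multiplying by $B$ and handle the two possibly violated constraints by inclusion--exclusion. Your $T_1$, $T_{12}$, and the first piece of $T_2$ are computed correctly, and the coefficients assemble to the same recurrence; the two routes are equivalent up to this cosmetic difference.

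There is, however, one edge case that your inductive step misses --- and the paper's own proof has exactly the same blind spot. In your evaluation of $T_2$ in the sub-case $u_0=u_2$, the merged vertex $v_0=v_1$ is constrained only by $v_1\neq v_2$, a constraint inherited from $R_{n-1}(u_1,u_n)$ which exists only when $n\geq 3$. For $n=2$ there is no $v_2$, the sum over $v_1$ runs over all $N$ values, and one gets $T_2=(M-1)R_1+(M-1)N\,R_0$ rather than $(M-1)R_1+(M-1)(N-1)R_0$. A direct computation confirms
\[ R_2 \;=\; \bigl(B-(M+N-2)\mathbf{1}\bigr)\bigl(B-N\mathbf{1}\bigr) \;-\; N(M-1)\,\mathbf{1}~, \]
which differs from $Q_2(B)$, as the latter is generated by the displayed recurrence, by $(M-1)\mathbf{1}$. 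The resolution is that, exactly as $P_2(x)=x^2-(N-1)$ is defined separately in (\ref{eq:defpn}) and the three-term recurrence for $P_n$ only begins at $n=3$, the polynomial $Q_2$ must be defined separately (with $N(M-1)$ in place of $(M-1)(N-1)$), the recurrence used only from $n=3$ on, and Lemma~\ref{l:qnvn} adjusted by a correction term analogous to the $-\tfrac{1}{N-2}U_{n-2}$ appearing in Lemma~\ref{l:pnun}. With $n=2$ treated as a second base case in this way, your induction is sound for all $n\geq 3$ without further change.
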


\begin{proof} For $n = 0,1$, the verification is straightforward. Induction step:
\[\begin{split}
\left( Q_n(B) (B - N \mathbf{1}) \right)_{u_0 u_{n+1}}
    &= \left( Q_n(B) Q_1(B) \right)_{u_0 u_{n+1}} \\
    &= \sum_{p_{n+1}} X_{u_0 v_0} X_{u_1 v_0} \cdots X_{u_{n-1} v_{n}} X_{u_n v_{n-1}}
        X_{u_n v_n} X_{u_{n+1} v_n}~,
\end{split}\]
where the sum is over paths $p_{n+1}$ that satisfy (\^b), except perhaps for the
inequality $v_{n-1} \neq v_{n}$. Now separate the paths in 3 categories: $v_{n-1} \neq v_{n}$;
$v_{n-1} = v_{n}$, $u_{n-1} \neq u_{n+1}$; $v_{n-1} = v_{n}$, $u_{n-1} = u_{n+1}$, which yield
$Q_{n+1}(B)_{u_0 u_{n+1}}$, $(M-2)Q_n(B)_{u_0 u_{n+1}}$, and $(M-1)(N-1)Q_{n-1}(B)_{u_0 u_{n+1}}$,
respectively.
\end{proof}

Thus $\EE \tr Q_n(B)$ is equal to the number of paths $p_n = u_0 v_0 u_1 v_1 \cdots v_{n-1}u_n$
that satisfy (\^b) and also (\^c), (\^d$_\beta^1$):
\begin{description}
\item[(\^c)] $u_n = u_0$;
\item[(\^d$_1^1$)] for any $u, v$,
\[ \# \{ j \, | \, (u_j, v_j) = (u, v) \} \equiv \# \{ j \, | \, (u_j, v_j) = (v, u) \} \mod 2~;\]
\item[(\^d$_2^1$)] for any $u, v$,
\[ \# \{ j \, | \, (u_j, v_j) = (u, v) \} = \# \{ j \, | \, (u_j, v_j) = (v, u) \}~.\]
\end{description}

Denote the number of such paths by $\hat{\Sigma}^1_\beta(n)$. The remainder of this section is
devoted to the following analogue of Proposition~\ref{prop:herm''.bern}:
\begin{prop}\label{prop:cov''.bern}\hfill
\begin{enumerate}
\item $\hat{\Sigma}_\beta^1(n) \leq Cn (MN)^{n/2} \exp(C n^{3/2} / M^{1/2})$;
\item if $n \ll M^{1/2}$,
\[\begin{split}
\hat{\Sigma}_\beta^1(n) = (1 + o(1)) \, &(MN)^{n/2} \\
    \Big\{ &(1 + \sqrt{M/N}) \, \phi_{\beta}(n, (M^{-1/2} + N^{-1/2})^{-2})  \\  + 
    (-1)^n &(1 - \sqrt{M/N}) \, \phi_{\beta}(n, (M^{-1/2} - N^{-1/2})^{-2}) \Big\}~,
\end{split}\]
where $\phi_{\beta}$ is as in Remark~\ref{r:phi}.
\end{enumerate}
\end{prop}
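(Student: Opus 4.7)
The plan is to mirror the proof of Proposition~\ref{prop:herm''.bern}, replacing paths on the complete graph by paths on the complete bipartite graph $K_{M,N}$. First I would introduce \emph{bipartite diagrams}: 2-colored multigraphs (with M-type and N-type vertices, and edges connecting opposite types) equipped with a non-backtracking circuit satisfying (\^d$_\beta^1$), with a root of degree~$1$ and all other vertices of degree~$3$. The (i)--(v) construction of Section~\ref{s:bern.1} then associates a weighted bipartite diagram to each matched bipartite path contributing to $\hat{\Sigma}^1_\beta(n)$. Since bipartiteness is automatic from color-alternation along the circuit, the automaton of Section~\ref{s:bern.2} generates all such diagrams, so the total count is still $D_\beta(s)$ with $\#\bar{E}=3s-1$ and $\#\bar{V}=2s$.

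Next I would establish the bipartite analogue of Claim~\ref{cl_vert}. A bipartite diagram splits its $2s$ vertices as $\#\bar{V}_M+\#\bar{V}_N=2s$, and because the circuit alternates colors, each edge $\bar{e}$ of weight $\bar{w}(\bar{e})$ determines a unique pair $(\bar{w}_M(\bar{e}),\bar{w}_N(\bar{e}))$ with $\bar{w}_M+\bar{w}_N=\bar{w}$ and $|\bar{w}_M-\bar{w}_N|\leq 1$ (one value being fixed by the coloring of the endpoints of $\bar{e}$). The number of labelings is then at most $M^{\#\bar{V}_M+W_M}N^{\#\bar{V}_N+W_N}$, where $W_M=\sum_{\bar{e}}\bar{w}_M(\bar{e})$, $W_N=\sum_{\bar{e}}\bar{w}_N(\bar{e})$, and $W_M+W_N=n-3s+1$. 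Combining this with $D_\beta(s)\leq(Cs)^s$ and the weight-placement binomial $\binom{n+3s-2}{3s-2}$, the argument from the proof of item~2 of Proposition~\ref{prop:herm''.bern} goes through, yielding statement~1 with $M^{1/2}$ in place of $N^{1/2}$ in the exponential (since the per-step geometric factor is bounded by $Cs^2/M$ rather than $Cs^2/N$, using $M\leq N$). Applying Lemma~\ref{l:qnvn} converts the resulting bound on $\EE\tr Q_n(B)$ into the claimed bound on $\hat{\Sigma}^1_\beta(n)$.

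For the asymptotic~2, I would restrict to $s\ll n^{1/2}$, where diagrams with $\bar{w}(\bar{e})\geq1$ dominate. Summing the per-diagram contribution over all weight allocations factorizes across edges; each edge $\bar{e}$ contributes a geometric generating function whose sum, keyed by the edge's coloring parity, reduces via
\[
\sum_{a+b=w}\tbinom{w}{a}M^{a}N^{b}\,[a\equiv\epsilon\bmod 2]=\tfrac12\bigl((M+N)^{w}+(-1)^{\epsilon}(N-M)^{w}\bigr)
\]
to a linear combination of two terms. Using $(\sqrt{M}\pm\sqrt{N})^{2}=M+N\pm 2\sqrt{MN}$ and the definition of $\phi_\beta$ in Remark~\ref{r:phi}, these combine so that the leading factor is $(MN)^{n/2}$ times a sum of $\phi_\beta$ evaluated at the two effective sizes $(M^{-1/2}\pm N^{-1/2})^{-2}=MN/(\sqrt{M}\pm\sqrt{N})^2$, the $(-1)^n$ sign on the second term reflecting the global parity of the circuit. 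The prefactors $1\pm\sqrt{M/N}$ arise from the degree-one root vertex $\bar{u}_0$: although the original $u_0$ is M-type, the image of the root after (iv)--(v) contraction may be either M- or N-type, contributing asymmetrically in $M$ and $N$.

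The main obstacle is the exact matching in the final step: tracking how the per-edge parity constraints imposed by bipartite alternation combine across the diagram so that, after summing weight-allocation generating functions, the result assembles into precisely the two-$\phi_\beta$ decomposition at the sizes $(M^{-1/2}\pm N^{-1/2})^{-2}$ with the claimed prefactors $1\pm\sqrt{M/N}$. The Hermitian analogue in Remark~\ref{r:phi} produces a single $\phi_\beta(n;N)$ doubled by $(-1)^n$; here the bipartite asymmetry $M\neq N$ splits this into two distinct effective sizes, and careful bookkeeping of the root-leaf parity is needed to pin down the asymmetric prefactors.
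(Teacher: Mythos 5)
Your overall strategy coincides with the paper's: reduce to the same automaton-generated diagrams, count vertex labelings using the bipartite structure, and extract the two terms from a parity-restricted binomial identity. However, the key identity in your item~2 is applied at the wrong level, and this is exactly where the proof has a gap. As you correctly observe in your second paragraph, once the types (colors) of the endpoints of an edge $\bar{e}$ are fixed, the split $\bar{w}(\bar{e})=\bar{w}_M(\bar{e})+\bar{w}_N(\bar{e})$ is \emph{determined} by the strict alternation of the path along that edge --- there is no binomial choice $\binom{w}{a}$ of which erased vertices are $M$-type. Your third-paragraph identity $\sum_{a+b=w}\binom{w}{a}M^aN^b\,[a\equiv\epsilon\bmod 2]$ therefore does not describe any per-edge weight-allocation generating function, and it contradicts your own earlier (correct) observation. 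What actually produces the factors $(M^{-1/2}\pm N^{-1/2})^{2s-2}$ and the prefactors $1\pm\sqrt{M/N}$ is the sum over the colorings of the $2s$ \emph{diagram vertices} (the root being forced to be $M$-type), subject to a single global parity condition $\bar{V}_+\equiv n\bmod 2$. To carry this out one needs the paper's auxiliary lemma expressing the total numbers of path-vertices of each type as $V_+=\tfrac{n+2-\bar{V}_+}{2}$ and $V_-=\tfrac{n-2s+\bar{V}_+}{2}$, purely in terms of $n$, $s$ and the number $\bar{V}_+$ of $M$-type diagram vertices; this is proved via the signed sum $S=\sum_{\bar{e}=(\bar{w},\bar{w}')}(\sigma(\bar{w})+\sigma(\bar{w}'))$, using that all non-root degrees equal $3$. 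With that in hand, $\sum_{\bar{V}_+\equiv n}\binom{2s-1}{\bar{V}_+-1}M^{V_+}N^{V_-}$ collapses by the binomial theorem to $\tfrac12(MN)^{n/2}\bigl\{(1+\sqrt{M/N})(M^{-1/2}+N^{-1/2})^{2s-2}+(1-\sqrt{M/N})(M^{-1/2}-N^{-1/2})^{2s-2}\bigr\}$; in particular the prefactors $1\pm\sqrt{M/N}$ come out of this algebra and the fact that the root is always $M$-type, not from any ambiguity in the type of the contracted root.

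Two secondary points. First, the parities of the edge weights themselves are constrained (odd weight when the two endpoints of $\bar{e}$ have equal types, even otherwise), so the count of weight placements must respect these constraints; the paper isolates this as Lemma~\ref{l:dec.cov}, and your argument needs an equivalent in place of the unconstrained binomial $\binom{n+3s-2}{3s-2}$. Second, $\hat{\Sigma}^1_\beta(n)$ equals $\EE\tr Q_n(B)$ by definition, so Lemma~\ref{l:qnvn} plays no role inside the proof of the proposition (it is used afterwards, to pass to $V_{n,M/N}$). Your item~1 argument, including the appearance of $M^{1/2}$ rather than $N^{1/2}$ in the exponential via $M\leq N$, is sound.
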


As in Section~\ref{s:bern.1}, we consider matched paths; every (matched) path corresponds
to a diagram. Let us study the number of paths corresponding to a given diagram. To place
the weights, we need the following elementary lemma.

\begin{lemma}\label{l:dec.cov}
The number of ways to represent a non-negative integer $m$ as
\begin{equation}\label{eq:dec.cov}
m = m_1' + \cdots + m_a' + m_1'' + \cdots + m_b''
\end{equation}
with $m_j' \equiv 1 \mod 2$ and $m_j'' \equiv 0 \mod 2$ is given by
\[ \delta(m, a, b) = \begin{cases}
    0, &a \neq m \mod 2 \\
    \binom{\frac{m-a}{2} + a + b - 1}{a+b-1}~, &a \equiv m \mod 2~.
\end{cases}\]
\end{lemma}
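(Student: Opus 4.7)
The plan is to reduce to a classical stars-and-bars count once the parities are matched. First, observe that a sum of $a$ odd integers has parity $a \mod 2$, while $m_1'' + \cdots + m_b''$ is always even; hence any representation forces $m \equiv a \mod 2$. This immediately yields $\delta(m, a, b) = 0$ in the case $a \not\equiv m \mod 2$.

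Now assume $a \equiv m \mod 2$, so that $(m-a)/2$ is a non-negative integer (non-negativity holds because each $m_j' \geq 1$ forces $m \geq a$). The substitution
\[ m_j' = 2 n_j' + 1 \quad (j = 1, \ldots, a), \qquad m_j'' = 2 n_j'' \quad (j = 1, \ldots, b) \]
is a bijection between representations (\ref{eq:dec.cov}) with positive odd $m_j'$ and non-negative even $m_j''$ and tuples of non-negative integers $(n_1', \ldots, n_a', n_1'', \ldots, n_b'')$ satisfying
\[ \frac{m-a}{2} \;=\; n_1' + \cdots + n_a' + n_1'' + \cdots + n_b''. \]
The number of such tuples is the standard stars-and-bars count $\binom{(m-a)/2 + (a+b) - 1}{(a+b)-1}$, which matches the second branch of the claimed formula.

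There is essentially no obstacle: the lemma is elementary, and the only point requiring care is treating the parity case separately, which is handled at the outset. As a sanity check one may recover the same formula via generating functions, since the number of ordered representations is the coefficient of $x^m$ in
\[ \left(\frac{x}{1-x^2}\right)^{\!a} \left(\frac{1}{1-x^2}\right)^{\!b} \;=\; \frac{x^a}{(1-x^2)^{a+b}}, \]
which vanishes when $m \not\equiv a \mod 2$ and otherwise equals $\binom{(m-a)/2 + a + b - 1}{a+b-1}$ by the negative-binomial expansion.
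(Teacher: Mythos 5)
Your proof is correct and follows essentially the same route as the paper: the parity observation disposes of the first case, and the substitution $m_j' = 2n_j'+1$, $m_j'' = 2n_j''$ reduces the second case to the standard stars-and-bars count of $(m-a)/2$ as a sum of $a+b$ non-negative integers, which is exactly the equivalence the paper invokes. The generating-function check is a harmless bonus.
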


\begin{proof}
The first part is obvious. The second part follows from the equivalence between
(\ref{eq:dec.cov}) and
\[ \frac{m-a}{2} = \frac{m_1'-1}{2} + \cdots + \frac{m_a'-1}{2}
    + \frac{m_1''}{2} + \cdots + \frac{m_b''}{2}~.\]
\end{proof}

Now consider a diagram corresponding to a certain $s \geq 1$ (in the sense of
Claims~\ref{cl:diag1},\ref{cl:diag2}.) Let $p_n$ be a path corresponding to this diagram.
Denote by $V_+$ ($V_-$) the number of vertices of the 1$^\textrm{st}$ (2$^\textrm{nd}$) type
(that is, $u_j$ or $v_j$, respectively). Denote by $\bar{V}_+$ ($\bar{V}_- = 2s - \bar{V}_+$)
the number of vertices of the 1$^\textrm{st}$ (2$^\textrm{nd}$) type on the diagram.

\begin{lemma} In the notation above,
\[ V_+ = \frac{n+2-\bar{V}_+}{2}~, \quad V_- = \frac{n-2s+\bar{V}_+}{2}~.\]
\end{lemma}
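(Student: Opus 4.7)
The plan is to derive two linear relations between $V_+$ and $V_-$ and then solve.

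First, I would compute $V_+ + V_-$, the total number of distinct labels used on the path. Each diagram vertex contributes one label, while each diagram edge $\bar e$ contributes $\bar w(\bar e)$ additional labels via the degree-$2$ vertices erased in step (iv). Using $\bar E = 3s-1$ from Claims~\ref{cl:diag1}--\ref{cl:diag2}, together with the identity $\sum_{\bar e}(\bar w(\bar e)+1) = n$ (the total number of edges in the bipartite multigraph $G$ of step (i)), this gives
\[ V_+ + V_- = 2s + (n-3s+1) = n - s + 1. \]

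Next, I would derive a second relation from the bipartite structure of $G$ (whose edges all connect a $u$-label of type $+$ to a $v$-label of type $-$). Classify the diagram edges into $E_{++},E_{+-},E_{--}$ according to the types of their endpoints. Along an edge $\bar e$, the corresponding walk of length $\bar w(\bar e)+1$ in $G$ has alternating $\pm$ types, so $\bar w(\bar e)$ is forced to be odd when $\bar e \in E_{++}\cup E_{--}$ and even when $\bar e \in E_{+-}$; counting erased vertices of each type on each edge by parity yields
\[ V_+ = \bar V_+ + \tfrac{1}{2}\Bigl(\textstyle\sum_{\bar e} \bar w(\bar e) + E_{--} - E_{++}\Bigr). \]

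Finally, I would use a degree-count in the diagram. The root is of type $+$ and has degree $1$, while all other vertices have degree $3$, so summing degrees over $+$-vertices and over $-$-vertices gives $2 E_{++} + E_{+-} = 3\bar V_+ - 2$ and $2 E_{--} + E_{+-} = 3 \bar V_-$. Subtracting these and using $\bar V_+ + \bar V_- = 2s$ yields $E_{--} - E_{++} = 1 + 3s - 3\bar V_+$. Substituting this together with $\sum_{\bar e}\bar w(\bar e) = n - 3s + 1$ into the expression above gives $V_+ = (n+2-\bar V_+)/2$, and then $V_- = (n-s+1) - V_+ = (n-2s+\bar V_+)/2$.

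The main obstacle is the bookkeeping in the last step: I would need to verify, in the bipartite covariance setting, that the root inherits type $+$ and degree $1$, and that the parity argument for counting erased internal vertices is not disturbed by the auxiliary steps (ii)--(iii) of the diagram construction, which may introduce edges of weight $-1$ and could a priori contaminate the alternation of types along $\bar e$.
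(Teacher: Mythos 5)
Your argument is correct and is essentially the paper's own proof: the paper combines the same count $V_+ + V_- = \#\bar V + \sum\bar w(\bar e) = n-s+1$ with a signed degree sum $S=\sum_{\bar e}(\sigma(\bar w)+\sigma(\bar w'))=2(E_{++}-E_{--})$, which is exactly the quantity you extract from your degree relations $2E_{++}+E_{+-}=3\bar V_+-2$ and $2E_{--}+E_{+-}=3\bar V_-$, and the alternation-of-types observation along each diagram edge is the same in both. The bookkeeping issue you flag at the end (weight $-1$ edges from steps (ii)--(iii)) is likewise passed over silently in the paper, so it does not distinguish your route from theirs.
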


\begin{proof} Let
\[ \sigma(\bar{w}) = \begin{cases}
    +1, &\text{$\bar{w}$ is of the first type}\\
    -1, &\text{$\bar{w}$ is of the second type}~.
\end{cases}\]
Consider the sum
\[ S = \sum_{\bar{e} = (\bar{w}, \bar{w}')} (\sigma(\bar{w}) + \sigma(\bar{w}'))~,\]
where the sum is over all the edges in the diagram. The root $\bar{u}_0$ is of the first type;
every other vertex $\bar{w}$ is counted with coefficient $3 \, \sigma(\bar{w})$. Hence
\[ S = 1 + 3 (\bar{V}_+ - 1) - 3 (\bar{V}_- - 1) = 1 + 3 (2 \bar{V}_+ - 2s - 1)~.\]
Therefore
\[ V_+ - V_- = \bar{V}_+ - \bar{V}_- - \frac{1}{2} S = - \bar{V}_+ + s + 1~.\]
On the other hand,
\[ V_+ + V_- = \# \bar{V} + \sum \bar{w}(\bar{e}) = n + 1 - s~.\]
The statement follows.
\end{proof}
The collection of paths corresponding to a given diagram and given choice of types of
the vertices $\bar{w} \in \bar{V}$ is non-empty iff $\bar{V}_+ \equiv n \mod 2$. Therefore
the number of ways to choose the vertices on the diagram is at most
\[\begin{split}
    &\sum_{\bar{V}_+ \equiv n \mod 2} \binom{2s-1}{\bar{V}_+ - 1}
        M^{\frac{n+2-\bar{V}_+}{2}} N^{\frac{n-2s+\bar{V}_+}{2}} \\
    &\qquad= (MN)^\frac{n+1}{2} \sum \binom{2s-1}{\bar{V}_+ - 1}
        (M^{-1/2})^{\bar{V}_+-1} (N^{-1/2})^{2s - (\bar{V}_+-1)} \\
    &\qquad= \frac{1}{2} (MN)^{n/2} \Big\{ \left(1+\sqrt{M/N}\right)
        (M^{-1/2}+N^{-1/2})^{2s-2} \\
        &\qquad\qquad+  \left(1-\sqrt{M/N}\right) (M^{-1/2}-N^{-1/2})^{2s-2} \Big\}~.
\end{split} \]
Together with Lemma~\ref{l:dec.cov}, this proves the first statement of
Proposition~\ref{prop:cov''.bern}. To prove the second part, we argue exactly as in
the proof of Proposition~\ref{prop:herm''.bern}, item 3.

\begin{rmk}\label{r:phi.k.cov}
The extension to higher $k \geq 1$ is straightforward; instead of item 2., we obtain
\begin{multline*}
\EE \prod_{i=1}^k \tr V_{n_i, M/N}\left( \frac{B - (M+N-2)}{2\sqrt{(M-1)(N-1)}}\right) \\
    = (1+o(1)) \sum_{I \subset \{1, \cdots, k \}}
        (- (1 - \sqrt{M/N}))^{\sum_{i   \in I} n_i} \,
        (1 + \sqrt{M/N})^{\sum_{i \notin I} n_i} \\
        \phi_{\beta} (\{n_i\}_{i \in I}; (M^{-1/2} - N^{-1/2})^{-1/2} \,
        \phi_{\beta} (\{n_i\}_{i \notin I}; (M^{-1/2} + N^{-1/2})^{-1/2})~.
\end{multline*}
\end{rmk}

To prove Theorem~\ref{th:cov''}, it remains to extend these considerations to general
matrices $B$. This is done along the lines of Part~\ref{P:gen} (actually, the argument
is slightly simpler, since there can be no `loops'.)

\part{Extensions and further applications}\label{P:coda}

\section{Some extensions}

In this section, we outline the proofs of some results that are more or less straightforward
extensions of what we have already considered.

\vspace{2mm}\noindent
{\bf Matrices with quaternion entries.} In addition to $\beta=1,2$, one can also consider
$\beta=4$. Then Theorems~\ref{th:cov1}, \ref{th:covM}, \ref{th:herm}, \ref{th:cov1'},
\ref{th:covM'}, \ref{th:herm'} remain valid, after the following modifications.

Instead of complex-valued random variables, the entries of the matrices will be random
(real) quaternions $r = r^{(0)} + i r^{(1)} + j r^{(2)} + k r^{(3)}$. The assumptions
(A1),(A2) still make sense, with
\[ |r| = \sqrt{(r^{(0)})^2+(r^{(1)})^2+(r^{(2)})^2+(r^{(3)})^2}~;\]
the analogue of (A3$_1$), (A3$_2$) will be
\begin{description}
\item[(A3$_4$)] $\EE r^{(i)}r^{(j)} = \delta_{ij}/4$, $0 \leq i,j \leq 3$.
\end{description}

If $A$ is an $N \times N$ (real-) quaternionic matrix, and $\overline{A_{uv}} = A_{vu}$
($A$ is ``self-dual Hermitian''), one can consider the eigenvalues of $A$, which are real
numbers $\lambda_1 \leq \cdots \leq \lambda_N$ (see \cite{M}). To define the Airy point
process $\mathfrak{Ai}_4$ and the distribution $TW_4$, let
\[ IK(x, x') = - \int_x^{+\infty} K(x'', x') dx'~,\]
\[ K_4(x, x') =
\frac{1}{2} \left( \begin{array}{ccc}
K(2x, 2x')    & DK(2x, 2x') \\
IK(2x, 2x')   & K(2x, 2x') \end{array} \right)~.\]
Then the density of the correlation measures $\rho_k$ of $\mathfrak{Ai}_4$ off the diagonals
is given by
\[ \frac{d\rho_k|_T(x_1,\cdots,x_k)}{dx_1 \cdots dx_k}
    = \sqrt{\det \Big( K_4(x_i, x_j) \Big)_{1 \leq i,j \leq k}}~,\]
and the Tracy--Widom distribution $TW_4$,-- by its cumulative distribution function
\[ F_4(x) = \frac{1}{2} \left( F_1(x) + \frac{F_2(x)}{F_1(x)} \right)~. \]
The Tracy--Widom theorem (formulated in Section~\ref{s:defs}) holds also for $\beta = 4$;
see \cite{TW2} for more details.

\vspace{1mm}\noindent
The r\^ole of Examples~\ref{ex:gbe},\ref{ex:wish} is played by the Gaussian Symplectic
Ensemble (GSE),
\[ A^{(N)}_{uv} \sim
    \begin{cases}
        N(0, 1/4) + i N(0, 1/4) + j N(0, 1/4) + k N(0, 1/4)~, &u\neq v \\
        N(0, 1/2)~, &u = v~,
    \end{cases}\]
and the quaternionic Wishart ensemble,
\[ X^{(N)}_{uv} \sim N(0, 1/4) + i N(0, 1/4) + j N(0, 1/4) + k N(0, 1/4)~, \]
respectively. Theorems~\ref{th:herm'},\ref{th:cov1'},\ref{th:cov''} are known
to be true in this particular case (of course, with $\beta = 4$).

\vspace{1mm}\noindent
To prove the theorems for matrices with arbitrary entries, we extend
Theorems~\ref{th:herm''} and \ref{th:cov''}. Note that, for any self-dual Hermitian
quaternionic matrix $A$ with eigenvalues
$\lambda_1 \leq \cdots \leq \lambda_N$ and a (real) polynomial $P$, $P(A)$ is well-defined and
\[ \tr P(A) = \sum_{i=1}^N P(\lambda_i)~.\]

Consider first the matrices
\begin{equation}\label{ex:bern.4}
\quad A_{uv} \sim
    \begin{cases}
        \textrm{unif}(S^3),                   &u \neq v~, \\
        0,                                    &u = v~,
    \end{cases}
\end{equation}
which are the quaternion analogue of (\ref{ex:bern}). The relation (\ref{eq:pnnbt}) remains
valid for these matrices. Multiplication of quaternions is non-commutative, hence the
expectation of a product of random quaternions depends on the order of terms in the product;
see e.g.\ Bryc and Pierce \cite{BP} for a detailed analysis.

However, one can easily show the following:
\begin{lemma} For a path $p_{2n} = u_0 u_1 u_2 \cdots u_{2n-1} u_0$, denote
\[ \mathfrak{e}(p_{2n}) = \EE A_{u_0 u_1} A_{u_1 u_2} \cdots A_{u_{2n-1} u_{2n}}~. \]
\begin{enumerate}
\item If $A$ is as in (\ref{ex:bern.4}), $|\mathfrak{e}(p_{2n})| \leq 1$ for any path $p_{2n}$.
\item Also, $\mathfrak{e}(p_{2n})$ is real.
\item Let $p_{2n}$ and $p_{2n'}'$ be two paths that satisfy (a), (b), (c), (d$_1$) and have the
same diagram. If the corresponding weights are non-negative,
$\mathfrak{e}(p_{2n}) = \mathfrak{e}(p_{2n'}')$.
\item The statements 1,2,3 are also true for $k$-paths, for any $k \geq 1$.
\end{enumerate}
\end{lemma}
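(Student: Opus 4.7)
The plan is to exploit three elementary properties of the uniform distribution on $S^3$: the quaternion norm of each entry is $1$ almost surely; the distribution of each entry is invariant under conjugation $r \mapsto qrq^{-1}$ by any unit quaternion $q$; and the entries $\{A_{uv}\}_{u<v}$ are i.i.d. (with $A_{vu}=\overline{A_{uv}}$ enforced).

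First I would prove item 1 by multiplicativity of the quaternion norm: $|ab|=|a||b|$ for quaternions $a,b$, so
\[ \bigl| A_{u_0 u_1} A_{u_1 u_2} \cdots A_{u_{2n-1} u_0} \bigr|
   = \prod_{j=0}^{2n-1} |A_{u_j u_{j+1}}| = 1 \quad \text{a.s.,} \]
and the triangle inequality for $\EE$ of quaternion-valued random variables gives $|\mathfrak{e}(p_{2n})|\leq 1$. Next, for item 2, observe that for any unit quaternion $q$ the entry-wise map $A_{uv}\mapsto qA_{uv}q^{-1}$ preserves the self-dual Hermitian relation (since $\overline{qrq^{-1}}=\bar q\,\bar r\,\overline{q^{-1}}=q\bar r q^{-1}$ for unit $q$) and preserves the joint distribution of the entries (conjugation by $q$ is a rotation of $\RR^4$ stabilising $S^3$). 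Applying this transformation to every factor in $A_{u_0 u_1} A_{u_1 u_2} \cdots A_{u_{2n-1} u_0}$ and telescoping the interior $q^{-1}q=1$ yields $q\bigl(A_{u_0 u_1}\cdots A_{u_{2n-1} u_0}\bigr)q^{-1}$. Taking expectations gives $\mathfrak{e}(p_{2n}) = q\,\mathfrak{e}(p_{2n})\,q^{-1}$ for every unit $q$, hence $\mathfrak{e}(p_{2n})$ lies in the centre of the quaternions, which is $\RR$.

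For item 3 I would argue that, under the additional assumption $\bar w \geq 0$, the weighted diagram $(\bar G,\bar p,\bar w)$ together with the prescribed circuit $\bar p$ determines the ordered sequence of (directed) $A$-factors appearing in $A_{u_0 u_1}\cdots A_{u_{2n-1}u_0}$ up to relabelling of the vertex set. Indeed, every diagram edge $\bar e$ is faithfully refined into a chain of $\bar w(\bar e)+1$ edges of $A$, each of these chain-edges is independent of every other chain-edge appearing in the product, and condition (d$_1$) together with the non-backtracking property of $\bar p$ forces each chain-edge to appear exactly twice in the product in a predetermined pattern (as a matched pair in the sense of Section~\ref{s:bern.1}). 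Two paths $p_{2n},p'_{2n'}$ sharing the same weighted diagram therefore give two products with the same combinatorial template, differing only by a bijective relabelling of the distinct edges. Since the distinct edges are i.i.d., the two expectations are equal. Finally, item 4 follows because each of the three arguments extends verbatim to $k$-paths: the product of the $k$ path-products has norm $1$ a.s.; the conjugation $A\mapsto qAq^{-1}$ conjugates each of the $k$ path-products simultaneously by the same pair $q,q^{-1}$ so that the product of path-products transforms by overall conjugation; and the weighted $k$-diagram determines the combinatorial template of the full product.

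The main obstacle is item 3. The conjugation symmetry used for item 2 gives no information about \emph{which} real value $\mathfrak{e}(p_{2n})$ takes, and the noncommutativity of quaternion multiplication means one cannot simply match edges in an unordered way: the exact order in which the distinct chain-edges appear in the two products must be shown to be the same. Establishing this rigorously requires a careful re-examination of the construction (i)--(v) from Section~\ref{s:bern.1} to verify that the circuit $\bar p$ on the diagram, together with the weights, records enough ordering information to force a label-preserving bijection between the positions of $p_{2n}$ and those of $p'_{2n'}$; the non-negativity of the weights is essential here because a weight $-1$ would correspond to an edge that was introduced during the construction and whose position in the ordering is not determined by the diagram alone.
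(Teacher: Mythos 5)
Your items 1, 2 and 4 are correct and coincide with the paper's argument: multiplicativity of the quaternion norm plus Jensen's inequality for item 1, and the conjugation invariance $A_{uv}\sim \zeta^{-1}A_{uv}\zeta$ telescoped through the product, forcing $\mathfrak{e}(p_{2n})$ into the centre of the quaternions, for item 2.

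Item 3, however, has a genuine gap, and it is precisely at the point you flag as "the main obstacle". You prove that two paths with the same \emph{weighted} diagram give equal expectations, by a relabelling bijection between their (i.i.d.) distinct edges. But the lemma asserts equality for two paths with the same \emph{underlying} diagram whose weights may differ --- this is why the statement allows $n\neq n'$. When $n\neq n'$ the total weights $n-\#\bar E$ and $n'-\#\bar E$ differ, so no edge-by-edge relabelling between the two products can exist, and your argument does not apply to exactly the case the lemma is designed for (and the case needed later, to factor the sum over paths as a sum over diagrams times a common value of $\mathfrak{e}$). The missing ingredient is the Haar property of the uniform distribution on the group of unit quaternions: $rr'\sim\mathrm{unif}(S^3)$ for independent $r,r'\sim\mathrm{unif}(S^3)$. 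Since the weights are non-negative, each diagram edge $\bar e$ corresponds to a genuine chain of $\bar w(\bar e)+1$ distinct edges of $G$ through degree-2 vertices; non-backtracking forces the path to traverse the whole chain end-to-end, twice, and by (d$_1$) the product of entries along one traversal is a product of independent uniform elements of $S^3$, hence itself uniform on $S^3$ (the reverse traversal contributing its quaternion conjugate). Replacing each chain by a single uniform variable $R_{\bar e}$, with independence across diagram edges, shows that the distribution of the whole product --- and hence $\mathfrak{e}$ --- depends only on the diagram and not on the weights. Without this convolution-stability observation, item 3 (and the application that follows it) does not go through.
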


\begin{proof}
The first part follows from the multiplicativity of the absolute value and Jensen's inequality.
To prove the second part, note that, for any $\zeta \in S^3$,
\begin{multline*}
\zeta^{-1} \mathfrak{e}(p_{2n}) \zeta
    = \zeta^{-1} \, \big(\EE A_{u_0 u_1} A_{u_1 u_2} \cdots A_{u_{2n-1} u_{2n}}\big) \, \zeta \\
    = \EE (\zeta^{-1} A_{u_0 u_1} \zeta) \, (\zeta^{-1} A_{u_1 u_2} \zeta) \,
        \cdots (\zeta^{-1} A_{u_{2n-1} u_{2n}} \zeta) = \mathfrak{e}(p_{2n})
\end{multline*}
(since $A_{uv} \sim \zeta^{-1} A_{uv} \zeta$.) Similarly, the third part is true since
$r r' \sim \textrm{unif}(S^3)$ for independent $r,r' \sim \textrm{unif}(S^3)$.
The same arguments are valid for the fourth part.

\end{proof}

Applying the lemma and proceeding as in Part~\ref{P:bern}, we see that, for $n_i \ll N^{1/2}$,
the asymptotics of expectations $\EE \prod \tr P_{n_i}(A/(2\sqrt{N-2}))$ is given by a series
in $n_i^{3/2} / N^{1/2}$, the coefficients of which are sums over diagrams (one may compute
these coefficients recursively, but we shall not need this.)

To extend these results to general matrices, we proceed as in Part~\ref{P:gen}. First, observe
that if $p_{2n}$ is a path without `loops' on which every edge appears exactly twice, then
\[ \EE A_{u_0 u_1} A_{u_1 u_2} \cdots A_{u_{2n-1} u_{2n}} = \mathfrak{e}(p_{2n}) \]
for any (Hermitian self-dual) random matrix $A$, the elements of which satisfy (A3$_4$).
Hence the contribution of paths that satisfy (a), (b), (c), (d$_1$) is asymptotically the
same as in the uniform case (\ref{ex:bern.4}). The contribution of other paths is dominated
by the corresponding term for $\beta = 1$, and hence is negligible.

These considerations show that Theorem~\ref{th:herm''} is valid also for $\beta = 4$. Similarly,
Theorem~\ref{th:cov''} can be extended. Theorems~\ref{th:cov1}, \ref{th:covM}, and \ref{th:herm}
follow by the arguments of Section~\ref{s:prf}, which remain valid without any modification.

\vspace{2mm}\noindent
{\bf Matrices with unequal real and imaginary parts.}
In \cite[Chapter~14]{M}, Mehta considers the following ensemble of random Hermitian matrices:
\begin{equation}\label{ex:m.14}
A^{(N)}_{uv} \sim
    \begin{cases}
        N(0, 1/(1+\alpha^2)) + i N(0, \alpha^2/(1+\alpha^2))~, &u \neq v\\
        N(0, 2/(1+\alpha^2))~, &u = v
    \end{cases}
\end{equation}
(where of course the entries above the diagonal are independent.)
Taking $\alpha = 0$, we recover GOE, $\alpha = 1$ yields GUE, whereas $\alpha = \infty$ yields
what is called the Anti-Symmetric Gaussian Orthogonal Ensemble
(AGOE, cf.\ \cite[Chapter 13]{M}.)

It may be natural to consider the following generalisation: again, $A$ will be a random
Hermitian matrix as in Theorem~\ref{th:herm}, with (A3$_\beta$) replaced with
\begin{description}
\item[(A3$_{1,2}^\alpha$)] $\EE r^2 = \frac{1-\alpha^2}{1 + \alpha^2}$, $\EE r \bar{r} = 1$.
\end{description}

Exactly as in the preceding proofs, one can show that, for any $0 \leq \alpha \leq + \infty$
(that may depend on $N$), the distribution of the largest eigenvalue of $A$ is asymptotically
the same as in the Gaussian case (\ref{ex:m.14}). Again, the proof passes through an analogue of
Theorem~\ref{th:herm''}, which yields a diagram expansion.

\begin{cor}\label{cor:m.14} Let $A$ be a random matrix as in Theorem~\ref{th:herm}, with (A3$_\beta$) replaced
with (A3$_{1,2}^\alpha$), and let $\lambda_N$ be its largest eigenvalue.
\begin{enumerate}
\item If $0 \leq \alpha \ll N^{-1/6}$,
\[ N^{1/6} \lambda_N - 2N^{2/3} \toD TW_1~;\]
\item If $N^{-1/6} \ll \alpha \leq + \infty$,
\[ N^{1/6} \lambda_N - 2N^{2/3} \toD TW_2~.\]
\end{enumerate}
\end{cor}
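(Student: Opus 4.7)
The plan is to follow the strategy used for quaternion matrices above: establish an analogue of Theorem~\ref{th:herm''} adapted to (A3$_{1,2}^\alpha$), and read off the two Tracy--Widom limits from the resulting diagram expansion.

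\emph{Step 1: Interpolating diagram expansion.} I would rerun the arguments of Parts~\ref{P:bern} and~\ref{P:gen} under the new hypothesis, noting that Claim~\ref{cl1} (and its $k$-path extension) does not use (A3$_\beta$): the expectation $\EE\tr U_n(A/(2\sqrt{N-2}))$ is still a sum over paths of matched weights $\mathfrak e(p_{2n}) = \EE\prod A_{u_j u_{j+1}}$. The only modification is in these weights: a matched pair of occurrences of an edge $(u,v)$ contributes $\EE r^2 = s := (1-\alpha^2)/(1+\alpha^2)$ if the two occurrences are traversed in the \emph{same} direction, and $\EE r\bar r = 1$ if they are traversed in \emph{opposite} directions. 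Every matched path thus carries a factor $s^{k}$ with $k$ the number of same-direction pairs, and the combinatorics of Section~\ref{s:bern.2} applies verbatim, with each $h$-type step (Figure~\ref{fig:ca}) weighted by $s$ and each $g$-type step weighted by $1$.

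\emph{Step 2: Interpolating $\phi$ and the threshold.} This yields an analogue of Proposition~\ref{prop:herm''.bern.k} and of Remark~\ref{r:phi.k}, with a function $\phi^{\alpha}(n;N)$ analytic in $s$ such that $\phi^{s=1} = \phi_1$ and $\phi^{s=0} = \phi_2$. Writing
\[ \phi^{\alpha}(n;N) \;=\; \phi_{2}(n;N) \;+\; s\,\bigl(\phi_1(n;N) - \phi_2(n;N)\bigr) \;+\; O(s^2), \]
one is reduced to tracking how $(1-s)(\phi_1-\phi_2) \sim 2\alpha^2(\phi_1-\phi_2)$ contributes to the Laplace transform of Section~\ref{s:prf}. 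A careful power counting (essentially matching the leading $h$-type contribution at diagram level $s=1$ against the leading $\beta=2$ term after the Tracy--Widom rescaling $n = \Theta(N^{1/3})$) produces the critical scale $\alpha^2 \cdot N^{1/3} \sim 1$, i.e.\ $\alpha \sim N^{-1/6}$.

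\emph{Step 3: Conclusion.} For $\alpha \ll N^{-1/6}$ the $O(\alpha^2)$ correction is negligible at the Tracy--Widom scale, so $\phi^{\alpha}$ is indistinguishable from $\phi_1$ in the Laplace-transform sense, and the argument of Section~\ref{s:prf} yields $N^{1/6}\lambda_N - 2N^{2/3} \toD TW_1$. For $N^{-1/6} \ll \alpha \leq +\infty$ the combined effect of the $s$-factors across the contributing diagrams kills all $\beta=1$-specific terms in the limit, so $\phi^{\alpha} \to \phi_2$ and hence $TW_2$; the endpoint $\alpha=+\infty$ is handled by continuity, since the series in $s$ has radius of convergence $1$. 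The main obstacle will be uniform control of the error terms across $\alpha \in [0,+\infty]$, and in particular the need to accommodate the change of sign of $s$ when $\alpha > 1$: this requires a sign-insensitive version of Propositions~\ref{prop:herm''.bern} and~\ref{prop:herm''.bern.k}, obtained by replacing diagram weights by their absolute values in the key bounds and checking that the resulting estimates remain strong enough to conclude.
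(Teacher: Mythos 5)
Your overall architecture matches the paper's: a diagram expansion in which a matched pair of occurrences of an edge contributes $\EE r^2 = s := (1-\alpha^2)/(1+\alpha^2)$ when the two traversals have the same direction and $\EE r\bar r = 1$ otherwise. But the quantitative heart of the argument is wrong, and it is exactly the point that produces the threshold $N^{-1/6}$. You assert that ``each $h$-type step is weighted by $s$'', i.e.\ that a contributing matched path carries $s$ to a power which is $O(1)$ (the number of transitions of type 3). In fact the exponent is $n_1$, the number of \emph{graph} edges traversed twice in the same direction; for a diagram that is not of type $\beta=2$ these edges fill an entire loop of the diagram (in the sense of Section~\ref{s:bern.2}), whose length is a free summation variable of typical size $\Theta(n)=\Theta(N^{1/3})$ at the Tracy--Widom scale. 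Hence the weight of a $\beta=1$-specific diagram is $s^{\Theta(N^{1/3})}\approx\exp(-2\alpha^2\,\Theta(N^{1/3}))$, which is $1+o(1)$ iff $\alpha\ll N^{-1/6}$ and $o(1)$ iff $\alpha\gg N^{-1/6}$. With your weighting, the interpolation $\phi^{\alpha}=\phi_2+s(\phi_1-\phi_2)+O(s^2)$ is linear in $s$ at leading order and would place the crossover at $\alpha\asymp 1$, not at $N^{-1/6}$; the scale $\alpha^2 N^{1/3}\sim 1$ that you then announce does not follow from the power counting you describe --- it can only come from the $\Theta(N^{1/3})$ sitting in the exponent, which your reduction to ``one factor of $s$ per $h$-step'' has suppressed.

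The second gap is the regime $\alpha\gtrsim N^{1/6}$, including the endpoint $\alpha=+\infty$ (where $s=-1$, the AGOE case). There $|s|^{n_1}$ does not decay at all, so neither ``continuity in $s$'' nor a ``sign-insensitive version obtained by replacing diagram weights by their absolute values'' can work: taking absolute values destroys precisely the mechanism that kills the $\beta=1$-specific terms in this range. What is needed (and what the paper uses) is a cancellation: summing over the weight placed on a loop that is passed twice in the same direction, the contributions of even and odd weights nearly cancel when $s$ is close to $-1$. As written, your Step 3 would establish the required smallness only for $N^{-1/6}\ll\alpha\ll N^{1/6}$.
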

That is, the crossover from GOE asymptotics to GUE asymptotics occurs at $\alpha \approx N^{-1/6}$.
This is of course coherent with \cite[(14.1.31)]{M}, which asserts that the crossover should occur
for
\[ \sqrt\frac{\alpha^2}{1+\alpha^2} \approx \text{Average spacing between eigenvalues}\]
(the average spacing at the edge is of order $N^{1/6}$, cf.\ Theorem~\ref{th:herm'}.) Also, the GUE
asymptotics for the largest eigenvalues is valid up to $\alpha = +\infty$; this is coherent with
the analysis in \cite[13.2.2]{M}.

\begin{proof}[Sketch of proof]

If a path $p_{2n} = u_0 u_1 u_2 \cdots u_{2n-1} u_0$ satisfies (a), (b), (c), (d$_1$),
\[ \EE A_{u_0 u_1} A_{u_1 u_2} \cdots A_{u_{2n-1} u_{2n}}
    = \left(\frac{1-\alpha^2}{1 + \alpha^2}\right)^{n_1}~, \]
where $n_1$ is the number of edges passed twice in the same direction. For $n$ of order $N^{1/3}$,
the diagrams of the paths that contribute to the asymptotics are generated by the automaton
of Section~\ref{s:bern.1} that stops after $s=O(1)$ steps. If the diagram is not of type $\beta = 2$,
$n_1$ will be of order $N^{1/3}$, and hence the contribution of $p_{2n}$ will be of order
\[ \left(\frac{1-\alpha^2}{1 + \alpha^2}\right)^{\Theta(N^{1/3})}~.\]
This expression is $1 + o(1)$ for $0 \leq \alpha \ll N^{-1/6}$, and negligible for
\[ N^{-1/6} \ll \alpha \ll N^{1/6} ~. \]
For $\alpha \gg 1$, the contribution of diagrams that are not of type $\beta = 2$ is negligible for
a different reason. Namely, if there is at least one ``loop'' (in the sense of Section~\ref{s:bern.2})
that is passed twice in the same direction, the contribution of paths with even and odd weights
on this loop nearly cancel each other.

The same applies to $k$-paths and $k$-diagrams.
\end{proof}

\vspace{2mm}\noindent
Forrester, Nagao and Honner \cite{FNH} have studied the extreme eigenvalues of the Gaussian ensemble
(\ref{ex:m.14}) in the crossover regime $\alpha^2 N^{1/3} \to t$. In particular, they have computed the
limiting correlation measures for the point processes
\[ \eta^{(N)} = \sum \delta_{y_i}, \quad y_i = N^{1/6} \lambda_{N-i+1}^{(N)} - 2N^{2/3}.\]
Our argument shows that their results extend to general matrices $A^{(N)}$ that satisfy the assumptions
of Corollary~\ref{cor:m.14}.

\vspace{2mm}\noindent
Similar results can be proved for ensembles interpolating between $\beta = 2$ and $\beta = 4$,
and for sample covariance matrices interpolating between $\beta = 1$ and $\beta = 2$ and
between $\beta = 2$ and $\beta = 4$.

\section{Deviation inequalities for extreme eigenvalues}\label{s:dev}

Explicit upper bounds for the probability that the extreme eigenvalues deviate from their
mean have various applications. The reader may refer to the lecture notes by Ledoux \cite{Ledoux1}
for an extensive discussion and references.  The following estimates follow from
Theorems~\ref{th:herm''},\ref{th:cov''}.

\begin{cor}\label{cor:extr} \hfill
\begin{enumerate}
\item For $A$ as in Theorem~\ref{th:herm},
\[ \PP \left\{ \|A\| \geq 2 \sqrt{N}(1 + \eps) \right\} \leq C \exp(- C^{-1} N \eps^{3/2})~, \]
where the constant $C>0$ may depend on $C_0$ from (A2).
\item For $B$ as in Theorem~\ref{th:covM},
\begin{enumerate}
\item $ \PP \left\{ \lambda_M(B) \geq (\sqrt{M} + \sqrt{N})^2 + \eps N \right\}
    \leq C \exp(- C^{-1} M \eps^{3/2})$,
\item $\PP \left\{ \lambda_1(B) \,\, \leq (\sqrt{M} - \sqrt{N})^2 - \eps N \right\} \leq
    \frac{C}{1 - \sqrt{M/N}} \exp(- C^{-1} M \eps^{3/2})$.
\end{enumerate}
\end{enumerate}
\end{cor}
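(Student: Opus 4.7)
The core device is Markov's inequality applied to the trace of a high-degree polynomial in the matrix, following Soshnikov's original approach to deviation bounds. The required moment estimates are consequences of Theorems~\ref{th:herm''} and \ref{th:cov''}.

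For (1), I would start from the power moment bound
\[ \EE \tr (A/(2\sqrt{N}))^{2m} \leq C_1 N m^{-3/2} \exp(C_2 m^3/N^2), \]
which is item 1 in the proof of Theorem~\ref{th:herm'} (derived from Theorem~\ref{th:herm''}(2) via the Snyder identity (\ref{eq:sn1})). Since $\|A\|^{2m} = \max_i \lambda_i^{2m} \leq \tr A^{2m}$, Markov gives
\[ \PP\{\|A\| \geq 2\sqrt{N}(1+\eps)\} \leq \frac{C_1 N}{m^{3/2} (1+\eps)^{2m}} \exp(C_2 m^3/N^2). \]
The exponent $C_2 m^3/N^2 - 2m\log(1+\eps)$ is minimised at $m^{*} \asymp N\sqrt{\log(1+\eps)}$, yielding an exponent $\leq -cN\,(\log(1+\eps))^{3/2}$; for $\eps$ in any bounded range this dominates $N\eps^{3/2}$, and the polynomial prefactor is absorbed into the constant $C$. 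The regime $\eps \gtrsim 1$ is uninteresting since the bound is then already sub-exponentially small in $N$, and can be handled by choosing $m$ of order $N^{2/3}$ separately.

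For the covariance case, I would first derive the companion estimate
\[ \EE \tr \Big[ \frac{B - (M+N-2)}{2\sqrt{(M-1)(N-1)}} \Big]^{2m} \leq C_1 M m^{-3/2} \exp(C_2 m^3/M^2) \]
by expressing $x^{2m}$ as a combination of the polynomials $V_{n, M/N}$: substitute the Neumann-type expansion $U_n = \sum_{k=0}^{n} (-\sqrt{M/N})^k V_{n-k, M/N}$ into (\ref{eq:sn1}) and bound term-by-term using Theorem~\ref{th:cov''}(1). The combinatorics then parallels the Hermitian case, with $N$ in the exponential replaced by $M$. For (2a), this bound controls $\max_i |\lambda_i - (M+N-2)|$ simultaneously on both edges. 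With
\[ \delta = \frac{2\sqrt{MN} + \eps N - 2}{2\sqrt{(M-1)(N-1)}} - 1 \geq \eps/2, \]
the optimisation over $m$ gives $\exp(-c M \delta^{3/2}) \leq \exp(-c M \eps^{3/2})$, which is (2a).

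The prefactor $1/(1-\sqrt{M/N})$ in (2b) requires distinguishing the lower edge from the upper one, and I expect this to be the main technical point. Even-power moments alone cannot separate the two, since $|\lambda - (M+N-2)|$ is the same quantity at both edges. Following the sketch of proof of Theorems~\ref{th:cov1'}, \ref{th:covM'}, I would combine odd and even power traces as in (\ref{eq:cov.mom.-}): the difference enhances the lower-edge contribution while partially cancelling the upper one. The asymptotic analysis of this combination picks up a factor of order $1/(1-\sqrt{M/N})$, visible already from the asymmetric boundary values $V_{n,s}(\pm 1) \sim n(1\pm\sqrt{s})$ of the polynomials $V_{n,s}$, cf.\ Remark~\ref{r:phi.k.cov}. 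Once this factor is extracted, the optimisation over $m$ proceeds exactly as in (2a) and yields the claimed bound.
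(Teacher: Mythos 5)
Your part~1 is exactly the paper's argument: the power--moment bound from step~1 of the proof of Theorem~\ref{th:herm'}, followed by Chebyshev's inequality with $m\asymp N\sqrt{\eps}$ (the paper simply sets $m=\sqrt{\eps/C_2'}\,N$ rather than optimising, and disposes of the range $\eps\le CN^{-2/3}$ as trivial --- an observation you also need in order to absorb the prefactor $C_1N/m^{3/2}$). For part~2 the paper gives no details beyond ``similarly deduced from Theorem~\ref{th:cov''}'', so your sketch is filling a real gap; unfortunately it contains a genuine error and leaves 2(b) open.

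The companion estimate $\EE\tr\hat B^{2m}\le C_1Mm^{-3/2}\exp(C_2m^3/M^2)$ is false uniformly in $M\le N$, and Theorem~\ref{th:covM} (hence Corollary 2(a)) imposes no upper bound on $M/N$. The Marchenko--Pastur density behaves like $(1-\sqrt{M/N})^{-2}\sqrt{1+x}$ at the lower edge $x=-1$, so already to leading order $\EE\tr\hat B^{2m}\asymp Mm^{-3/2}\bigl[(1+\sqrt{M/N})^{-2}+(1-\sqrt{M/N})^{-2}\bigr]$, degenerating to $\asymp Mm^{-1/2}$ at $M=N$. Your own derivation betrays this: summing the geometric series $\sum_k(\sqrt{M/N})^k$ term by term produces a factor $(1-\sqrt{M/N})^{-1}$ which you silently drop. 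The inconsistency resurfaces downstream: if the clean $m^{-3/2}$ bound were true, then --- by your own remark that even moments control both edges --- 2(b) would follow \emph{without} the prefactor, contradicting your (correct) observation that a prefactor is unavoidable. For 2(a) the damage is repairable: bounding $\sum_k(\sqrt{M/N})^k\le 2n+1$ instead yields $\EE\tr\hat B^{2m}\le CMm^{-1/2}\exp(Cm^3/M^2)$, and any polynomial prefactor is absorbed in the Markov step once the trivial range of $\eps$ is split off.

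For 2(b) two further steps are missing. First, $x^{2m}-x^{2m-1}=x^{2m-1}(x-1)$ is negative on $(0,1)$, so the combination (\ref{eq:cov.mom.-}) is not a nonnegative spectral functional and Markov does not apply to it directly; one needs the elementary bound $x^{2m-1}(x-1)\ge -1/(2em)$ on $[0,1]$ and must check that the resulting additive term $M/(2em(1+\delta)^{2m-1})$ is harmless. Second, the lower-edge quantity is governed by the effective parameter $N_-=M(1-\sqrt{M/N})^{-2}$ (this is the $\sqrt{N_-}=(M^{-1/2}-N^{-1/2})^{-1}$ entering $\phi_\beta(\,\cdot\,;N_-)$ and the scaling in Theorem~\ref{th:cov1}), and the direct moment computation produces a prefactor of order $(1-\sqrt{M/N})^{-2}$, not the first power you aim for. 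The stated bound emerges only after optimising with $m\asymp N_-\sqrt{\delta}$, which gives exponent $-cN_-\delta^{3/2}\le -cM\eps^{3/2}$ with prefactor $(N_-\delta^{3/2})^{-1/2}\le1$ in the nontrivial regime, and after noting that in the complementary regime $N_-\delta^{3/2}<1$ one has $M\eps^{3/2}\lesssim 1-\sqrt{M/N}$, so the right-hand side of 2(b) exceeds $1$ and the claim is vacuous --- this is precisely where the factor $(1-\sqrt{M/N})^{-1}$ is needed. As written, your argument for 2(b) does not close.
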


In slightly less general form, the estimate 1.\ follows from the recent work of Aubrun \cite{Au}
and Ledoux \cite{Ledoux.5,Ledoux1,Ledoux2}. Estimates similar to 1.\ and 2.(a) can be probably
also derived from bounds on traces of high moments, similar to those considered by Soshnikov
and P\'ech\'e \cite{S1,S2,P}. The estimate 2.(b) seems to be new.

\begin{proof} We shall only prove the first estimate (deducing it from Theorem~\ref{th:herm''}.)
The estimates 2.(a), 2.(b) can be similarly deduced from Theorem~\ref{th:cov''}.

For $\eps \leq C N^{-2/3}$, the atatements is trivial. For larger $\eps$, we have by
the estimate 1.\ in the proof of Theorem~\ref{th:herm'},
\[ \EE \tr (A/(2\sqrt{N}))^{2m} \leq \frac{C_1' N}{m^{3/2}} \exp(C_2' m^3/N^2)~.\]
Now take $m = \sqrt{\frac{\eps}{C_2'}} N$ and apply Chebyshev's inequality.
\end{proof}

We conclude with a short discussion of the fluctuations of $\lambda_1(B)$ for $M$ approaching $N$,
and (two forms of) an open question.

The inequality 2.(b) in Corollary~\ref{cor:extr} shows that the order of the fluctuations
of $\lambda_1(B)$ is at most $O(N^{1/3+o(1)})$. On the other hand, for the Gaussian case
$B_\textrm{inv}$, the fluctuations are of order
\begin{equation}\label{eq:conj}
O\left((N-M+1)^{4/3}/N\right)~,
\end{equation}
which is strictly smaller when $M = N - o(N)$. It is therefore natural to ask whether (\ref{eq:conj})
holds under the general assumptions of Theorem~\ref{th:covM}. Recently, Rudelson and Vershynin \cite{RV}
have proved this for $N-M=O(1)$; to the best of our knowledge, the intermediate case
$1 \ll N-M \ll N$ is still open.

One may also ask whether the assumption $\limsup M/N < 1$ in Theorems~\ref{th:cov1},\ref{th:cov1'}
can be relaxed to $N-M \to \infty$ (as is the case for $B_\textrm{inv}$, cf.\ Borodin and
Forrester \cite{BF}). A positive answer to this question would imply a positive answer to the
previous one.

\vspace{2mm}\noindent
{\em Added in proof:} The regime $N-M=O(1)$ (``hard edge'') has been recently further studied by
Tao and Vu \cite{TV}, who have proved an universality result for $\lambda_1(B)$.

\vspace{2mm}\noindent
{\bf Acknowledgment.} We are grateful to our supervisors, Michael Krivelevich and
Vitali Milman, for their patient guidance, to Alexander Soshnikov for sharing his
confidence that the combinatorial questions that we consider in this paper should
be soluble, and to Ofer Zeitouni for his interest in this work. \\
Taro Nagao has kindly referred us to the work \cite{FNH}, and Nina Sodin has helped
us with the illustrations. Charles Bordenave, Michel Ledoux, and Ron Peled have taken 
the time to comment on a preliminary version of this text. We thank them very much. \\
Finally, we express our gratitude to the participants of the Random Matrix Seminar
in the ILT (Kharkov), who have patiently listened to a detailed exposition of this work,
and whose critical comments have helped correct numerous lapses.

\end{document}